\renewcommand\footnotetextcopyrightpermission[1]{} 
\author{Jiayao Wang}
\affiliation{%
 \institution{National University of Defense Technology}
 \city{Changsha}
 \country{China}}
\author{Qilong Shi}
\affiliation{%
 \institution{Tsinghua University}
 \city{Beijing}
 \country{China}}
 \author{Xiyan Liang}
\affiliation{%
 \institution{Nankai University}
 \city{Tianjin}
 \country{China}}
 \author{Han Wang}
\affiliation{%
 \institution{Peking University}
 \city{Beijing}
 \country{China}}
\author{Wenjun Li}
\affiliation{%
 \institution{Peng Cheng Laboratory}
 \city{Shenzhen}
 \country{China}}
\author{Ziling Wei}
\affiliation{%
 \institution{National University of Defense Technology}
 \city{Changsha}
 \country{China}}
 \author{Weizhe Zhang}
\affiliation{%
 \institution{Harbin Institute of Technology}
 \city{Harbin}
 \country{China}}
\author{Shuhui Chen}
\affiliation{%
 \institution{National University of Defense Technology}
 \city{Changsha}
 \country{China}}
\begin{document}

\title{PSSketch: Finding Persistent and Sparse Flow with High Accuracy and Efficiency}

\begin{abstract}
   Finding persistent sparse (PS) flow is critical to early warning of many threats. Previous works have predominantly focused on either heavy or persistent flows, with limited attention given to PS flows. Although some recent studies pay attention to PS flows, they struggle to establish an objective criterion due to insufficient data-driven observations, resulting in reduced accuracy. In this paper, we define a new criterion ``anomaly boundary'' to distinguish PS flows from regular flows. Specifically, a flow whose persistence exceeds a threshold will be protected, while a protected flow with a density lower than a threshold is reported as a PS flow. We then introduce PSSketch, a high-precision layered sketch to find PS flows. PSSketch employs variable-length bitwise counters, where the first layer tracks the frequency and persistence of all flows, and the second layer protects potential PS flows and records overflow counts from the first layer. Some optimizations have also been implemented to reduce memory consumption further and improve accuracy. The experiments show that PSSketch reduces memory consumption by an order of magnitude compared to the strawman solution combined with existing work. Compared with SOTA solutions for finding PS flows, it outperforms up to 2.94x in F1 score and reduces ARE by 1-2 orders of magnitude. Meanwhile, PSSketch achieves a higher throughput than these solutions.
\end{abstract}

\keywords{Data streams, Data mining, Approximate algorithm, Sketch}

\maketitle

\footnotetext{* The source code of this paper can be downloaded from the GitHub~\cite{github}.}

\section{Introduction}
    
    The increasing data flow poses a growing challenge in ensuring network security through traffic analysis. Approximate flow processing algorithms have gained popularity due to the excessive resource consumption associated with the exact analysis of all flows. Sketch is widely used as a typical data compression tool. In the existing literature, the task is often divided into spatial and temporal perspectives, where the number of occurrences and time span will be recorded. Heavy flow detection is a common task. By approximating flows' occurrence times, it finds flows that appear significantly more than others \cite{CMS,AS,OMS,SFS,Cold,DCF,PS,SBF,DHS}. The time span is typically represented by dividing the time into equal parts, called time windows, and counting the number of windows in which a flow has appeared, known as flow persistence. Persistent flow detection has also received increasing attention in recent years\cite{KF,BF,OOS,PIS}.

    However, many attacks are not heavy flows; they tend to hide themselves and persist for a long time. For example, backdoor programs or reverse proxies may send a few packets continuously to the target to ensure stable control or retrieve information. Beyond attacks, detecting PS flows is also valuable for server administrators. Some users continuously access over a long period and consume many server resources but contribute little. For instance, in online games, account sellers may create a large number of low-activity accounts to get awards every day.

    Current frequency-focused algorithms focus primarily on heavy flows, making them incapable of capturing infrequent flows. Therefore, they cannot be applied in the above scenarios. However, existing algorithms designed to track persistent flows do not account for frequency statistics. An intuitive strawman solution is to combine the two kinds of approach. However, as shown in our experiments, without optimization for the PS task, such a solution exhibits low memory efficiency, severely limiting its practicality. PISketch\cite{PIS} is currently the only model optimized for the PS task. However, their criterion used to filter PS flows is not based on data observations. Thus, the filtered flows are not always persistent and sparse. Moreover, they utilize a simple traditional sketch to store flow information, leading to low accuracy.

    In this paper, we introduce a novel model, PSSketch, with the aim of finding PS flows with high accuracy and efficiency. To achieve that, our design revolves around the following points:

    \begin{itemize}

    \item Objective criterion. We analyze three commonly used datasets from real networks. The persistence and density distribution of flows exhibit similar cliff-like characteristics. Thus, we can draw an ``anomaly boundary'', using persistence and density to divide PS flows.

    \item Protecting PS Flows. Traditional sketches store all flows in a single structure, where new flows often replace target flows (i.e., PS flows in our task) due to memory limitation. We propose a layered data structure, PSSketch, in which the first layer stores persistent flows while the second layer stores PS candidates.

    \item Bit-Level Counters and Overflow Storage. Existing two-layer sketch structures often suffer from data redundancy, where information is stored twice. To address this issue, the first layer contains only small bit-wise counters while the second layer counts its overflow times, significantly improving memory utilization.

    \item Optimization. We raise a probability replacement to protect PS flows in their early stages and a pruning mechanism to eliminate non-PS flow to better use memory. Single Instruction Multiple Data (SIMD) is also used to reduce memory access, thereby improving throughput noticeably.

    \end{itemize}

    Experiments show that, compared to the strawman solution that combines traditional sketch structures, PSSketch achieves a reduction in memory requirements by 1 to 2 orders of magnitude. Compared to the SOTA PS flow finding model, PISketch, we increase the F1 Score from 0.4-0.6 to over 0.99. Additionally, the average relative error in information storage is reduced by two orders of magnitude. Furthermore, PSSketch significantly outperforms both the strawman solution and PISketch in terms of throughput.

    In Section 2, we will introduce the tasks of traffic analysis and related work. The observation and criterion we used are detailed in Section 3. Then, we introduce the proposed model PSSketch in Section 4 and mathematical analysis in Section 5. After that, PSSketch is comprehensively evaluated in Section 6. Finally, Section 7 concludes our work.

\begin{figure*}[htbp]
    \centering
    \begin{subfigure}[b]{0.26\textwidth}
        \includegraphics[width=\textwidth]{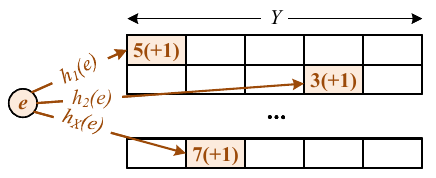}
        \caption{CMSketch}
        \label{CMS}
    \end{subfigure}
    \begin{subfigure}[b]{0.33\textwidth}
        \includegraphics[width=\textwidth]{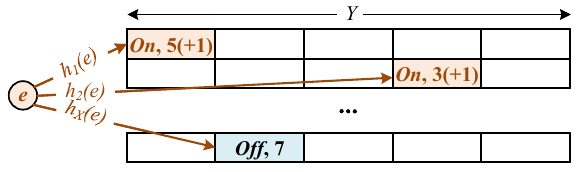}
        \caption{On-off Sketch}
        \label{OOS}
    \end{subfigure}
    \begin{subfigure}[b]{0.29\textwidth}
        \includegraphics[width=\textwidth]{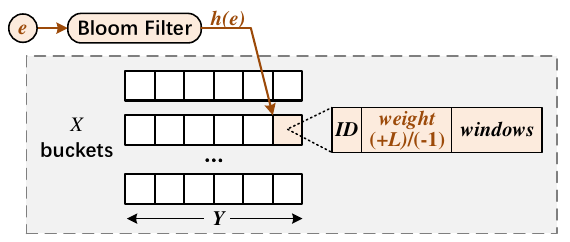}
        \caption{PISketch}
        \label{PIS}
    \end{subfigure}
    \vspace{-0.15cm}
    \caption{Structure of related works.}
    \vspace{-0.3cm}
    \label{SOTA}
\end{figure*}

\section{Background and Related Work}

    \subsection{Background}   

    Before introducing our task, we first give some key definitions; detailed versions can be found in the Appendix \ref{app_extra}.

    \textbf{Flow:} Packets containing the same 5-tuple (SrcIP, DstIP, SrcPt, DstPt, Proto) are treated as coming from the same flow $e$.

    \textbf{Frequency:} The total number of packets a flow contains is denoted as its frequency $f_e$.

    \textbf{Time Window:} We divide a longer time interval into many small intervals of equal length $t$, each of which is called a time window $T_i$.

    \textbf{Persistence:} The number of time windows covered by packets from a certain flow $e$ is denoted as its persistence $p_i$. It will only be incremented by one or kept constant in each time window.

    \textbf{Density:} On longer time spans, the density can be calculated directly from the frequency $f_e$ and persistence $p_e$ as $d_e = \frac{f_e}{p_e}$.

     Unlike packet classification\cite{PC1,PC2,PC3,PC4,PC5} and Deep Packet Inspection\cite{DPI1,DPI2,DPI3,DPI4,DPI5}, finding specific data flows requires storing statistical information, such as frequency and persistence. However, at the end of the 20th century, the rapidly increasing amount of data made whole storage impossible. To address this issue, probabilistic data structures such as Filters\cite{BF,CF,XF,CBF} and Sketches\cite{CMS,AS,OMS,SFS,Cold,DCF,PS,SBF,DHS,BM,CC,Chain,HG,HK,MV,UAS,TS} are introduced. They use Hash functions to store the information in a small space, saving memory consumption by sacrificing accuracy. The trade-off between accuracy and memory consumption is the key to this issue. In Section 2.2, we will introduce how related works count frequency and persistence and lead to a new task proposed in recent years: finding flows with both high persistence and low frequency. It means that we need to take statistics into consideration while filtering out low-frequency flows, in contrast to finding heavy flows.
    
    Some other approaches \cite{Shift,AF,ES,FS,UM,EE,CB,RC,RS,IBF} have also been raised for finding specific flows; however, due to their limited adoption, they will not be discussed in this paper.

    \subsection{Related Works}
    
    \textbf{Frequency Estimation.} Network measurement is currently divided into two main categories: frequency estimation and persistence estimation. Frequency estimation refers to counting the number of occurrences for each flow. Count-Min Sketch (CMSketch) \cite{CMS} is one of the most widely applied models. To illustrate the core concept of frequency estimation, we use CMSketch as an example.

    As shown in Figure \ref{SOTA}(a), CMSketch consists of $X$ buckets, each with $Y$ counters. When a flow arrives, its ID is processed by $X$ different hash functions, generating $X$ indices $I_1,I_2,...,I_x$. The corresponding counters at each index are then incremented. To query the frequency of a flow, the same process is performed, and it returns the minimum value among the $X$ counters. 
    The key challenge of all sketch-based solutions is balancing accuracy and performance. Beyond that, due to memory constraints, most Sketch-based frequency estimation solutions only focus on identifying ``heavy flows''. They target the few flows that occur with top frequency. Thus, they make the newly arriving flows replace those with low frequencies. So far, none of these models focuses on counting infrequent flows, which makes it impossible for them to find PS flows.

    \textbf{Persistence Estimation.} Kary Filter\cite{KF} is the first to incorporate the temporal dimension by comparing the frequency of flows across different time intervals to detect changes. Burst Sketch\cite{BS} further extends this research within the Filter-Sketch framework. On-off Sketch (OOSketch) \cite{OOS} explicitly introduced the concept of time windows, using persistence to determine whether a flow consistently appears over an extended period. 
    It introduces an on-off switch based on CMSketch so that the same counter can only be modified once in a time window as shown in Figure \ref{SOTA}(b). These solutions do not account for the number of occurrences within each window, leaving them unsuitable for finding PS flows.

    \textbf{Finding PS flows.} In our application scenario, PS flows are defined by their low frequency and high persistence. A straightforward approach is to combine existing models. We propose a \textit{Strawman solution}, which uses CMSketch to track frequency and OOSketch to record persistence. However, it suffers from low memory efficiency and throughput. PISketch\cite{PIS} is currently the only model specifically optimized for PS flows, as it integrates frequency and persistence. As illustrated in Figure \ref{SOTA}(c), PISketch employs a filter to check whether a flow has appeared in the current time window and records a weight ($W$) in the sketch. If a flow appears for the first time in a time window, its weight is increased by a predefined value $L$ ($L>1$). Each subsequent occurrence of the flow within the same time window decreases the weight by one, and the flow is removed once $W$ reaches zero. Although the weight indicates PS flow characteristics to some extent, it does not serve as an accurate criterion, as we will introduce in Section 3. Moreover, the reliance on a relatively simple sketch structure in PISketch frequently results in interference or even the replacement of potential PS flows by other flows.

    \section{Observation and criterion}

    An intuitive fact is that a flow must be persistent to be considered an anomaly. A few occasional packets are not statistically sufficient to identify any behavior. Therefore, we first require the persistence of a flow to reach a certain threshold. Subsequently, we filter the lower-density flows to report the PS flows. To determine the thresholds for persistence and density, we study three common real-net datasets: CAIDA\cite{CAIDA}, MAWI\cite{MAWI}, and Campus\cite{PS}. They all exhibit the same distribution characteristics. Figure \ref{DIST} takes CAIDA as an example; most flows have persistence below 10, while flows with persistence above 50 account for only 3.141\%. We set the $p_0$ threshold within the range of 20 to 60. A larger persistence threshold leads to a more strict criterion of PS flow. Regarding density, most flows exhibit values below 2, and the number of flows decreases as density increases. Conversely, as density approaches 1, the number of flows drops sharply, indicating that regular flows with high persistence rarely have densities near 1. Therefore, we set the \(d_0\) threshold between 1.1 and 1.5. A smaller density threshold indicates more rigorous searching. Distribution characteristics for the other datasets are provided in Appendix \ref{app_extra}.
    
\begin{figure}[h!]
    \centering
    \begin{subfigure}[b]{0.22\textwidth}
        \includegraphics[width=\textwidth]{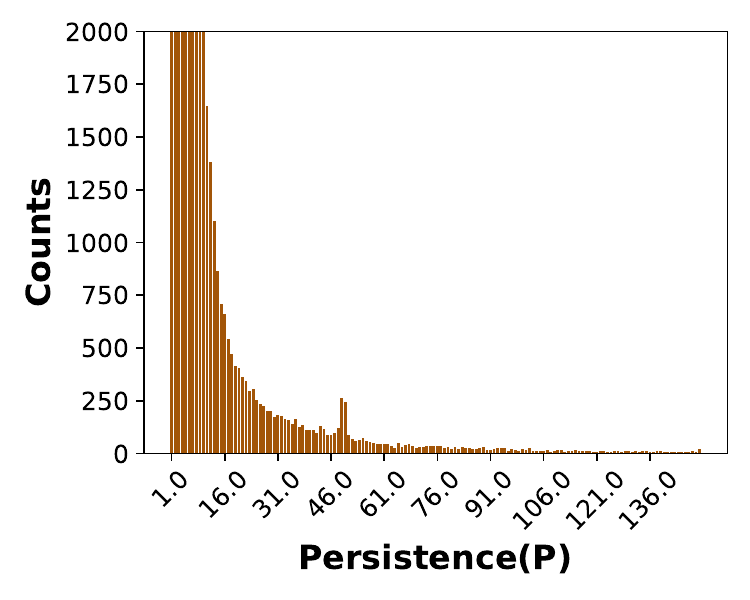}
        \caption{Persistence}
        \label{p_caida}
    \end{subfigure}
    \begin{subfigure}[b]{0.22\textwidth}
        \includegraphics[width=\textwidth]{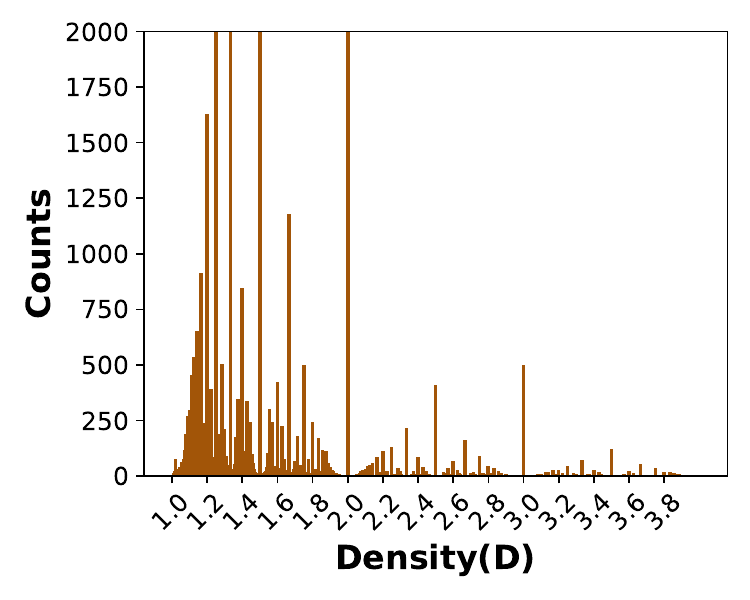}
        \caption{Density}
        \label{p_campus}
    \end{subfigure}
    \vspace{-0.15cm}
    \caption{The distribution characteristics of CAIDA.}
    \label{DIST}
\end{figure}
    
    

    Figure \ref{PS_define} demonstrates the advantage of using density over PISketch using weights $W$. We report PS flows that are much more concentrated near the $l:f=p$ line. This means that we can preferentially find those sparse flows, while the flows found by PSSketch are distributed over the entire possible range without discrimination. Advantages under other datasets can be found in the Appendix \ref{app_extra}.

    We define the ``anomalous boundary'' as follows: a flow is reported as a PS flow if its persistence exceeds the threshold \(p_0\) (\(p_0 \in [20, 60]\)) and then its density falls below the threshold \(d_0\) (\(d_0 \in [1.1, 1.5]\)). A more lenient boundary reduces the false negatives in threat warning but increases the workload. In practical deployment, the thresholds should be adjusted according to performance and security requirements. Based on that, we will introduce our data structures and algorithms in Section 4.

    \begin{figure}[h!]
    \centering
    \begin{subfigure}[b]{0.22\textwidth}
        \includegraphics[width=\textwidth]{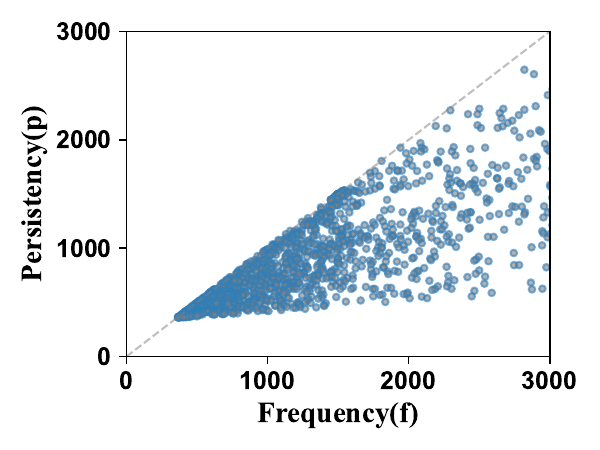}
        \caption{PISketch}
        \label{PISketch}
    \end{subfigure}
   \begin{subfigure}[b]{0.22\textwidth}
        \includegraphics[width=\textwidth]{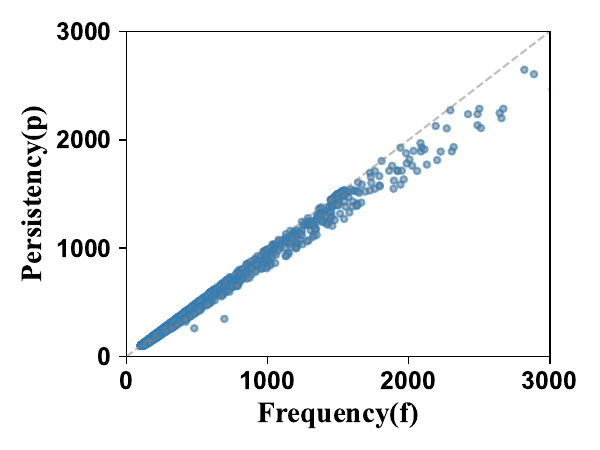}
        \caption{PSSketch}
        \label{PSSketch}
    \end{subfigure}
    \vspace{-0.15cm}
    \caption{Top-2K PS flows reported under the Campus dataset.}
    \label{PS_define}
\end{figure}

\setlength{\fboxsep}{1pt} 
    
\section{Proposed methods}

    \subsection{Data Structure}
    
    As illustrated in Figure \ref{str}, PSSketch is composed of two layers: the Competition Layer (CL) and the Protection Layer (PL). The Competition Layer consists of $X$ buckets. Each bucket contains $Y$ entries. One entry is responsible for storing information about one flow, including the flow's fingerprint ($FP$), frequency ($f$), persistence ($p$), and flags for storing additional information. Thus, we can denote the $j^{th}$ entry in $i^{th}$ bucket as $Bk[i].Et[j]=\{FP,f,p,flags\}$. The length of $FP$, counter $f$, and counter $p$ can vary depending on the load, while the $flags$ is fixed at 2 bits, including a window identifier $W$ and an overflow identifier $OF$. The flag $W$ is initialized to 0 at the beginning of each time window and is set to 1 when the flow in the entry comes in that window. The flag $OF$ indicates whether the flow has reported an overflow. We denote the two flags as $\fbox{01}$. All flows enter the CL first. With the need of storing a large amount of information, the counters are compressed to under 8 bits. Once the counter of a flow in CL overflows, it is reported to PL. The Protection Layer consists of an entry vector of length $R$. The content of an entry includes a flow's ID and the overflow times of its CL counters, that is, $Prt[i]=\{ID_e,f_{of}^e,p_{of}^e\}$. Since the persistent flows represent only a small portion of the total flows, the Protection Layer can store $ID$, which is longer than $FP$, to reduce hash collisions. 
        
    Recording overflow times greatly reduces the length of counters in two layers, avoiding information redundancy. In addition, the information transmission is unidirectional from CL to PL, saving the delay introduced by the information exchange. However, the above two problems are the important bottlenecks of many traditional double-layer sketches.

   \begin{figure}[H]
        \centering
            \includegraphics[width=\linewidth]{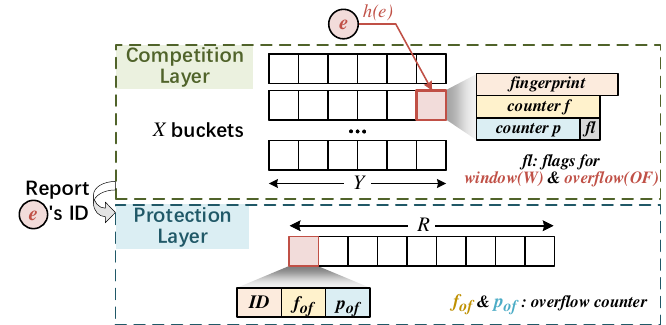}   
        \caption{Structure of PSSketch.}
        \label{str}
    \end{figure}

\subsection{Operations}

    \subsubsection{Insert} First, we apply a hash function to a flow's ID and get its fingerprint $FP_e = h(e)$. The $m^{th}$ bucket is selected where $m = FP_e \mod X$. In that bucket, there are three possible cases:

    \textit{Case 1:} Search $BK[m]$ and find an entry that $BK[m].Et[n].FP=FP_e$. If it is found, it transitions to $Update$.

    \textit{Case 2:} If no such entry, we search for the first empty entry where $BK[m].Et[n].FP = 0$ and fill it with ${FP=FP_e, f=1, p=1, flag= \fbox{00}}$. Therefore, $e$ is added to $BK[m].Et[n]$ successfully.

    \textit{Case 3:} If none of the entries in $BK[m]$ satisfy $BK[m].Et[n].FP = FP_e$ and none are empty, $BK[m]$ is full and transitions to $Contend$.

    The pseudocode of $Insert$ can be found in Appendix \ref{app_extra}.
 
    \subsubsection{Update} When a flow $e$ has already been recorded in $BK[m].Et[n]$, we inspect the flag $W$. If $BK[m].Et[n].flag[W] == 0$, flow $e$ has not appeared in the current time window. Thus, $BK[m].Et[n].p$ is incremented by 1. Moreover,  $BK[m].Et[n].f$ is incremented by 1 and $BK[m].Et[n].flag[W]$ is updated to 1. Note that we will reset all $flag[W]$ to 0 at the beginning of each time window.

    Since the counters used in the Competition Layer are small, $f$ and $p$ may soon overflow after incremented. The two cases are:

    \textit{Case 1:} $f$ overflows first. In our design, the width of counter $f$ is 8 bits, while counter $p$ is 6 bits. If $f$ overflows first, it indicates that $f_e$ has reached $2^8 = 256$, while $p_e$ is at most $2^6-1 = 63$. In this case, the density $D = f/p > 4$. Flow $e$ will not be reported in this case because even if it meets the condition to become a persistent flow (i.e., $p_e > p_0$), its high density disqualifies it from being protected. Instead, we eliminate the flow.

    \textit{Case 2:} $p$ overflows first. At this point, flow $e$ meets the condition for persistence, and its density $D = f/p < 4$. The flow will be reported to the PL for protection. The $Flag[OF]$ is now set to 1, and $BK[m].Et[n].p$ is reset to 0.

    \textit{Case 3:} If the $Flag[OF]$ has already been 1, this flow isprotected. Whether $f$ or $p$ overflows, the overflow is directly reported to the PL, and both $BK[m].Et[n].f$ and $BK[m].Et[n].p$ are reset.

    Based on cases above, the PL performs two types of updates:
    
    \textit{Item Creation (Case 2):} When flow $e$ enters the PL for the first time, it searches for the first empty entry, denoted as $Prt[k]$. It writes $\{ID_e, 0, 1\}$ into it, representing the flow's identifier, counter $f$ overflow times, and counter $p$ overflow times. If there is no empty entry, We replace the flow with the largest density. Since persistent flows account for a small fraction of all flows, we can avoid this by appropriately increasing the length of PL according to the load.

    \textit{Item Update (Case 3):} If flow $e$ has already been protected, no matter $f$ or $p$ overflows, the Protection Layer finds the entry corresponding to the flow (also denoted as $Prt[k]$ here) and increments either $f$ or $p$ accordingly. Then, the $report$ returns "SUCCESS."

    The pseudocode of $Update$ can also be found in Appendix \ref{app_extra}.

\subsubsection{Contend} Contend occurs when the bucket $BK[m]$ is filled in CL. The newly inserted flow $e$ will attempt to replace an existing flow in the bucket. Since, both the frequency $f$ and persistence $p$ of the new flow $e$ are equal to 1, its density cannot be considered. Thus, we identify the entry with the smallest persistence value, denoted as $Et[h]$. To maximize the protection of already stored flows, we overwrite $BK[m].Et[h]$ with a probability of $1/BK[m].Et[h].p$. Note that a protected flow (i.e., $Flag[OF]$ is set to 1) cannot be replaced. Such an approach implies that a flow with a large persistence value will be hard to replace. In our preliminary experiments, we find that all flows evicted have a persistence below 4, with most at 1 or 2. It strengthens the protection of persistent flows in their early stages. \textit{Contend} will bring ``Ejection Errors'', the main source of error for sketch-based solutions, which we analyze in Section 5.

    \subsubsection{Query} A query can be executed at any time. The competition Layer will be traversed to find each position where $FP > 0$. If $Flag[W]==1$ in that entry, the flow is persistent. Then, we locate its $ID$ in the Protection Layer to retrieve the overflow counts. Denote $V_{PL}$ as the counter value in the Protection Layer and $V_{CL}$ as the counter value in the Competition Layer. The final value $V$ can be calculated as $V = V_{prt} \times 2^{L_{c}} + V_{c}$, where $L_{c}$ denotes the bit-length of the counter in the Competition Layer.

    In this paper, we utilize an 8-bit counter for frequency ($f$) and a 6-bit counter for persistence ($p$) in the Competition Layer. Suppose their values are $f = 5$ and $p = 18$, while the corresponding Protection Layer values are $f_{of} = 1$ for frequency and $p_{of} = 3$ for persistence. The frequency of the flow, $f_e$, can be computed as: $f_e = 1 \times 2^8 + 5 = 261$. Similarly, the persistence, $p_e$, is calculated as: $p_e = 3 \times 2^6 + 18 = 210$. Thus, the density of the flow $d_e$ is given by: $d_e = \frac{f_e}{p_e} = 1.243$. Any flow with a density below the threshold $d_0$ is reported as a PS flow, while other flows are considered persistent.

    \subsection{Optimization}

    \subsubsection{Prune} In our observations for the majority cases, if a flow is reported as a PS flow, it will not have a sudden increase in frequency throughout the entire timeline. Also, in our application scenario, a latent attack should not generate a high volume of packets at any time. Therefore, if an entry in PL contains an overflow counter $f_{of}=\mu$ larger than its $p_{of}=\tau$, the density will be $d_e > \frac{2^{8}\mu}{2^{6}(\tau+1) - 1} = \frac{256\mu}{64(\tau+1) - 1} > 4   (\mu > \tau, \; \mu, \tau \in \mathbb{Z}^+)$. As a result, the flow can be cleared from both CL and PL even though it is being protected.
    
    \subsubsection{Burst Elimination} Interestingly, we have found several flows generate a high volume of traffic in a short period, but exhibit PS flow characteristics during all other time intervals. With their relatively high overall density, these flows may not meet the PS flow criterion in the entire period. With the aim to report these flows for security, if the overflow counter for $f$ increases more than twice in one time window, we only add 2 to the counter $f_{of}$.
    
    \subsubsection{One-time Traversal} During \textit{Insert}, we need to traverse $BK[m]$ up to three times in order to search for $FP_e$, empty slot, and the flow with the smallest persistence. In practical use, we introduced three extra counters at the CL for each bucket (not each entry): an empty slot indicator $Ep[m]$, a replacement indicator $Rp[m]$, and a replacement counter $MinP[m]$, all initialized to -1, as shown in Figure \ref{OTT}. They will help us find all three locations in one loop.

       \begin{figure}[H]
\setlength{\abovecaptionskip}{0.1cm}
        \centering
        \includegraphics[width=0.9\linewidth]{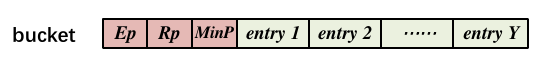}
        \vspace{-0.1cm}
        \caption{Three extra counters for each bucket.}
        \label{OTT}
        \vspace{-0.3cm}
\end{figure}

\subsection{A Running Example}
    Here, we present a simplified example of PSSketch operations. The green part represents one of the buckets in the Competition Layer with four entries, while the blue part represents the Protection Layer. The width of the counter $f$ is 8 bits, thus its overflow value is 256. In contrast, the overflow value of counter $p$ is set to the persistence threshold of 50 in our example. The following describes how we handle coming flows:

    \textit{Example for Competition Layer.} As shown in Figure \ref{cl}, when $e_1$ comes for the first time and is hashed to this bucket, it searches an empty entry and records $\langle FP_{e_1},1,1,\fbox{10} \rangle$. Then flow \(e_1\) arrives for the second time within this time window; its counter $f$ is incremented by one while its counter $p$ remains unchanged. Subsequently, a new flow \(e_3\) hashes to the same bucket; however, due to the lack of empty entries, $contend$ is initiated against \(e_1\), which has the lowest value of counter $p$. Since \(p_{e_1}=1\), the replacement probability is $1/p_{e_1} = 1$, resulting in \(e_1\) being replaced by \(e_3\).

    \begin{figure}[htbp]
    \setlength{\abovecaptionskip}{0.2cm}
        \centering
        \includegraphics[width=1\linewidth]{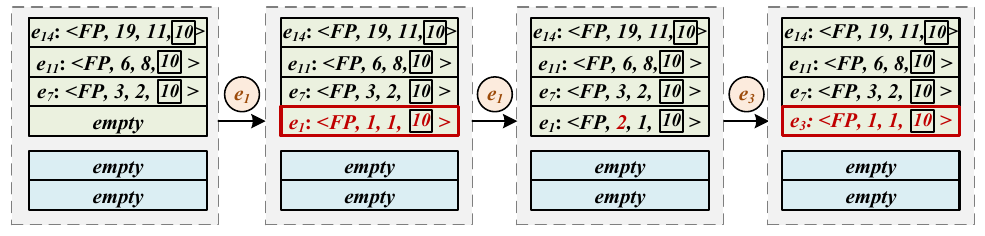}
        \caption{An Example of $Insert$, $Update$ and $Contend$.}
        \label{cl}
        \vspace{-0.35cm}
\end{figure}

    \textit{Example for Reporting to Protection Layer.} After the last update of flow $e_2$, as shown in Figure \ref{pl}, it reaches the overflow value of counter $p$. It is, therefore reported to the Protection Layer with initial information $\langle ID_{e_2},0,1 \rangle$. On the other hand, flow \(e_5\) comes and makes the counter $f$ to be 256. Thus, it reports an overflow of counter $f$. Upon reporting, its information in the Protection Layer updates to $\langle FP_{e_5},4,3\rangle$, satisfying the pruning condition \(f_{of} > p_{of}\). As a result, its information is cleared from both layers.

        \begin{figure}[H]
    \setlength{\abovecaptionskip}{0.2cm}
        \centering
        \includegraphics[width=1.02\linewidth]{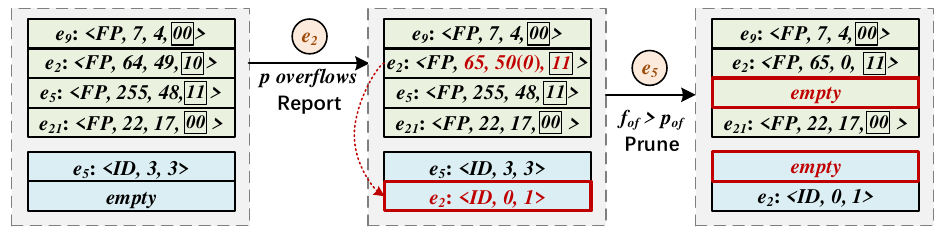}
        \caption{An Example of $Report$ and $Prune$.}
        \label{pl}
        \vspace{-0.25cm}
    \end{figure}

\section{Mathematical Analysis}
\label{mathematical analysis}

In this section, we first analyze the property of the density of a flow in \ref{Property of the density of a Flow}. Then, we derive the error bounds for the PSSketch in \ref{Error Bound Analysis}. Finally, we analyze the time complexity and space complexity of PSSketch in \ref{Complexity analysis}.All the proof details can be found in \ref{Details of mathematical analysis}.

\begin{figure}[htbp]
    \centering
    \includegraphics[width=220pt]{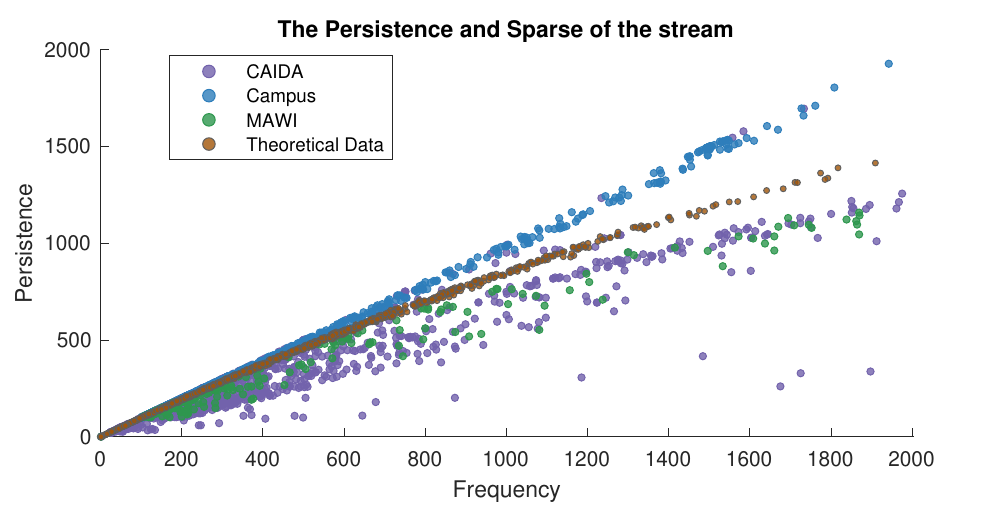}
    \caption{The persistence and frequency of Theoretical data and the data of CAIDA, Campus and MAWI.}
    \label{fpc}
    \vspace{-0.35cm}
\end{figure}

In this section, we analyze the implementation of PSSketch and make the following two assumptions:
\begin{enumerate}
    \item The majority of flows in the data stream are neither persistent nor sparse (see \ref{PS_define} for supporting evidence). 
    Consequently, PS flows constitute only a small fraction of the data stream.
    \item We assume that all flows are mutually independent. For each flow, it is independent and identically distributed (i.i.d.) within each window, following a $Poisson$ distribution with parameter \(\lambda\), where the parameter \(\lambda\) follows a normal distribution. Figure \ref{fpc} demonstrates that our assumptions align with the flow distribution in the three datasets we used (i.e., CAIDA, Campus, and MAWI).
\end{enumerate}

The symbols used in this section are listed in Table \ref{symbols}.

\subsection{Property of the Density of a Flow}
\label{Property of the density of a Flow}

Based on the given assumption, for a specific flow \( e \) (with the corresponding $Poisson$ distribution parameter \( \lambda \)), let \( n_i \) represent the times that flow $e$ appears in the \( i^{th} \) window.

\begin{theorem}
    The expectation and variance of frequency, persistence, and density of \(e\) after the \( i^{th} \) window is given by:
    \begin{equation}
    \begin{split}
        \mathbb{E}[f_i]&=i\lambda \\
        \mathbb{E}[p_i]&=i\cdot\left(1-e^{-\lambda}\right)\\
        \mathbb{E}[d_i]&=\frac{\lambda}{\left(1-e^{-\lambda}\right)}.
    \end{split}
    \end{equation}
        \begin{equation}
    \begin{split}
        \mathbb{VAR}[f_i]&=i\lambda\\
        \mathbb{VAR}[p_i]&=i\cdot e^{-\lambda}\cdot\left(1-e^{-\lambda}\right)\\
        \mathbb{VAR}[d_i]&\leq \frac{\lambda^2}{\left(1-e^{-\lambda}\right)^2}+\frac{\lambda+\lambda^2}{\left(1-e^{-\lambda}\right)}<\infty.
    \end{split}
    \end{equation}
\end{theorem}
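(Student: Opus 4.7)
The plan is to decompose each quantity into functions of the per-window counts \(n_1, \ldots, n_i\), which by assumption are i.i.d.\ \(\mathrm{Poisson}(\lambda)\), and then apply standard identities. For the frequency, \(f_i = \sum_{j=1}^{i} n_j\), so additivity of independent Poissons gives \(f_i \sim \mathrm{Poisson}(i\lambda)\) and hence \(\mathbb{E}[f_i] = \mathbb{VAR}[f_i] = i\lambda\). For persistence, I would write \(p_i = \sum_{j=1}^{i} \mathbf{1}[n_j \geq 1]\); the indicators are independent Bernoulli\((1 - e^{-\lambda})\), so \(p_i \sim \mathrm{Binomial}(i,\, 1-e^{-\lambda})\), which yields \(\mathbb{E}[p_i] = i(1-e^{-\lambda})\) and \(\mathbb{VAR}[p_i] = i\, e^{-\lambda}(1-e^{-\lambda})\).

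For the density \(d_i = f_i / p_i\), the natural strategy is to condition on \(p_i\). Let \(I_j = \mathbf{1}[n_j \geq 1]\). A one-line Bayes computation shows that, conditional on \(\{I_j = 1\}\), \(n_j\) is a zero-truncated Poisson whose mean is \(\mu := \lambda / (1 - e^{-\lambda})\) and whose second moment is \((\lambda + \lambda^2)/(1 - e^{-\lambda})\) (the unconditional Poisson moments divided by \(\mathbb{P}[n_j \geq 1]\)); combined with independence across windows, this shows that conditional on \(\{p_i = k\}\) with \(k \geq 1\), the nonzero counts are i.i.d.\ truncated Poissons and \(d_i\) is their sample mean. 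Therefore \(\mathbb{E}[d_i \mid p_i = k] = \mu\) for every \(k \geq 1\), independent of \(k\), and taking the outer expectation yields \(\mathbb{E}[d_i] = \lambda / (1 - e^{-\lambda})\), which is also independent of \(i\).

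For the variance upper bound, I would compute the conditional second moment \(\mathbb{E}[d_i^2 \mid p_i = k] = \mu^2 + \sigma_\ast^2 / k\), where \(\sigma_\ast^2\) is the truncated-Poisson variance, and then use the pointwise inequality \(\sigma_\ast^2 / k \leq \sigma_\ast^2 \leq (\lambda + \lambda^2)/(1 - e^{-\lambda})\) valid for every \(k \geq 1\). Chaining with the loose-but-sufficient bound \(\mathbb{VAR}[d_i] \leq \mathbb{E}[d_i^2]\) produces \(\mathbb{VAR}[d_i] \leq \mu^2 + (\lambda + \lambda^2)/(1 - e^{-\lambda}) = \lambda^2/(1 - e^{-\lambda})^2 + (\lambda + \lambda^2)/(1 - e^{-\lambda}) < \infty\), the stated bound. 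The main obstacle I anticipate is foundational rather than algebraic: justifying that given the indicator pattern \((I_1, \ldots, I_i)\) the surviving counts are independent truncated Poissons, and cleanly handling the degenerate event \(\{p_i = 0\}\) on which \(d_i\) is undefined. I plan to adopt the convention \(d_i = 0\) there, which contributes nothing to the upper bound, and to read the exact-equality claim for \(\mathbb{E}[d_i]\) as holding conditional on \(\{p_i \geq 1\}\), a distinction whose effect is negligible for any fixed \(\lambda > 0\) once \(i\) is even moderately large.
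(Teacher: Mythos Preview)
Your proposal is correct and follows essentially the same route as the paper: both treat \(f_i\) and \(p_i\) via linearity over the i.i.d.\ per-window counts and indicators, and both handle \(d_i\) by conditioning on \(p_i=\sum_j I_{[n_j\ge 1]}\), computing the conditional second moment as \(\mu^2+\sigma_\ast^2/k\), and then bounding \(\mathbb{E}[1/p_i]\le 1\). Your write-up is in fact slightly cleaner than the paper's in two respects: you make the zero-truncated Poisson structure explicit, and you state the inequality \(\mathbb{VAR}[d_i]\le\mathbb{E}[d_i^2]\) overtly (the paper's display silently drops the \(-\mathbb{E}^2[d_i]\) term at the last step), and you also flag the \(\{p_i=0\}\) degeneracy that the paper leaves implicit.
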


According to \(L_p\) Convergence Theorem and 
Dominated Convergence Theorem, we can acknowledge that 
\begin{equation}
    \lim_{i\rightarrow \infty}\mathbb{E}[|d_i-\lambda|^2]=0,
\end{equation}
which demonstrates that \(d_i\) is \(L_2\) and almost surely converge to \(\lambda\).

Due to the constraint on \( p_0 \), the \( \lambda \) parameter of the Poisson distribution in the PS flows should neither be too small nor excessively large, balancing the need to identify low-density flows.

\subsection{Error Bound Analysis}
\label{Error Bound Analysis}

For PSSketch, the sources of error come from \textbf{Ejection Error}.




\begin{theorem}
 \( \hat{d_i} \) is an unbiased estimator of \( \lambda \), and \( \hat{d_i} \) \( L_2 \)-convergence as well as almost surely convergence to \( \lambda \). 

\end{theorem}

\begin{theorem}
 \( \hat{d_i}-d_i \) is an unbiased estimator of \( 0 \), and \( \hat{d_i}-d_i \)  \( L_2 \)-convergence as well as almost surely convergence to \( 0 \). 
\end{theorem}

This demonstrates that, when the number of windows is sufficiently large, the error caused by ejection converges to zero. This means that even if a suspected PS flow is ejected, it will eventually be reported as a PS flow as the number of windows is large enough.

\subsection{Complexity Analysis}
\label{Complexity analysis}

\subsubsection{Time Cost of PSSketch}

In the Insert phase, a hash function is applied once for both the competition and protection layers, each with a time complexity of \( O(1) \).

\subsubsection{Space Cost of PSSketch}

\begin{theorem}
    The total storage space required by PSSketch is 
\begin{equation}
    XY \cdot (L_c^{FP} + L_c^{f} + L_c^{p} + L_c^{flag}) + R \cdot (L_c^{ID} + L_c^{f_{of}} + L_c^{p_{of}}),
\end{equation}
where \( R \ll X \cdot Y \).
\end{theorem}

\begin{theorem}
    The maximum frequency and the maximum persistence of a flow that can be stored in PSSketch is 
    \begin{equation}
        f_{max} = (2^{L_c^{f_{of}}} - 1) \cdot (2^{L_c^{f}} - 1) \approx 2^{L_c^{f_{of}} + L_c^{f}},
    \end{equation}
\begin{equation}
    p_{max} = (2^{L_c^{p_{of}}} - 1) \cdot (2^{L_c^{p}} - 1) \approx 2^{L_c^{p_{of}} + L_c^{p}}.
\end{equation}
\end{theorem}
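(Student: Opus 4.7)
The plan is to reduce the claim to a direct accounting of what each pair of counters can represent under the Update and Query rules, then combine them. First, I would recall from the Query description that the reconstructed frequency of a flow is $\hat f_e = f_{of} \cdot 2^{L_c^f} + f$, where $f$ is the CL counter of bit width $L_c^f$ and $f_{of}$ is the PL counter of bit width $L_c^{f_{of}}$. Since both are stored as unsigned integers, we immediately obtain $f \le 2^{L_c^f} - 1$ and $f_{of} \le 2^{L_c^{f_{of}}} - 1$, giving an a priori upper bound on the representable $\hat f_e$.

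Second, I would verify that both counters can simultaneously attain their maxima for a flow that survives in PL. Case 1 of Update eliminates flows whose CL counter $f$ overflows before the flow has been promoted, so the accumulation in $f_{of}$ only begins after promotion to PL via Case 2 (triggered by $p$ overflow). From that point on, every subsequent CL overflow of $f$ or $p$ is relayed to PL through Case 3 and the CL counter is reset, so $f_{of}$ grows monotonically while $f$ can be refilled to $2^{L_c^f} - 1$ repeatedly. A symmetric argument handles persistence via $p$ and $p_{of}$; the on-off flag $W$ restricts $p$ to one increment per window but does not change the counter capacity, so $p_{max}$ follows by substituting widths $L_c^p$ and $L_c^{p_{of}}$.

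The main obstacle, and the step I expect to require care, is reconciling the exact maximum $f_{of} \cdot 2^{L_c^f} + f \le 2^{L_c^{f_{of}} + L_c^f} - 1$ with the stated product form $(2^{L_c^{f_{of}}} - 1)(2^{L_c^f} - 1) = 2^{L_c^{f_{of}} + L_c^f} - 2^{L_c^{f_{of}}} - 2^{L_c^f} + 1$. These differ by an additive term of order $\max(2^{L_c^{f_{of}}}, 2^{L_c^f})$, so they share the same leading term and justify the approximation ``$\approx 2^{L_c^{f_{of}} + L_c^f}$'' in the statement. I would present both expressions and explain that the product form can be read as counting the number of full overflow cycles times the capacity of one cycle, which is the more natural description under the PSSketch semantics even if it loses the lower-order terms.
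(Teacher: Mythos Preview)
Your proposal is correct and in fact more careful than the paper's own argument. The paper's proof is essentially a one-line assertion: it says the maximum frequency is ``the product of the maximum count that the frequency counters in the Competition Layer can hold and the overflow count in the Protection Layer'' and immediately writes down $(2^{L_c^{f_{of}}}-1)(2^{L_c^{f}}-1)$, with the persistence case handled ``similarly.'' There is no derivation from the Query rule, no attainability check, and no discussion of the sum-versus-product discrepancy.

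Your route is different in that you start from the reconstruction formula $V = V_{prt}\cdot 2^{L_c} + V_c$ actually used in the Query operation, derive the tight additive bound $2^{L_c^{f_{of}}+L_c^{f}}-1$, and then explicitly compare it to the multiplicative form in the theorem statement. This buys you two things the paper does not have: (i) a genuine justification grounded in the data structure's semantics rather than an appeal to intuition, and (ii) an honest acknowledgment that the product form is an approximation differing from the true maximum by a lower-order term. Your reading of the product as ``number of full overflow cycles times capacity per cycle'' is exactly the informal picture the paper relies on, so you are not diverging in spirit, only in rigor. The attainability discussion (Cases~1--3 of Update) is also additional care that the paper omits entirely.
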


\begin{theorem}
    If PISketch stores flow data with the same order of magnitude for the maximum frequency and persistence as PSSketch, then the Weight Sketch component of PISketch must have at least:
    \begin{equation}
        XY \cdot \left( L_c^{ID} + \log_2 L \cdot \left( L_c^{f_{of}} + L_c^{f} \right) + \left( L_c^{p_{of}} + L_c^{p} \right) \right)
    \end{equation}
\end{theorem}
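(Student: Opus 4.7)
The plan is to analyze PISketch's Weight Sketch slot-by-slot, lower-bound the bit-length of each field needed to represent the same $(f_e,p_e)$ range that PSSketch supports, and then multiply by the number of slots $XY$. Since PSSketch's Competition Layer has $XY$ entries, a fair comparison places PISketch on the same slot grid; this justifies the leading $XY$ factor, after which the work reduces to establishing the three per-slot summands.

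First, I would argue that PISketch must store the full flow identifier of length $L_c^{ID}$. Its update rule -- increase $W$ by $L$ on a flow's first appearance in a window and decrease $W$ by $1$ on every subsequent appearance -- depends critically on a correct association between each packet and the entry holding its flow. Unlike PSSketch's Competition Layer, which can safely use a short fingerprint $FP_e$ because the $OF$ flag and the Protection Layer later use the full $ID_e$ for disambiguation, PISketch's single weight counter cannot tolerate fingerprint collisions: each misattribution would corrupt $W$ by multiples of $L$. Second, unfolding the update rule for a tracked flow yields $W=(L+1)p_e-f_e$, so the weight is a linear function of frequency and persistence. To match PSSketch's maximum frequency $f_{\max}\!\approx\!2^{L_c^{f_{of}}+L_c^{f}}$ under increments of size $L$, the weight counter must be able to absorb up to $f_{\max}$ unit decrements between consecutive $+L$ bursts, and a careful accounting of the $L$-step increments relative to the frequency range yields the $\log_2 L\cdot(L_c^{f_{of}}+L_c^{f})$ term. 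Third, because $W$ alone conflates $f_e$ and $p_e$, a separate persistence field of width $L_c^{p_{of}}+L_c^{p}$ is necessary so that the persistence threshold $p_0$ can be evaluated at query time without having to invert $W=(L+1)p_e-f_e$ in a single variable.

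The main obstacle I anticipate is deriving the exact coefficient $\log_2 L$ in the second step. A purely range-based argument on $W$, using $W_{\max}\approx(L+1)p_{\max}$, only yields an additive $\log_2 L+(L_c^{p_{of}}+L_c^{p})$ bits, so the multiplicative form in the theorem must come from a structural argument -- most plausibly from the observation that each of the $\log_2 L$ bits of an $L$-sized increment has to be traceable against a counter on the frequency scale. I would try to reconcile these two views either by explicitly constructing the $\log_2 L$ parallel sub-counters PISketch needs to implement the $L$-step update losslessly, or by an information-theoretic argument that disambiguating $(f_e,p_e)$ pairs which collapse to the same weight value requires resolving $\log_2 L$ bits per frequency increment. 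Once this coefficient is pinned down, summing the three per-slot contributions and multiplying by $XY$ closes the proof.
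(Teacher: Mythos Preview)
Your three-field decomposition and the final multiplication by $XY$ match the paper, but the paper's argument is far more cursory than yours. It does not argue \emph{why} PISketch must store the full $ID$ (no collision reasoning), nor why a separate persistence field is logically required; it simply takes the Weight Sketch cell layout $(ID, W, N)$ from PISketch as given and sizes each field. For the identifier it uses $L_c^{ID}$ directly, and for the persistence field it observes that storing $p_{\max}\approx 2^{L_c^{p_{of}}+L_c^{p}}$ forces $L_c^{N}\geq L_c^{p_{of}}+L_c^{p}$. Your elaborate justification of the first and third summands is therefore extra work the paper does not attempt.

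The obstacle you flag in the middle term is real, and the paper does not resolve it either. The paper's entire derivation of the weight-counter width is the single line $2^{L_c^{W}} = L\cdot\bigl(L_c^{f_{of}}+L_c^{f}\bigr)$, from which it concludes $L_c^{W}\geq \log_2 L\cdot\bigl(L_c^{f_{of}}+L_c^{f}\bigr)$. Taken literally, the logarithm of the left equation is $\log_2 L+\log_2\bigl(L_c^{f_{of}}+L_c^{f}\bigr)$, not the product, so your observation that a range-based argument yields an additive rather than a multiplicative $\log_2 L$ is exactly right. Neither the sub-counter construction nor the information-theoretic route you sketch will manufacture the multiplicative coefficient, because the stated bound does not follow from the paper's own premise; to reproduce the paper's proof you would simply assert the same equation and conclusion without the deeper justification you are searching for.
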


\begin{theorem}
    If the Weight Sketch in PISketch and the Competition Layer in PSSketch occupy the same storage space, then:
    \begin{equation}
        p_{max}^{PISketch} = 2^{L_c^{N}} - 1 \ll p_{max}.
    \end{equation}
\end{theorem}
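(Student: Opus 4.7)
The plan is to derive the constraint on PISketch's counter width from the equal-storage hypothesis, then compare it directly with the effective persistence-counter width in PSSketch. First I would recall from the preceding theorem that the Competition Layer of PSSketch consumes $XY\cdot(L_c^{FP}+L_c^{f}+L_c^{p}+L_c^{flag})$ bits, while each entry of PISketch's Weight Sketch must contain at least the flow identifier and its weight counter $N$, contributing $XY\cdot(L_c^{ID}+L_c^{N})$ bits. Setting these two quantities equal and cancelling $XY$ gives the key identity
\begin{equation}
L_c^{N}=L_c^{FP}+L_c^{f}+L_c^{p}+L_c^{flag}-L_c^{ID}.
\end{equation}

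Next I would justify the PISketch bound $p_{max}^{PISketch}=2^{L_c^{N}}-1$. Because PISketch encodes both frequency and persistence into the single weight $W$ (adding $L$ on the first occurrence of a flow in a window and subtracting $1$ on each subsequent occurrence), the counter must accommodate every persistence increment, and $W$ is stored in exactly $L_c^{N}$ bits. Hence the largest persistence representable before $W$ saturates is at most $2^{L_c^{N}}-1$, reusing the argument already established in the previous theorem for PISketch's required width.

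Then I would compare with $p_{max}\approx 2^{L_c^{p_{of}}+L_c^{p}}$ from Theorem~5, which benefits from the extra overflow bits supplied by the Protection Layer. The desired inequality $2^{L_c^{N}}-1\ll 2^{L_c^{p_{of}}+L_c^{p}}$ reduces, by the identity above, to
\begin{equation}
L_c^{FP}+L_c^{f}+L_c^{flag}\;\ll\; L_c^{ID}+L_c^{p_{of}}.
\end{equation}
This is clearly true in practice: $L_c^{FP}$ is a short hash fingerprint whereas $L_c^{ID}$ is the full 5-tuple identifier (so $L_c^{ID}-L_c^{FP}$ is already tens of bits), while $L_c^{f}$ and $L_c^{flag}$ are tiny (8 and 2 bits in our instantiation) and thus easily dominated by $L_c^{p_{of}}$ together with the identifier gap.

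The main obstacle will be making the ``$\ll$'' rigorous rather than merely numerical, since the relation hides a comparison of terms that depend on implementation choices. My plan is to handle this by stating the minimal assumption $L_c^{ID}\ge L_c^{FP}+L_c^{f}+L_c^{flag}$ (which always holds when the ID is a 5-tuple and $FP$ is a compact hash), so that $L_c^{N}\le L_c^{p}<L_c^{p_{of}}+L_c^{p}$, yielding $p_{max}^{PISketch}\le 2^{L_c^{p}}-1$ whereas $p_{max}\approx 2^{L_c^{p_{of}}}\cdot 2^{L_c^{p}}$, producing a multiplicative gap of $2^{L_c^{p_{of}}}$ that legitimises the ``$\ll$''. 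Everything else is arithmetic, so no further obstacle is expected.
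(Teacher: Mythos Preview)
Your proposal is essentially sound and reaches the same conclusion, but it is considerably more elaborate than what the paper actually does. The paper's entire proof is a single line: it asserts that under the equal-storage hypothesis one has $L_c^{N}=L_c^{p}$, and then writes
\[
p_{max}^{PISketch}=2^{L_c^{N}}-1=2^{L_c^{p}}-1\ll (2^{L_c^{p_{of}}}-1)(2^{L_c^{p}}-1)=p_{max},
\]
so the ``$\ll$'' comes directly from the extra multiplicative factor $2^{L_c^{p_{of}}}-1$ supplied by the Protection Layer. No derivation of $L_c^{N}$ from the per-entry bit budget is spelled out; the identification $L_c^{N}=L_c^{p}$ is taken as the natural field-by-field correspondence when the two structures are sized equally.

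Your route---solving the equal-storage equation for $L_c^{N}$ and then invoking the assumption $L_c^{ID}\ge L_c^{FP}+L_c^{f}+L_c^{flag}$ to conclude $L_c^{N}\le L_c^{p}$---is a legitimate and more explicit argument that yields the same multiplicative gap of $2^{L_c^{p_{of}}}$. One small point: a PISketch Weight-Sketch cell also carries the weight counter $W$, not just the identifier and $N$, so your per-entry expression $L_c^{ID}+L_c^{N}$ understates the PISketch cost; including $L_c^{W}$ would only make $L_c^{N}$ smaller and hence your inequality stronger, so this does not harm the argument. In short, your proof is correct and more careful than the paper's, whereas the paper simply takes $L_c^{N}=L_c^{p}$ as given and lets the overflow bits in the Protection Layer do all the work.
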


The above arguments demonstrate that, compared to PISketch, PSSketch has a significant advantage in space complexity due to its handling of overflows and the use of a dual-layer mechanism consisting of the Competition Layer and Protection Layer.




\begin{figure*}[tbp]
    \centering
    \begin{subfigure}[b]{0.4\textwidth}
        \includegraphics[width=1\textwidth]{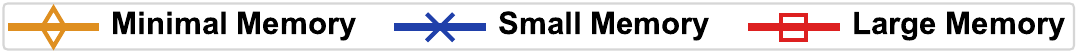}
        \caption*{}
        \vspace{-0.5cm}
        \label{}
    \end{subfigure}

    \begin{subfigure}[b]{0.163\textwidth}
        \includegraphics[width=\textwidth]{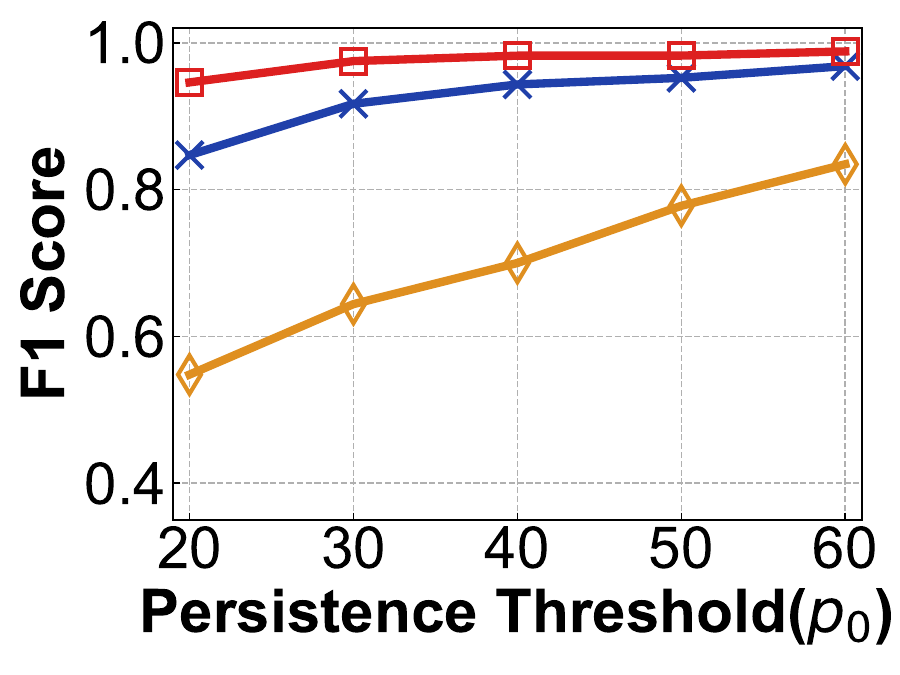}
        \caption{$p_0$-F1,CAIDA}
        \label{p_caida_f1}
    \end{subfigure}
    \begin{subfigure}[b]{0.163\textwidth}
        \includegraphics[width=\textwidth]{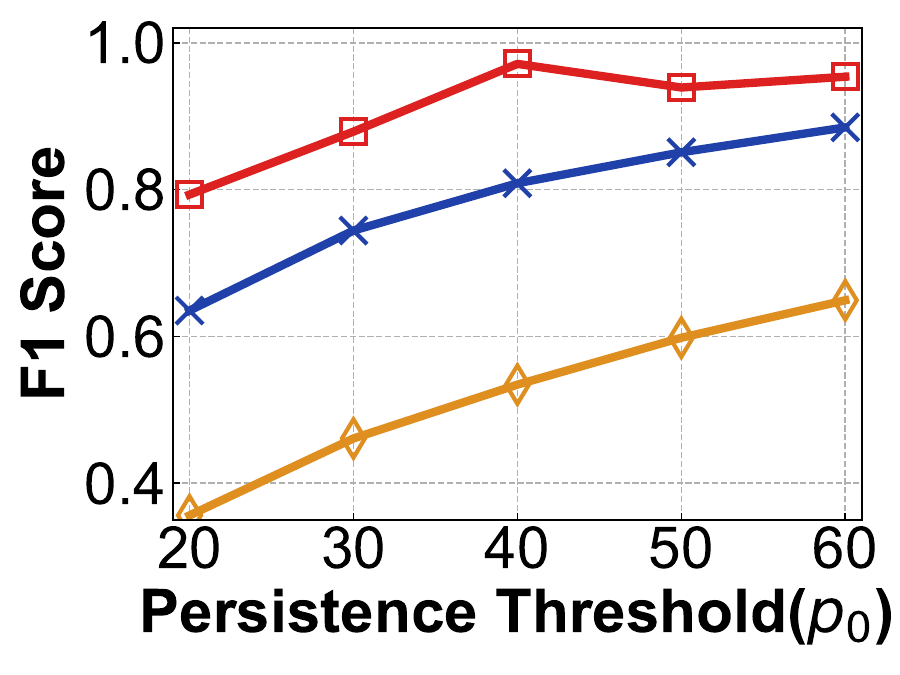}
        \caption{$p_0$-F1,Campus}
        \label{p_campus_f1}
    \end{subfigure}
    \begin{subfigure}[b]{0.163\textwidth}
        \includegraphics[width=\textwidth]{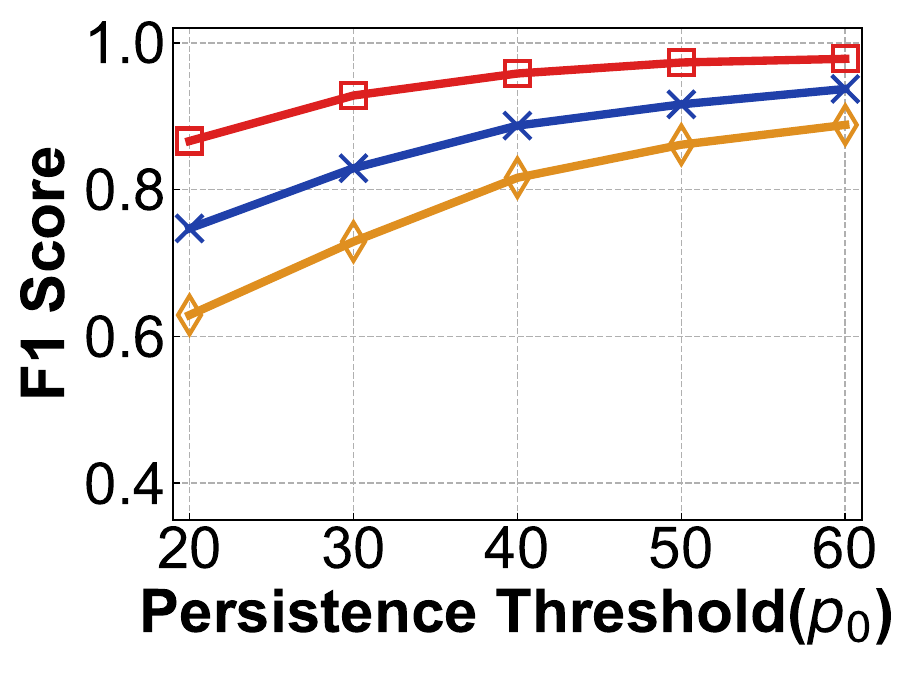}
        \caption{$p_0$-F1,MAWI}
        \label{p_mawi_f1}
    \end{subfigure}   
    \begin{subfigure}[b]{0.163\textwidth}
        \includegraphics[width=\textwidth]{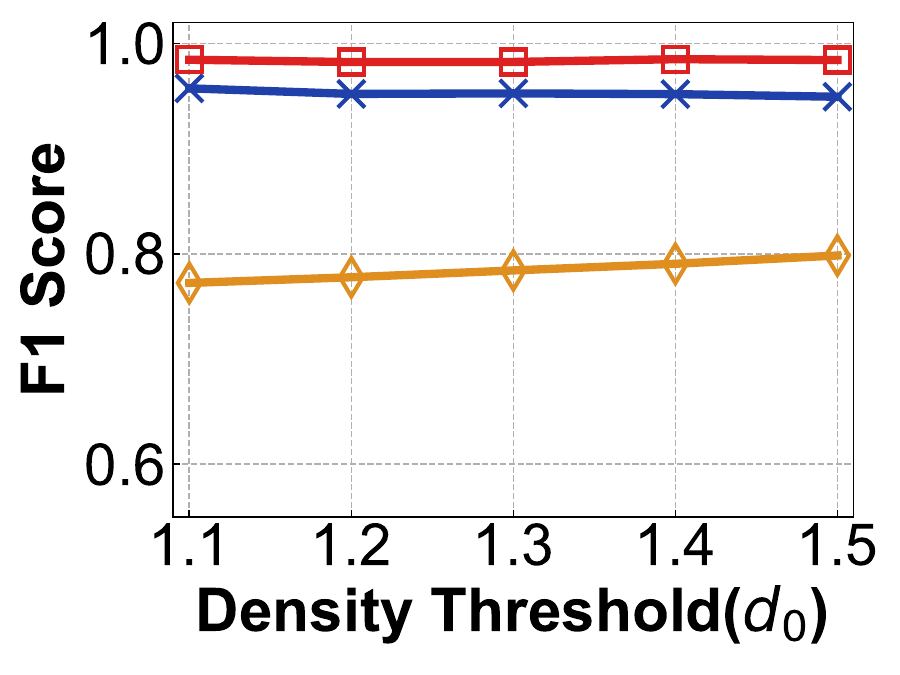}
        \caption{$d_0$-F1,CAIDA}
        \label{d_caida_f1}
    \end{subfigure}
    \begin{subfigure}[b]{0.163\textwidth}
        \includegraphics[width=\textwidth]{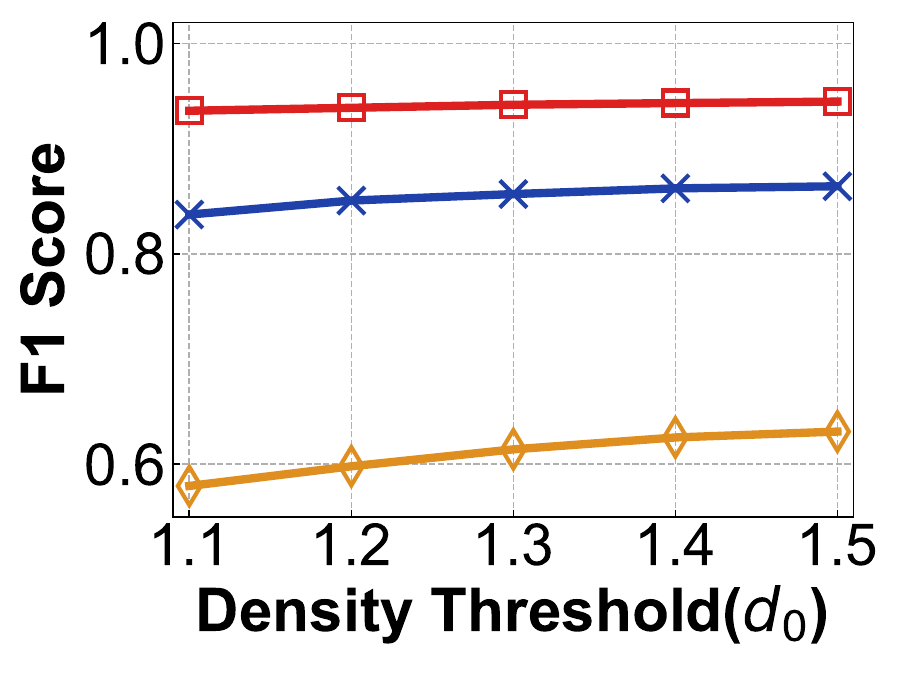}
        \caption{$d_0$-F1,Campus}
        \label{d_campus_f1}
    \end{subfigure}
    \begin{subfigure}[b]{0.163\textwidth}
        \includegraphics[width=\textwidth]{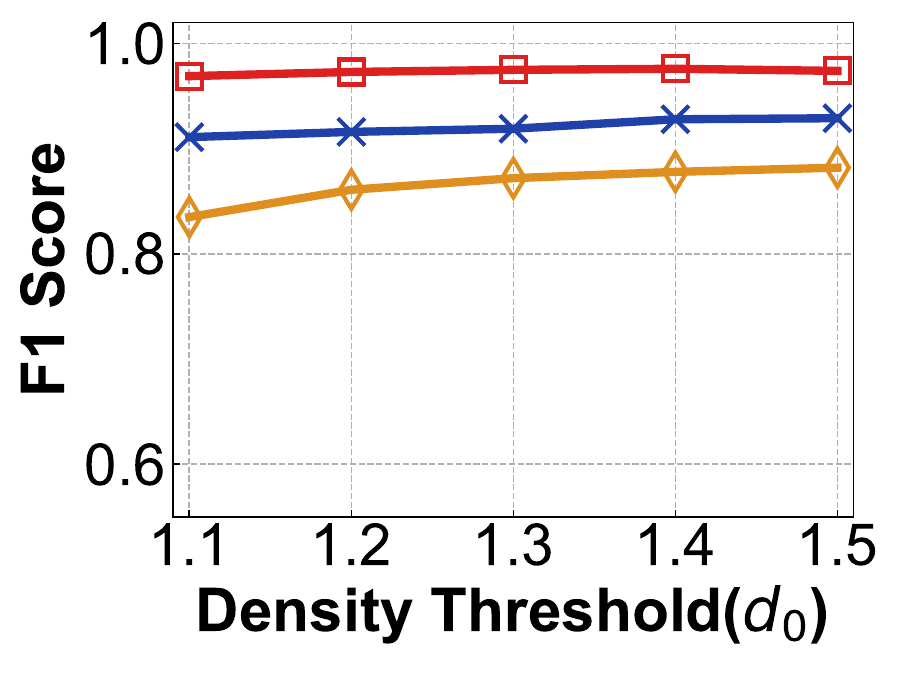}
        \caption{$d_0$-F1,MAWI}
        \label{d_mawi_f1}
    \end{subfigure}
    \vspace{-0.35cm}
    \caption{The impact of persistence threshold $p_0$ and density threshold $d_0$ on F1 Score.}
    \label{exp_pd_f1}
    
\end{figure*}


\begin{figure*}[tbp]
    \centering
    \begin{subfigure}[b]{0.4\textwidth}
        \includegraphics[width=1\textwidth]{Leg.pdf}
        \caption*{}
        \vspace{-0.5cm}
        \label{}
    \end{subfigure}
    
    \begin{subfigure}[b]{0.163\textwidth}
        \includegraphics[width=\textwidth]{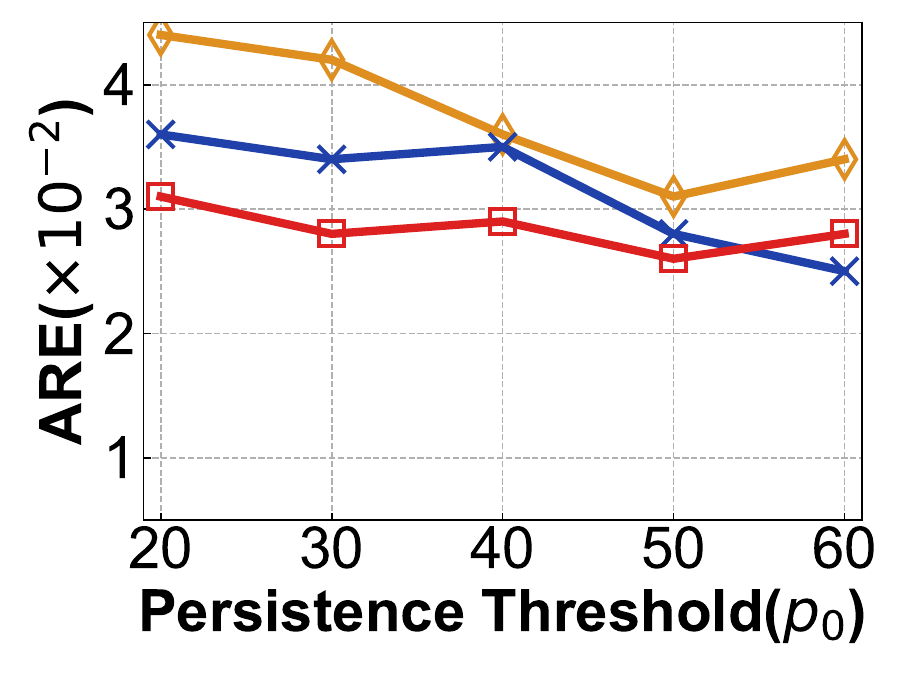}
        \caption{$p_0$-ARE,CAIDA}
        \label{p_caida_are}
    \end{subfigure}
    \begin{subfigure}[b]{0.163\textwidth}
        \includegraphics[width=\textwidth]{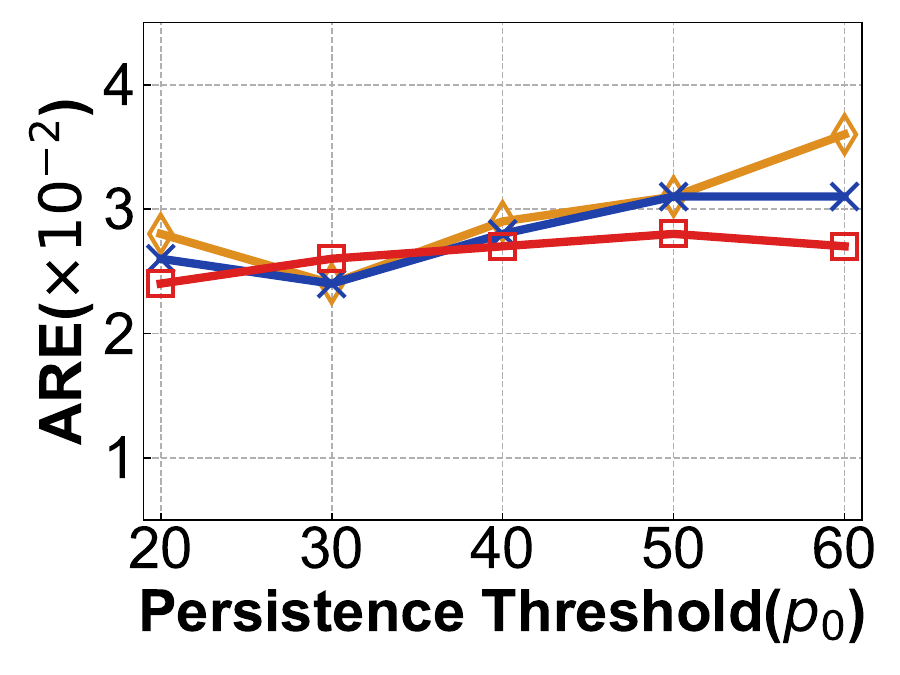}
        \caption{$p_0$-ARE,Campus}
        \label{p_campus_are}
    \end{subfigure}
    \begin{subfigure}[b]{0.163\textwidth}
        \includegraphics[width=\textwidth]{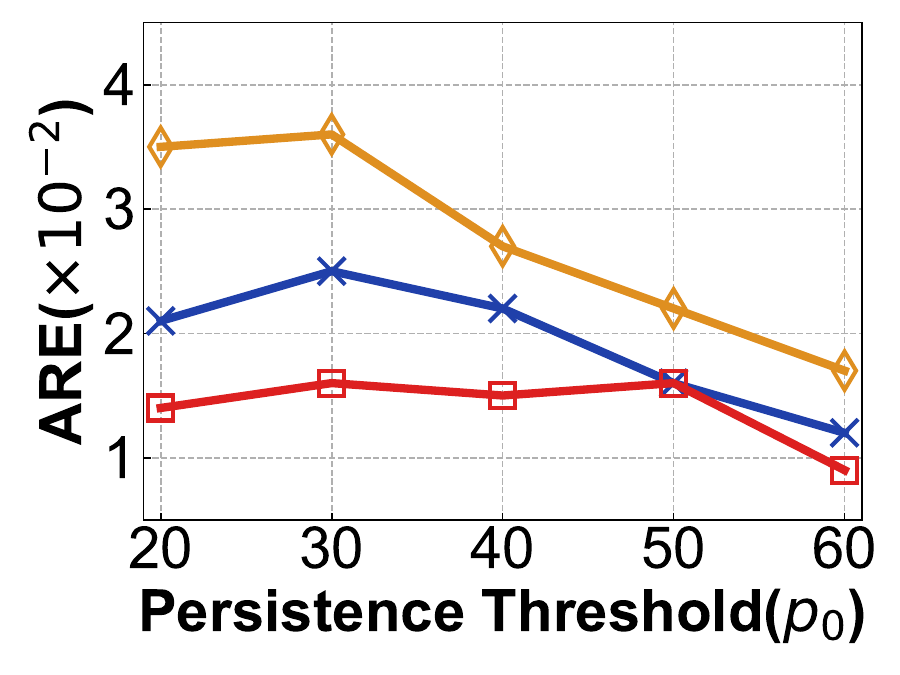}
        \caption{$p_0$-ARE,MAWI}
        \label{p_mawi_are}
    \end{subfigure}   
    \begin{subfigure}[b]{0.163\textwidth}
        \includegraphics[width=\textwidth]{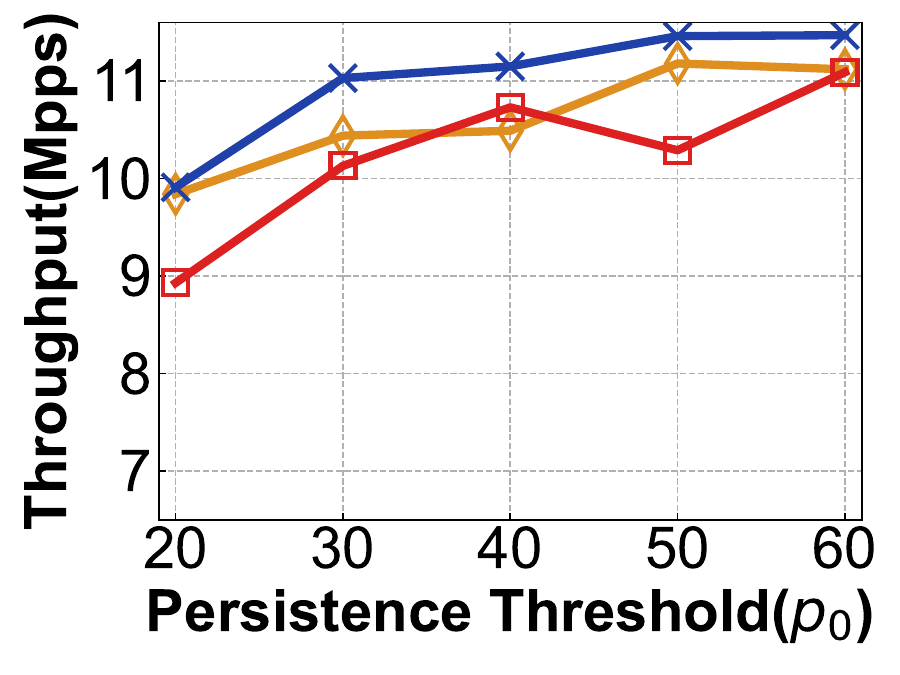}
        \caption{$p_0$-Tp,CAIDA}
        \label{p_caida_tp}
    \end{subfigure}
    \begin{subfigure}[b]{0.163\textwidth}
        \includegraphics[width=\textwidth]{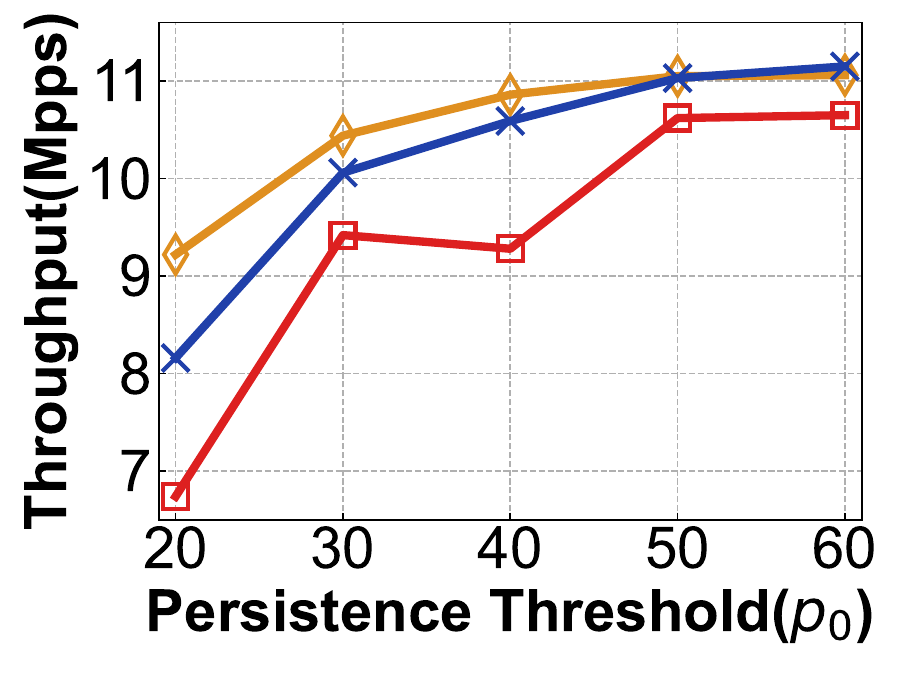}
        \caption{$p_0$-Tp,Campus}
        \label{p_campus_tp}
    \end{subfigure}
    \begin{subfigure}[b]{0.163\textwidth}
        \includegraphics[width=\textwidth]{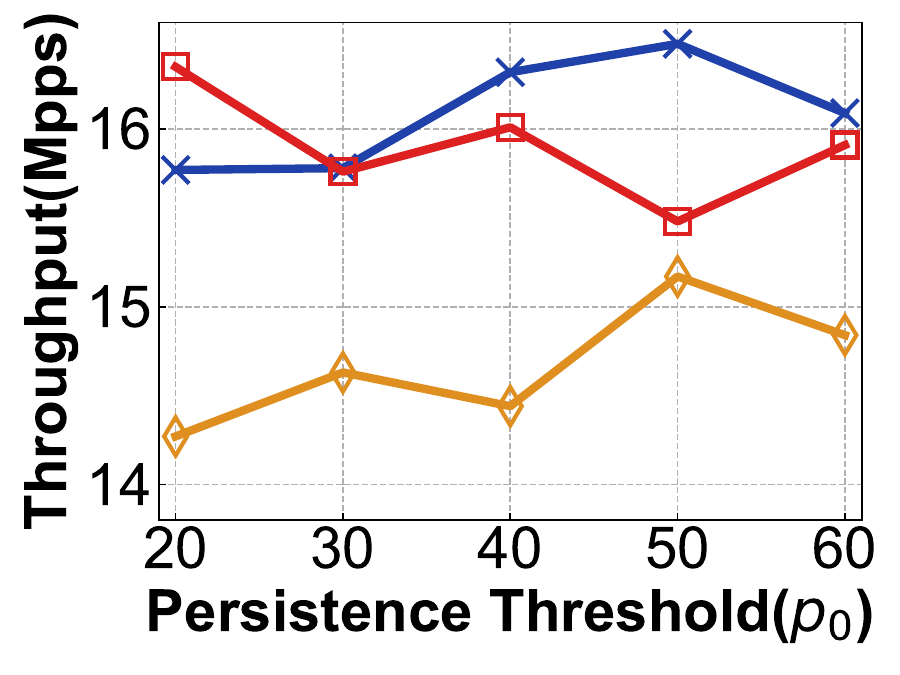}
        \caption{$p_0$-Tp,MAWI}
        \label{p_mawi_tp}
    \end{subfigure}   
    \vspace{-0.35cm}
    \caption{The impact of persistence threshold $p_0$ on ARE and Throughput.}
    \label{exp_pd_are}
\end{figure*}

\section{Evaluation}
    
    \subsection{Setup}

    We implement PSSketch in C++ with Bob Hash \cite{BOB}. The experiments are run on a PC with AMD R9 7940H, 16 Cores, and 32GB DRAM. We have released our source code at the GitHub~\cite{github}. The datasets we used are as follows:

    \textit{CAIDA.} It is a set of anonymous IP collected in 2018 \cite{CAIDA} which contains 2,490K packets with 109,534 flows. The proportion of PS flow is 1.055\%.

    \textit{Campus.} This dataset is collected from a real network within a campus, utilizing the same data as \cite{PS}. It contains 10,000K packets, comprising 259,948 distinct flows, with PS flows accounting for 1.433\% of the total flows.

    \textit{MAWI.} A real traffic trace provided by MAWI Working Group\cite{MAWI}. It contains 2,000K packets with 200,471 flows. PS flows only take 0.132\% total.

    \subsection{Comparing Solutions and Metrics}

    In this study, we will compare PSSketch with three distinct models:

    \textit{Strawman Solution:} As introduced in Section 2.2, we utilize CMSketch to estimate frequency and On-off Sketch to estimate persistence, while using an extra array to store information of PS flows.

    \textit{PISketch:} PISketch\cite{PIS} represents the current SOTA in detecting PS flow specially. PSSketch employs a traditional Sketch data structure and defines the concept of ``weight'' as a criterion for PS flows.

    \textit{PISketch-Density:} This model is based on PISketch, with modifications made to its criterion. It now selects PS flows by our proposed definition, ``density'' instead of ``weight'', while all other processing procedures remain unchanged. By comparing this model, we aim to illustrate the structural optimizations introduced by PSSketch.

    By comparing the PS flows reported by different solutions with the actual PS flows in the datasets, we will derive the following two metrics through comparison with the answer set:

    \textbf{F1 Score:} The calculation of the F1 Score is given by the formula $F1 = \frac{2 \cdot Precision \cdot Recall}{Precision + Recall}$, where $Precision$ means the proportion of PS flows in the predicted set and $Recall$ shows the proportion of PS flows successfully predicted in the answer set. 

    \textbf{ARE:} The calculation of the Average Relative Error (ARE) is defined as $ARE = \frac{1}{n} \sum_{i=1}^{n} \left| \frac{y_i - \hat{y}_i}{y_i} \right|$, where $\hat{y}_i$ is the value stored in the sketch and $y_i$ is its actual value. $n$ shows the total flows stored at the end. 

    \textbf{Throughput:} we will also evaluate the throughput of the three models, defined as the number of packets processed per second.

    \subsection{Parameter Evaluation}

    Our model incorporates three variables: the persistence threshold \(p_0\), the density threshold \(d_0\), and the aspect ratio of the Competition Layer (\(X:Y\)). This experiment is mainly used to test the sensitivity and robustness of PSSketch in different scenarios and workloads. Considering the different loads of different datasets, under the CAIDA and Campus datasets, we set minimal memory to 50 KB, small memory to 100 KB, and large memory to 150 KB. Under MAWI, the three values correspond to 15KB, 25KB, and 50KB, respectively.

    The impact of the persistent threshold \( p_0 \) and the density threshold \( d_0 \) on the F1 Score under varying memory constraints $Mem$ is illustrated in Figure \ref{exp_pd_f1}. The width $Y$ is now set to 32, while $X=\frac{Mem}{32}$. As \( p_0 \) increases, the threshold for entering the Protection Layer rises, resulting in a significant enhancement of the F1 Score, particularly pronounced under memory-constrained conditions. This observation suggests that PSSketch performs better in more stringent threat detection scenarios. In contrast, the impact of the threshold \( d_0 \) on the F1 Score is relatively modest, yet it exhibits a positive correlation.

    Figure \ref{exp_pd_are}(a), \ref{exp_pd_are}(b) and \ref{exp_pd_are}(c) illustrate the influence of \( p_0 \) on the average relative error. The optimal threshold \( p_0 \) varies according to load intensity; for higher intensities, such as in the Campus dataset, the optimal \( p_0 \) is lower. Conversely, as the load intensity decreases, the optimal threshold \( p_0 \) gradually increases.

    Figure \ref{exp_pd_are}(d),\ref{exp_pd_are}(e) and \ref{exp_pd_are}(f) depict the effect of \( p_0 \) on throughput. In the CAIDA and Campus datasets, stricter criteria significantly enhance the throughput of PSSketch, especially under conditions of ample memory. However, in the MAWI dataset, characterized by the lowest load intensity, the improvements in throughput are not obvious. Overall, the findings indicate that PSSketch achieves higher F1 Scores and throughput when addressing more rigorous threat detection tasks.

    Note that in the experiment mentioned above, we do not present the impact of the threshold \( D \) on ARE and throughput. This is because \( D \) primarily influences \textit{query}, not $insert$. Consequently, the effect of \( D \) on these two metrics is minimal.

    We designate the number of entries $Y$ within each bucket as the independent variable, thereby establishing $X=\frac{Mem}{Y}$. Figure \ref{exp_xy} shows the impact of aspect ratio, where the X-axis represents $y$, the factor of Y, which satisfies $Y=2^{y+1}$ for visual presentation. When we increase the bucket width $Y$ while reducing the number of buckets $X=\frac{Mem}{Y}$, the F1 Score initially rises and then declines. Interestingly, under varying memory constraints, the optimal bucket width for all datasets consistently falls within the 16-32 range, with a notable concentration around 16. From the overall trend of ARE, it is evident that increasing the capacity of each bucket, rather than simply increasing the number of buckets, is more effective in reducing the average relative error. Moreover, increasing the number of entries per bucket leads to more memory access per iteration. As a result, it is reasonable that throughput decreases as the bucket width \( Y \) increases.

\begin{figure}[tbp]
    \centering
    \begin{subfigure}[b]{0.36\textwidth}
        \includegraphics[width=1\textwidth]{Leg.pdf}
        \caption*{}
        \vspace{-0.5cm}
        \label{}
    \end{subfigure}
    
    \begin{subfigure}[b]{0.156\textwidth}
        \includegraphics[width=\textwidth]{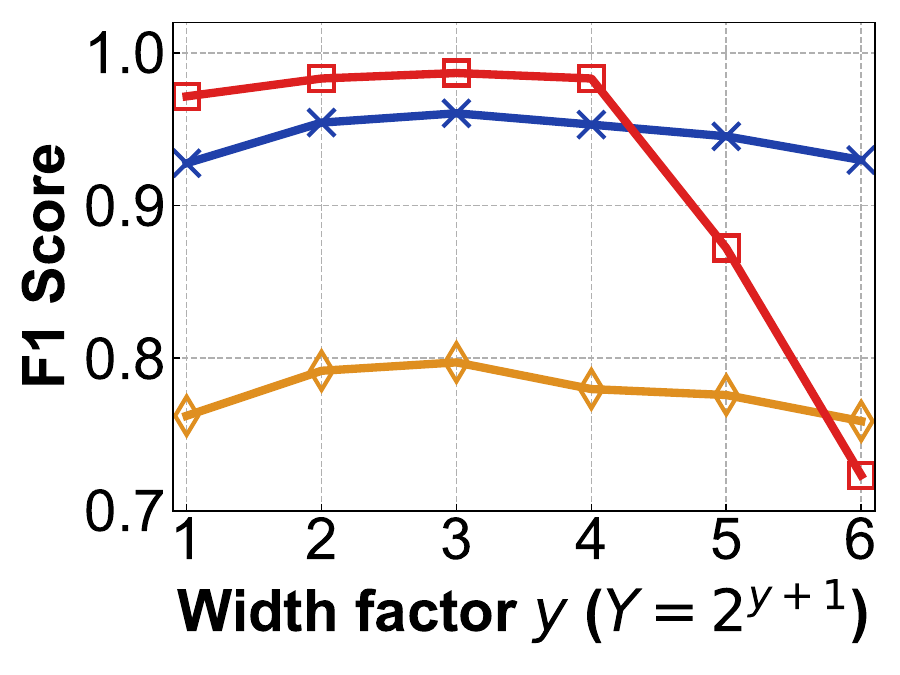}
        \caption{$Y$-F1,CAIDA}
        \label{xy_caida_f1}
        \vspace{0.25cm}
    \end{subfigure}
    \begin{subfigure}[b]{0.156\textwidth}
        \includegraphics[width=\textwidth]{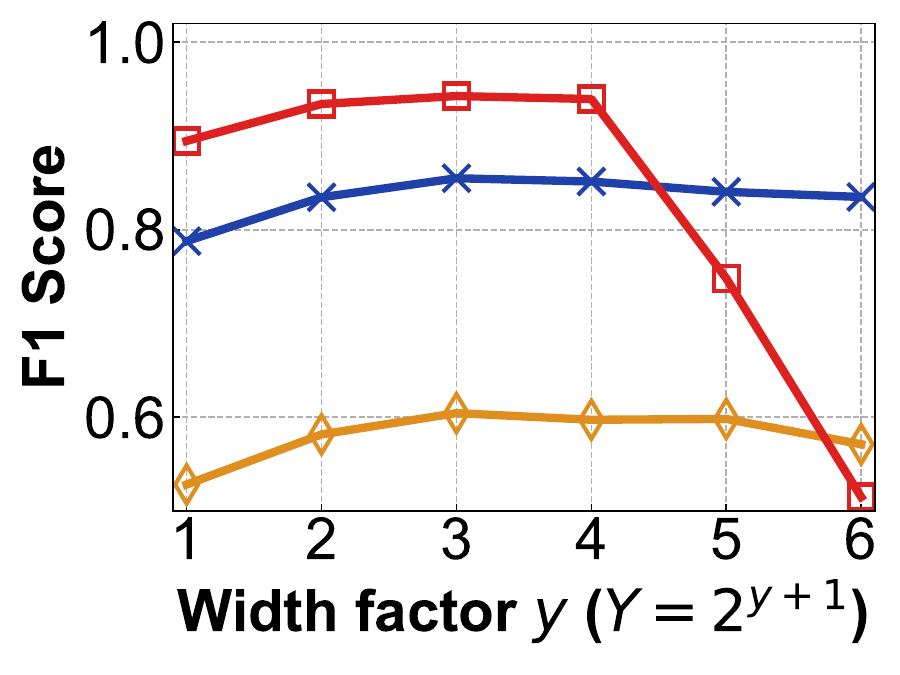}
        \caption{$Y$-F1,Campus}
        \label{xy_campus_f1}
        \vspace{0.25cm}
    \end{subfigure}
    \begin{subfigure}[b]{0.156\textwidth}
        \includegraphics[width=\textwidth]{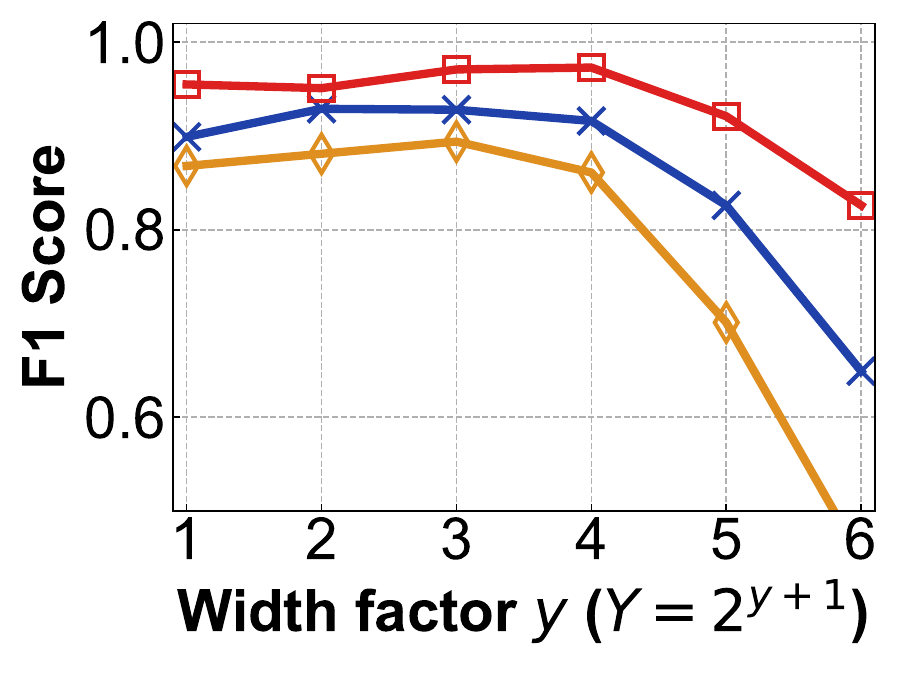}
        \caption{$Y$-F1,MAWI}
        \label{xy_mawi_f1}
        \vspace{0.25cm}
    \end{subfigure}

    \begin{subfigure}[b]{0.156\textwidth}
        \includegraphics[width=\textwidth]{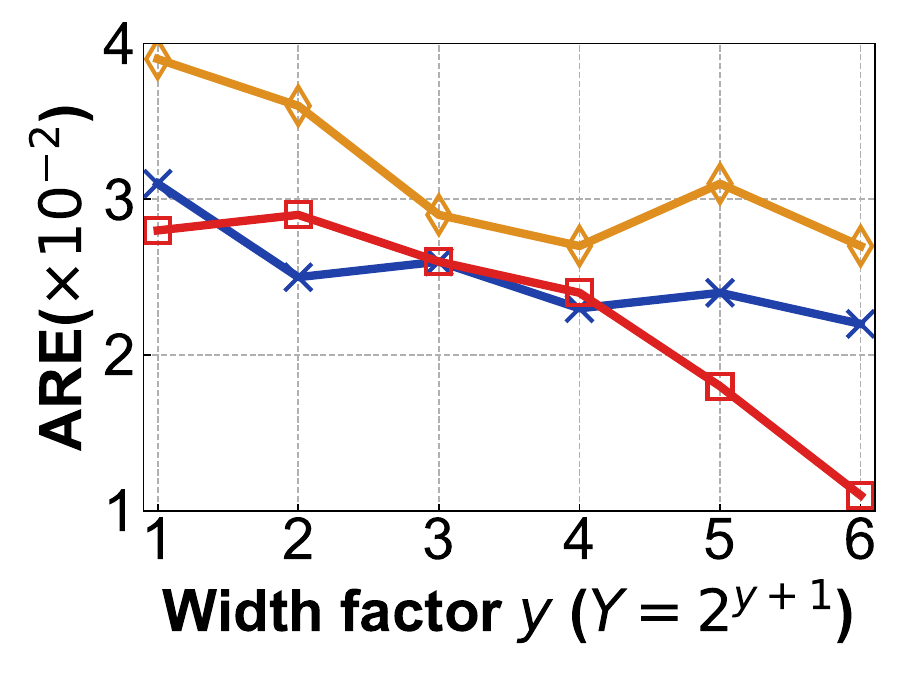}
        \caption{$Y$-ARE,CAIDA}
        \label{xy_caida_are}
        \vspace{0.25cm}
    \end{subfigure}
    \begin{subfigure}[b]{0.156\textwidth}
        \includegraphics[width=\textwidth]{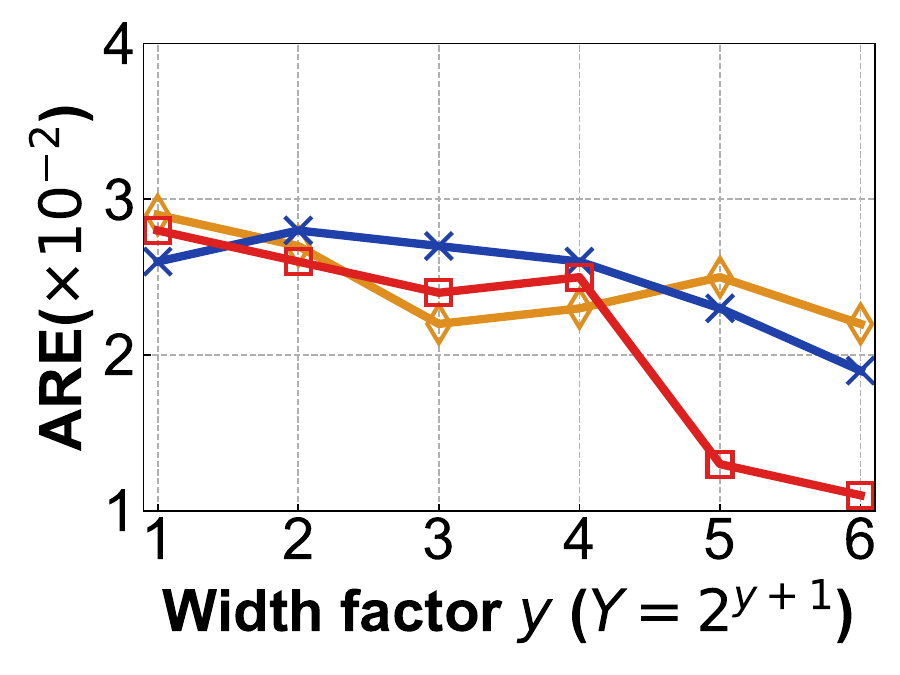}
        \caption{$Y$-ARE,Campus}
        \label{xy_campus_are}
        \vspace{0.25cm}
    \end{subfigure}
    \begin{subfigure}[b]{0.156\textwidth}
        \includegraphics[width=\textwidth]{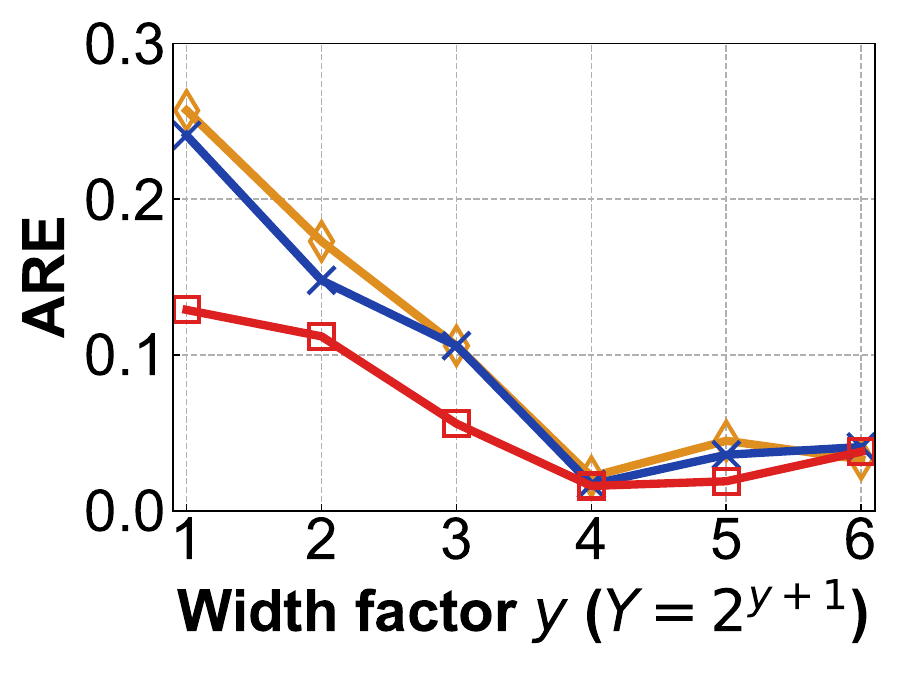}
        \caption{$Y$-ARE,MAWI}
        \label{xy_mawi_are}
        \vspace{0.25cm}
    \end{subfigure}

     \begin{subfigure}[b]{0.156\textwidth}
        \includegraphics[width=\textwidth]{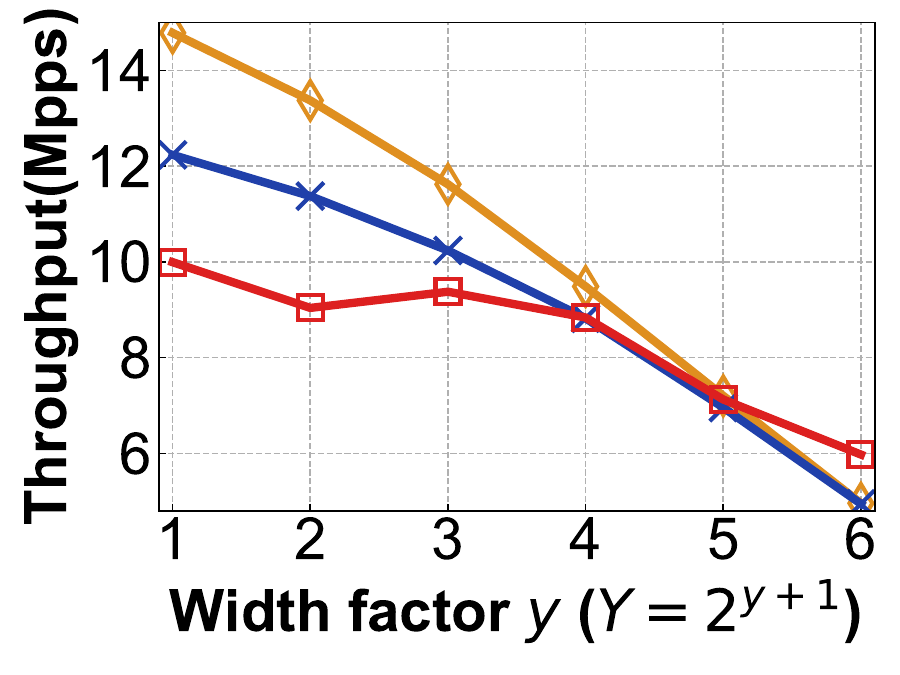}
        \caption{$Y$-Tp,CAIDA}
        \label{xy_caida_tp}
    \end{subfigure}
    \begin{subfigure}[b]{0.156\textwidth}
        \includegraphics[width=\textwidth]{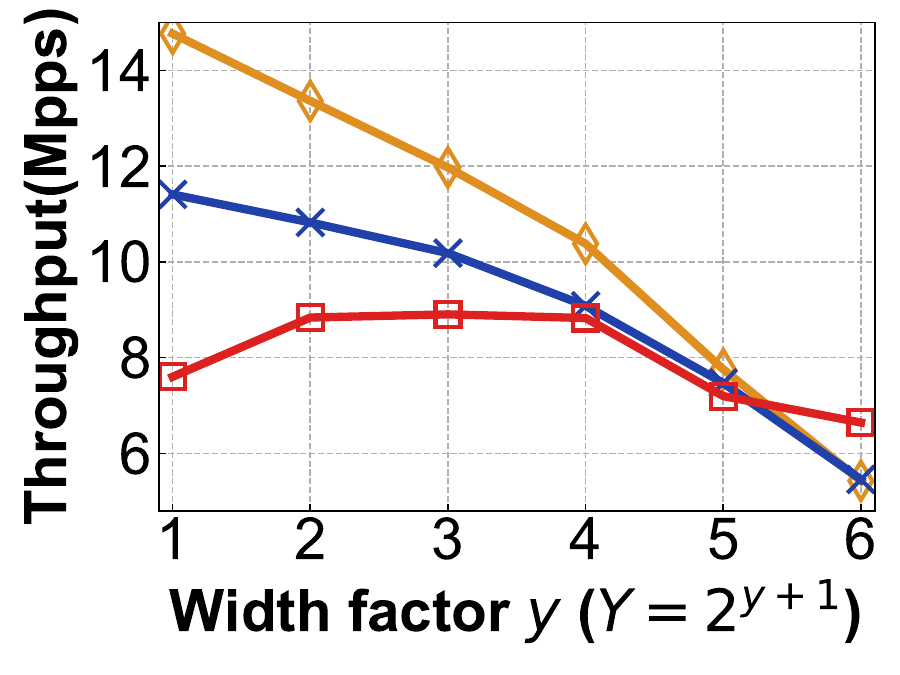}
        \caption{$Y$-Tp.,Campus}
        \label{xy_campus_tp}
    \end{subfigure}
    \begin{subfigure}[b]{0.156\textwidth}
        \includegraphics[width=\textwidth]{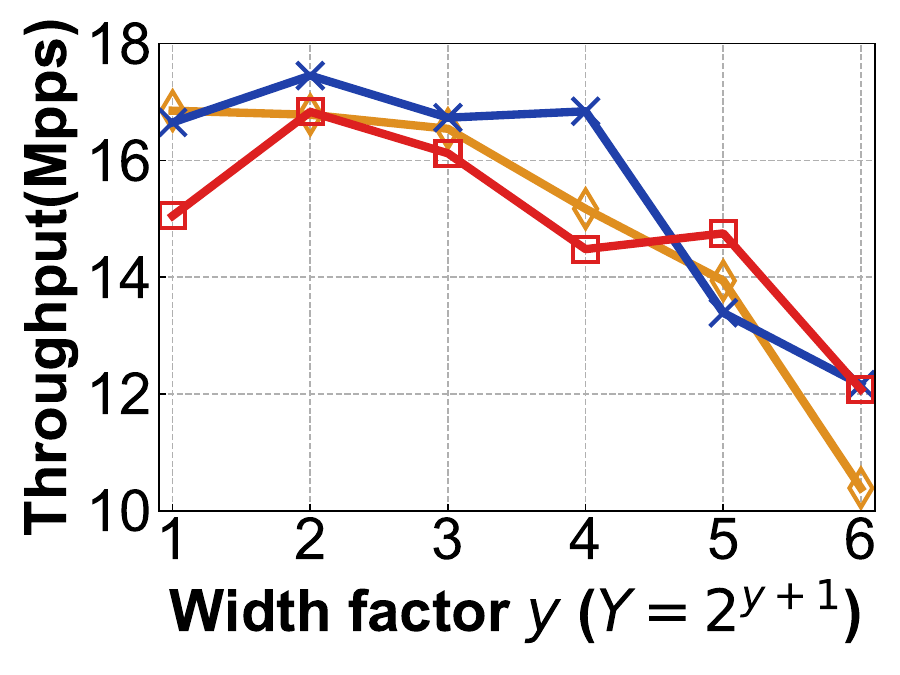}
        \caption{$Y$-Tp.,MAWI}
        \label{xy_mawi_tp}
    \end{subfigure}

    \caption{Impact of aspect ratio on F1, ARE and Throughput.}
    \label{exp_xy}
\end{figure}

\begin{figure*}[htbp]
    \centering
    \begin{subfigure}[b]{0.24\textwidth}
        \includegraphics[width=1\textwidth]{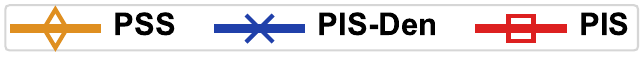}
        \caption*{}
        \vspace{-0.5cm}
        \label{}
    \end{subfigure}
    
    \begin{subfigure}[b]{0.163\textwidth}
        \includegraphics[width=\textwidth]{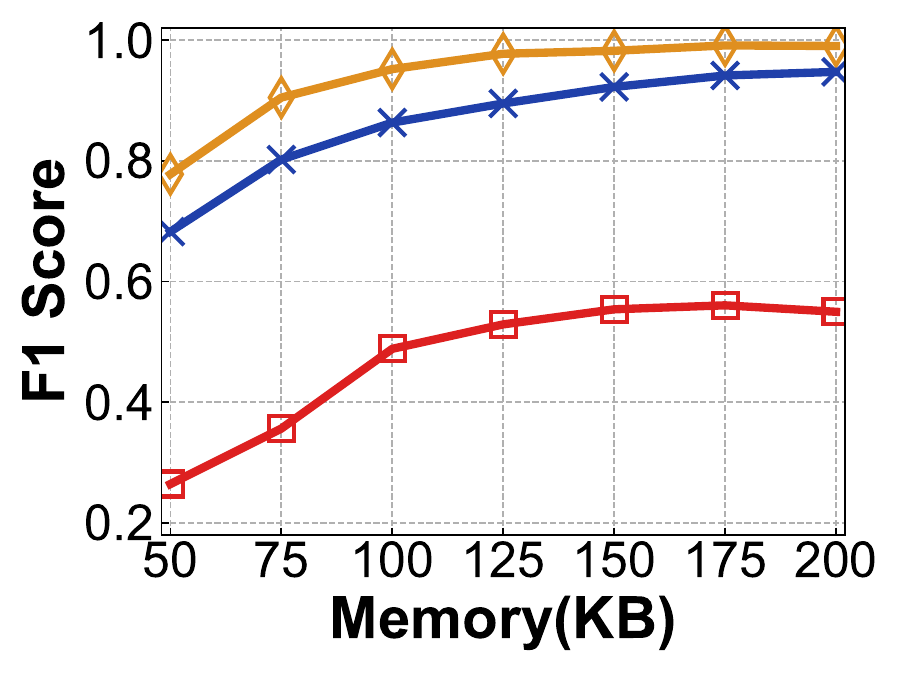}
        \caption{F1,CAIDA}
        \label{f1-caida}
        \vspace{0.3cm}
    \end{subfigure}
    \begin{subfigure}[b]{0.163\textwidth}
        \includegraphics[width=\textwidth]{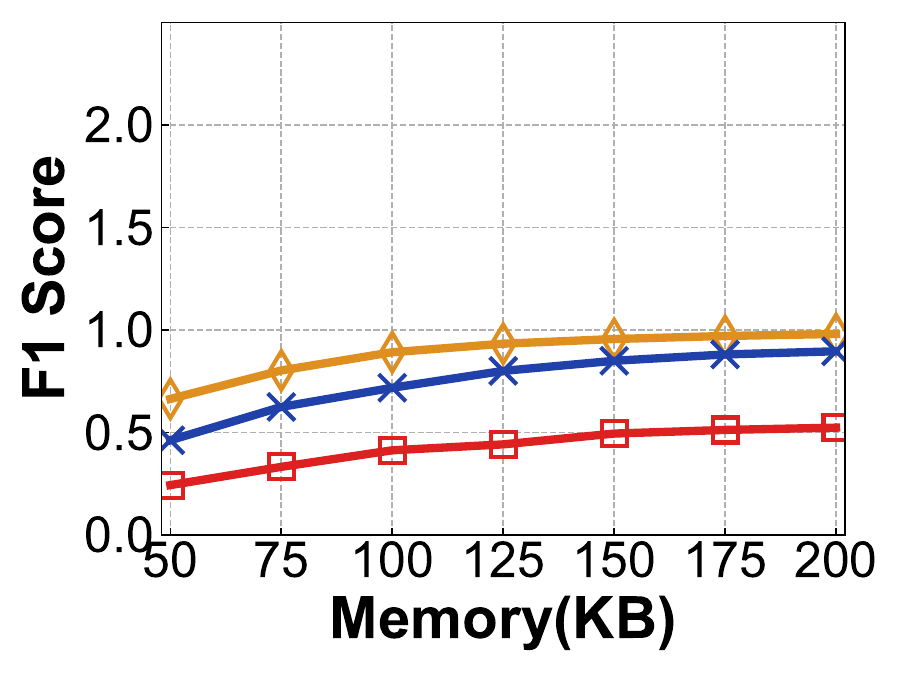}
        \caption{F1,Campus}
        \label{f1-campus}
        \vspace{0.3cm}
    \end{subfigure}
    \begin{subfigure}[b]{0.163\textwidth}
        \includegraphics[width=\textwidth]{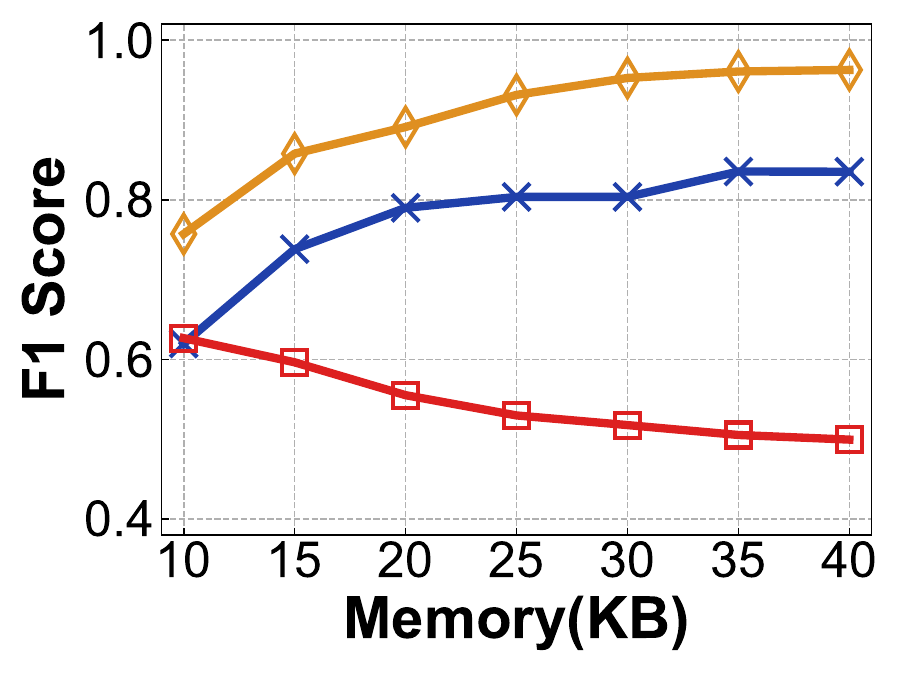}
        \caption{F1,MAWI}
        \label{f1-mawi}
        \vspace{0.3cm}
    \end{subfigure}   
    \begin{subfigure}[b]{0.163\textwidth}
        \includegraphics[width=\textwidth]{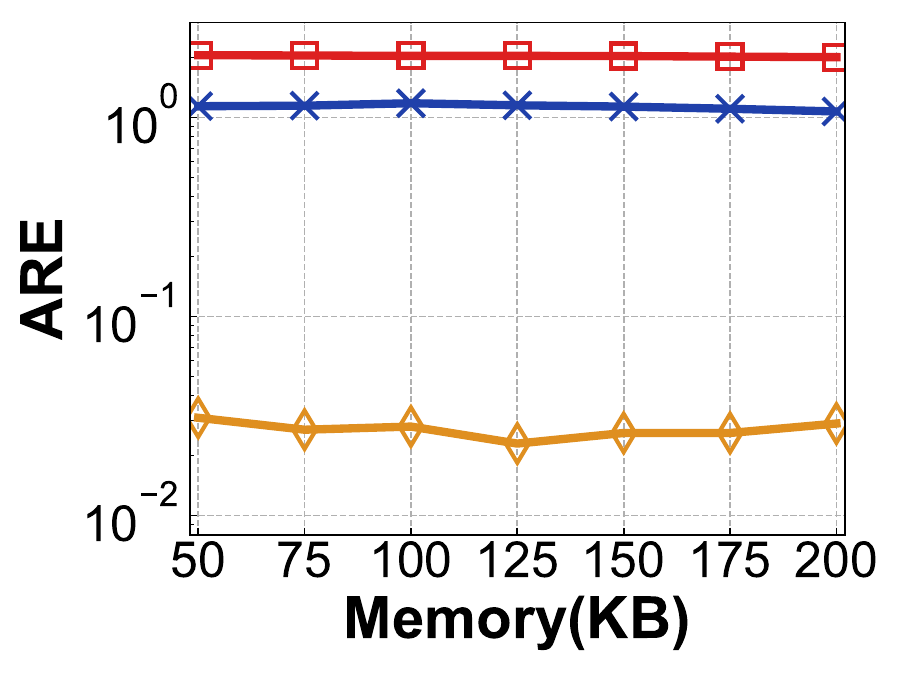}
        \caption{ARE,CAIDA}
        \label{are-caida}
        \vspace{0.3cm}
    \end{subfigure}
    \begin{subfigure}[b]{0.163\textwidth}
        \includegraphics[width=\textwidth]{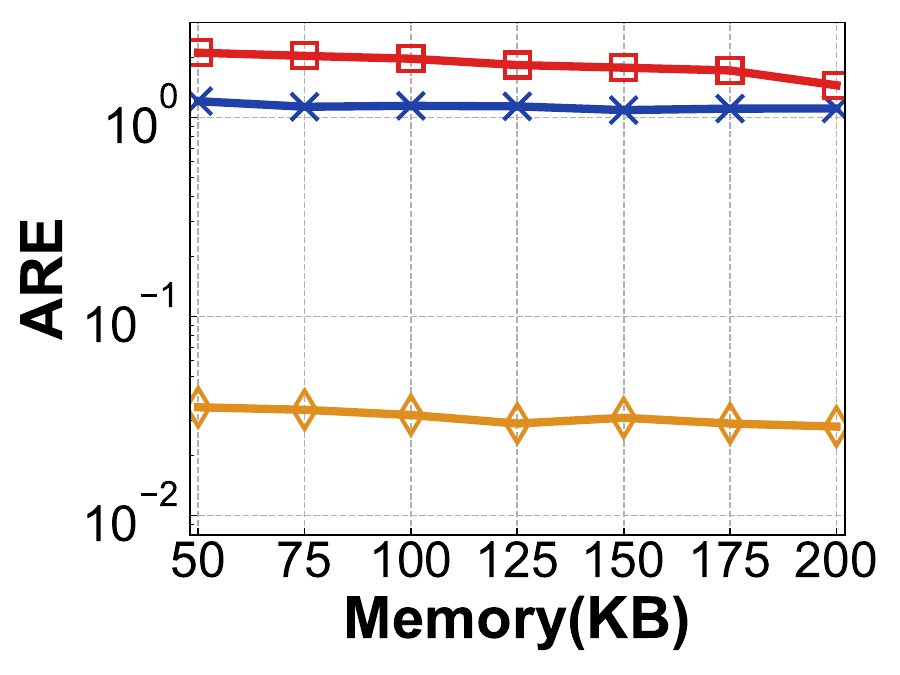}
        \caption{ARE,Campus}
        \label{are-campus}
        \vspace{0.3cm}
    \end{subfigure}
    \begin{subfigure}[b]{0.163\textwidth}
        \includegraphics[width=\textwidth]{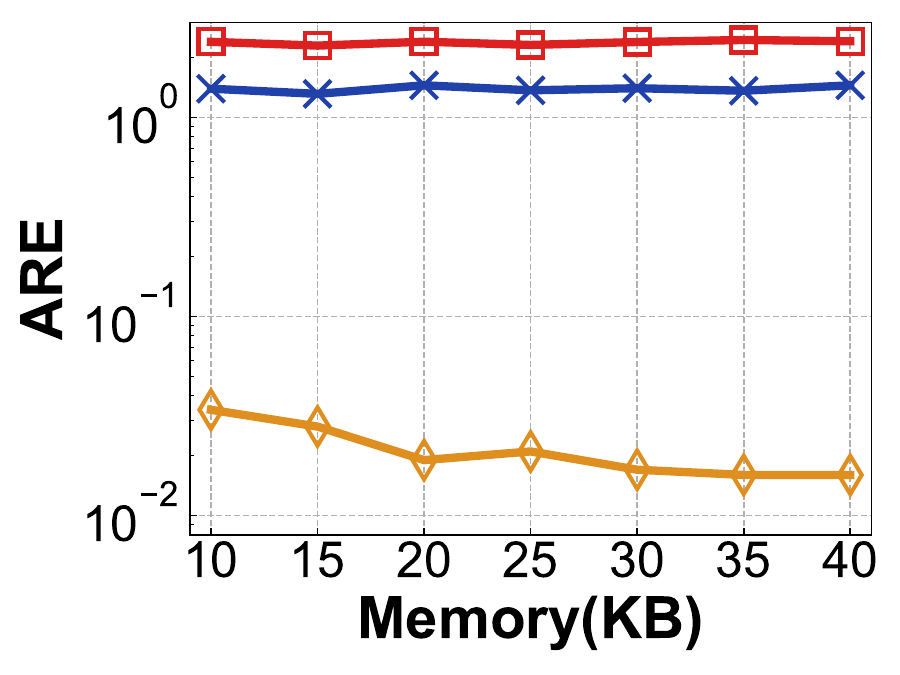}
        \caption{ARE,MAWI}
        \label{are-mawi}
        \vspace{0.3cm}
    \end{subfigure}   

\begin{subfigure}[b]{0.48\textwidth}
\centering
        \includegraphics[width=0.32\textwidth]{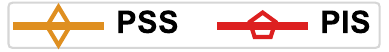}
        \caption*{}
        \vspace{-0.5cm}
        \label{}
    \end{subfigure}
    \begin{subfigure}[b]{0.48\textwidth}
\centering
        \includegraphics[width=0.57\textwidth]{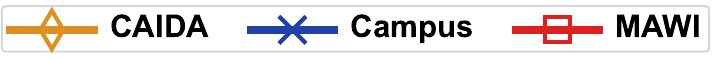}
        \caption*{}
        \vspace{-0.5cm}
        \label{}
    \end{subfigure}
   
\begin{subfigure}[b]{0.163\textwidth}
        \includegraphics[width=\textwidth]{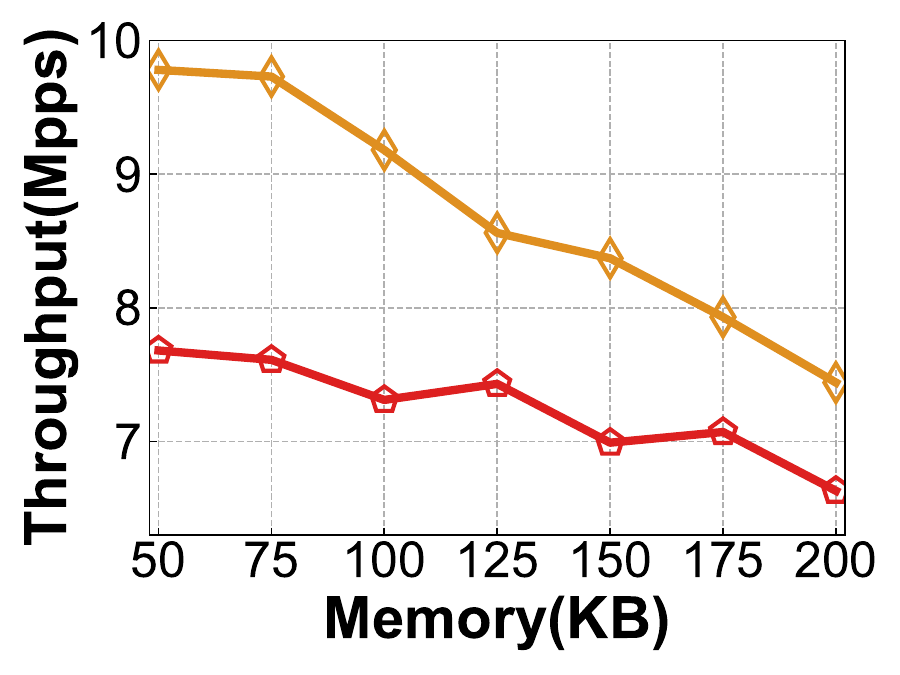}
        \caption{Tp,CAIDA}
        \label{tp_caida}
    \end{subfigure}
    \begin{subfigure}[b]{0.163\textwidth}
        \includegraphics[width=\textwidth]{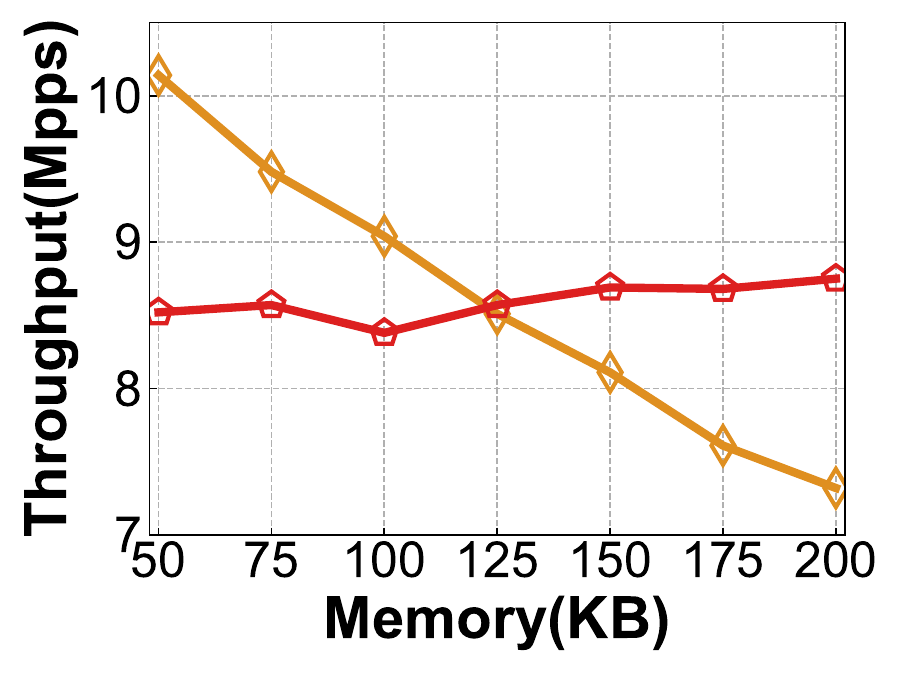}
        \caption{Tp,Campus}
        \label{tp_campus}
    \end{subfigure}
    \begin{subfigure}[b]{0.163\textwidth}
        \includegraphics[width=\textwidth]{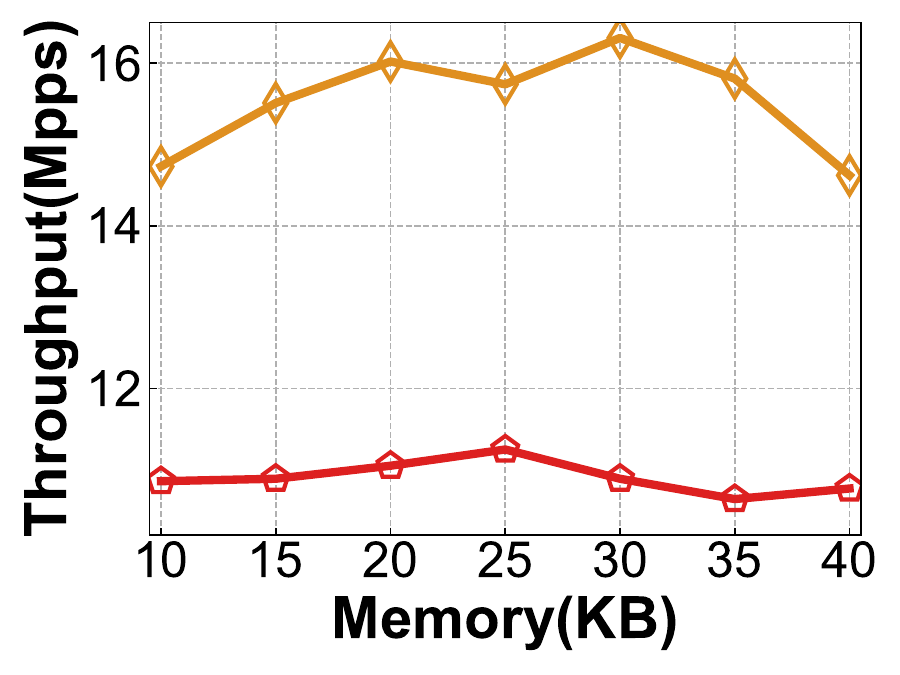}
        \caption{Tp,MAWI}
        \label{tp_mawi}
    \end{subfigure}   
    \begin{subfigure}[b]{0.163\textwidth}
        \includegraphics[width=\textwidth]{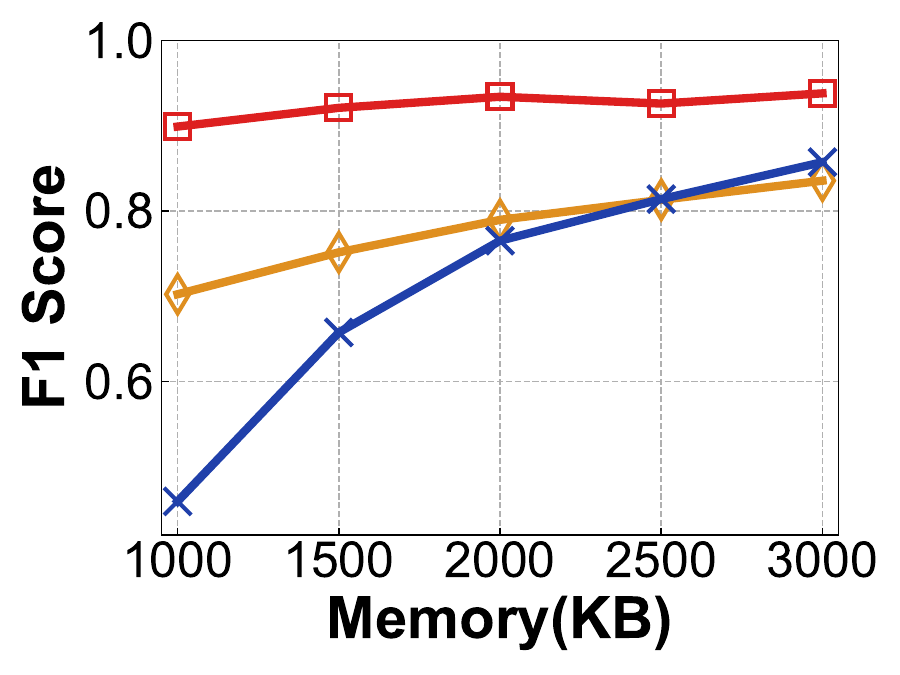}
        \caption{F1,Strawman}
        \label{Straw_f1}
    \end{subfigure}
    \begin{subfigure}[b]{0.163\textwidth}
        \includegraphics[width=\textwidth]{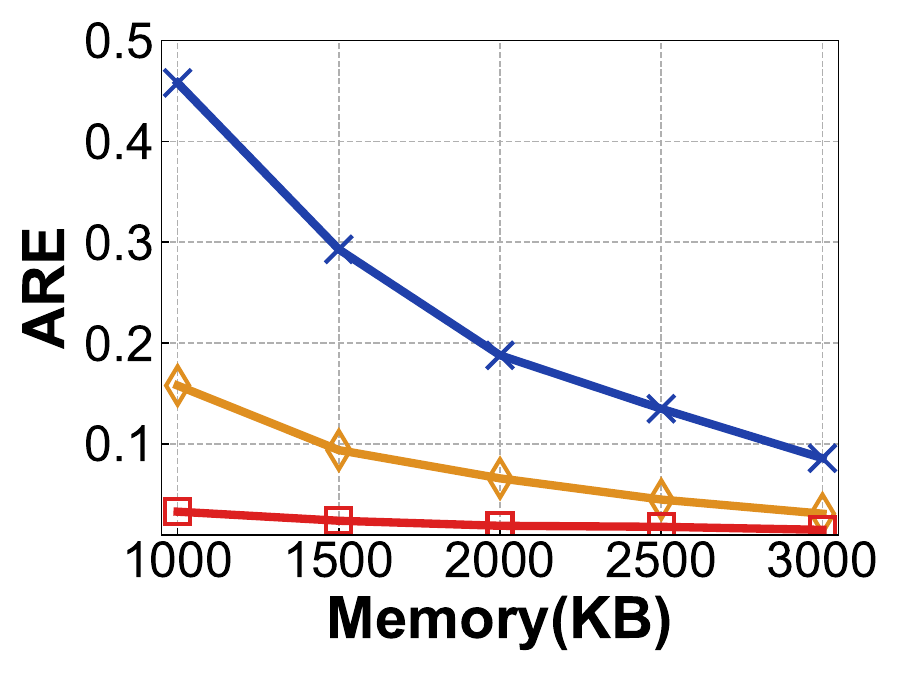}
        \caption{ARE,Strawman}
        \label{Straw_are}
    \end{subfigure}
    \begin{subfigure}[b]{0.163\textwidth}
        \includegraphics[width=\textwidth]{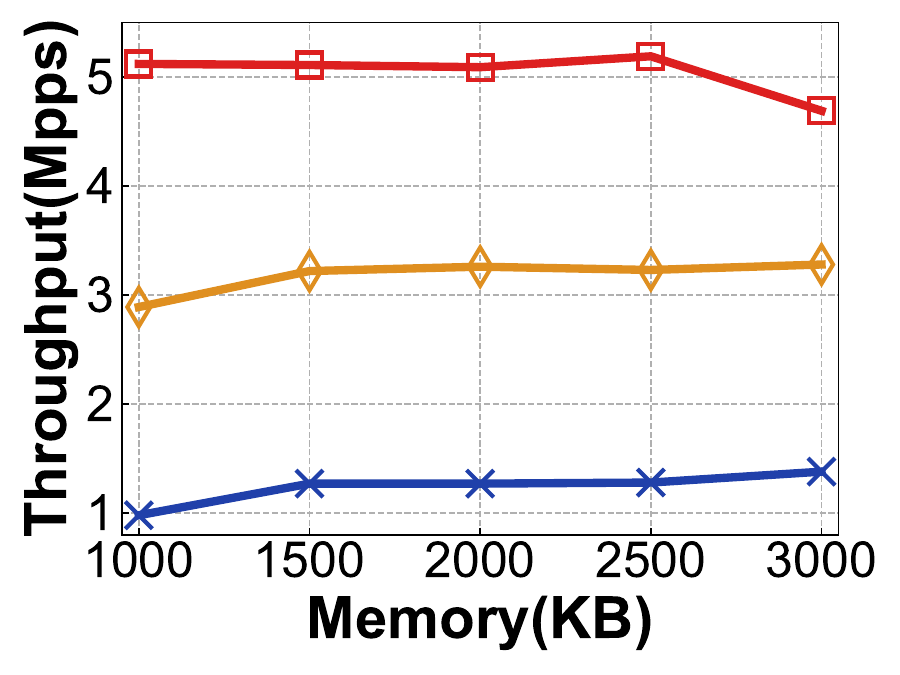}
        \caption{Tp,Strawman}
        \label{Straw_tp}
    \end{subfigure}   
    \caption{Comparison of performance.}
    \label{exp1}
    \vspace{-0.25cm}
\end{figure*}

    \subsection{Performance}

    In this experiment, we selected \( p_0 = 50 \), \( d_0 = 1.2 \), and \( Y = 32 \) as the testing parameters and compared the various metrics of PSSketch against other solutions. It is important to emphasize that the optimal parameter combination may vary depending on the dataset and application scenario. We cannot tailor the parameter selection specifically for the dataset we use. Thus, the parameters chosen in this experiment represent a relatively optimal level under the current dataset, while the potential performance ceiling is higher than the results we present.

    Figure \ref{exp1}(a), \ref{exp1}(b), \ref{exp1}(c) presents a comparison of the F1 Scores under varying memory constraints. In all three datasets, PSSketch significantly improved the accuracy of PS flow finding, with F1 Score raises to 1.80x-2.94x, 1.87x-2.73x, and 1.21x-1.93x, compared to PISketch, respectively. After modifying PISketch's definition of PS flow, the F1 Score still rises to 1.44x. This demonstrates that not only our model benefit from a more precise definition, but our data structure also exhibits clear advantages in filtering PS flows.

    Figure \ref{exp1}(d), \ref{exp1}(e), \ref{exp1}(f) displays the ARE under different memory conditions. Compared to PISketch, our method offers a 1-2 order of magnitude improvement. Specifically, under the higher load of the Campus dataset, the error decreases to 1.58\%-1.93\%, while in the lowest-load MAWI dataset, the error is reduced to 0.66\%-1.42\%. After adjusting the criterion for PISketch, we maintain at least a one-order-of-magnitude advantage in ARE, further supporting the high precision of PSSketch's statistical processing.

    Additionally, Figure \ref{exp1}(g), \ref{exp1}(h), \ref{exp1}(i) illustrates the throughput comparison between PSSketch and PISketch. Since PISketch-Density only differs in the filtering process of PS flows and does not alter the data processing, its throughput is identical to that of PISketch. Experimental results demonstrate that PSSketch achieves a significant advantage under medium to low loads across all memory constraints. Specifically, throughput in the CAIDA dataset reaches 1.28x, and in the MAWI dataset reaches 1.50x. Under higher loads, PSSketch outperforms PISketch in scenarios with strict memory limitations, achieving a throughput of 1.19x while being slightly slower than PISketch when memory is more abundant. Across the three datasets, we observe a decreasing trend in throughput as memory increases, particularly in higher-load conditions. Section 6.6 proposes an optimization to mitigate this issue, aiming for superior performance across all datasets, thereby surpassing PISketch.

    Lastly, Figure \ref{exp1}(j), \ref{exp1}(k), \ref{exp1}(l) presents the performance of the Strawman solution. Since its valid memory range differs by an order of magnitude from PSSketch and PISketch, its performance is reported separately. Within the 1MB-3MB memory range, Strawman achieves F1 Scores of 0.70-0.84, 0.46-0.86, and 0.90-0.94 with the three datasets, notably lower than PSSketch. The corresponding ARE values are 1.11x-6.87x, 2.46x-16.36x, and 0.54x-2.06x compared to PSSketch. Although Strawman attains a higher F1 Score than PISketch and its ARE is generally on the same order of magnitude as that of PSSketch, it requires more than ten times the memory, and its throughput is only 13.39\%–31.39\% of that of PSSketch and 15.77\%–45.19\% of that of PISketch.

    \subsection{SIMD Optimization}

    To curb the trend of throughput decreasing with increasing memory, we utilize the SIMD technology in the process of traversing the Competition Layer, which involves two main operations:

    \textit{Window Flag Reset:} At the beginning of each window, we need to reset the flag \( W \) of all entries in the \( M \) buckets to zero. By using SIMD, the reset for each bucket can be completed within 1-2 memory accesses, significantly reducing the memory access time.

    \textit{Bucket Search:} For each flow \( e \), the insertion process requires searching the corresponding bucket \( BK[m] \). As discussed in Section 4.3, we previously optimized the three-loop process into a single loop. With the implementation of SIMD, this loop can be further reduced to 1-2 memory accesses, enhancing the search efficiency.

    As shown in Figure \ref{exp2}, the throughput of PSSketch on three different datasets improves up to 1.68x, 1.48x, 1.85x respectively after integrating SIMD. More importantly, we eliminate the trend of decreasing throughput as memory size increases. That means, PSSketch demonstrates superior speed to the original PSSketch and strawman across all memory constraints and load intensities.

\begin{figure}[htbp]
    \centering
    \begin{subfigure}[b]{0.2\textwidth}
        \includegraphics[width=1\textwidth]{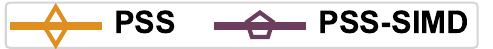}
        \caption*{}
        \vspace{-0.5cm}
        \label{}
    \end{subfigure}
    
    \begin{subfigure}[b]{0.156\textwidth}
        \includegraphics[width=\textwidth]{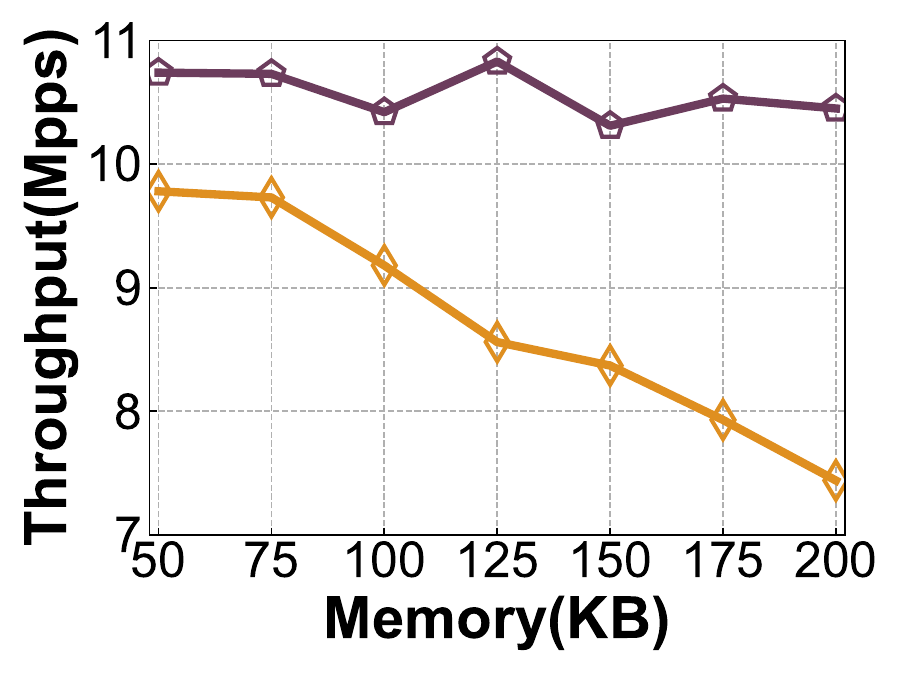}
        \caption{Tp,CAIDA}
        \label{simd_caida}
    \end{subfigure}
    \begin{subfigure}[b]{0.156\textwidth}
        \includegraphics[width=\textwidth]{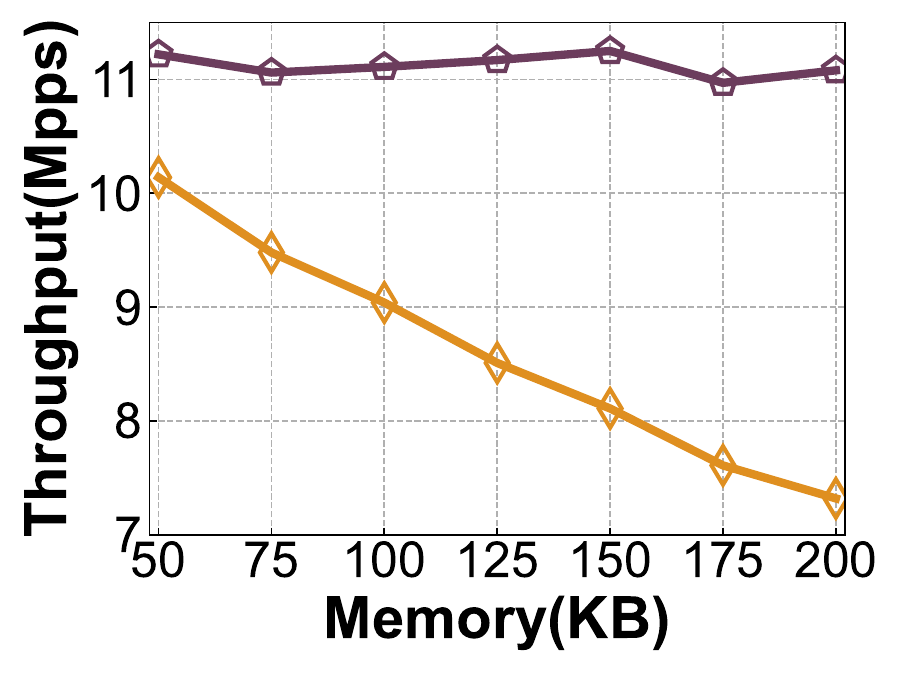}
        \caption{Tp,Campus}
        \label{simd_campus}
    \end{subfigure}
    \begin{subfigure}[b]{0.156\textwidth}
        \includegraphics[width=\textwidth]{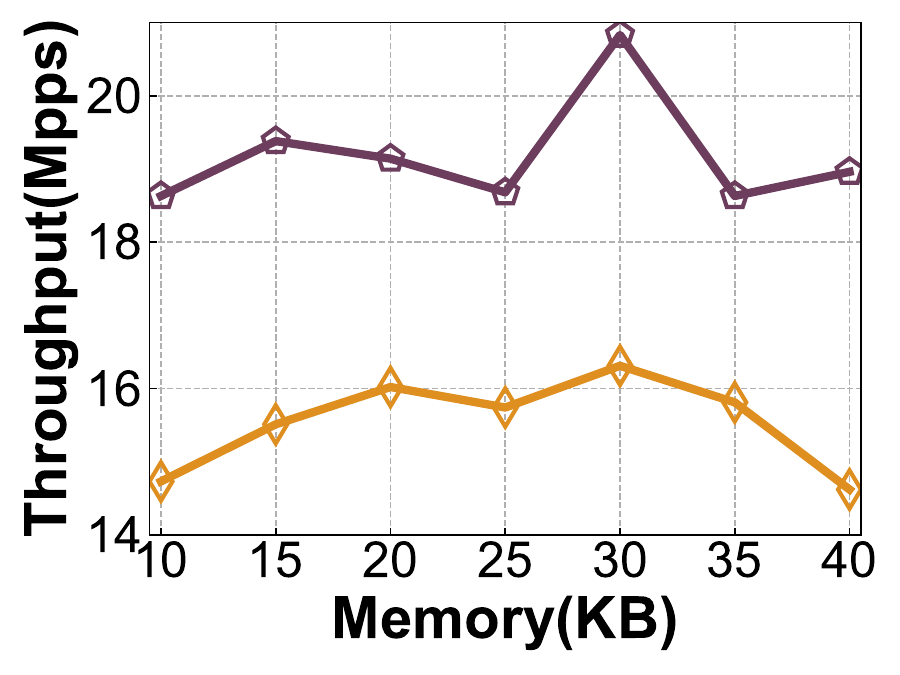}
        \caption{Tp,MAWI}
        \label{simd_mawi}
    \end{subfigure}
    \caption{The performance of PSSketch using SIMD.}
    \vspace{-0.35cm}
    \label{exp2}
\end{figure}

\section{Conclusion}
    Finding persistent sparse flow is essential for network threat detection. In this paper, we first observe the distribution characteristics of different datasets. Then, we introduce a more accurate criterion for PS flows and propose PSSketch, a highly precise data structure. PSSketch employs a dual-layer structure with variable-length bit-level counters; it records frequency and persistence in the Competition Layer and counts overflow times in the Protection Layer, significantly enhancing memory efficiency. We use isolation protection and probability replacement to effectively protect potential PS flows, making it hard to replace them with a large number of regular flows. Additionally, optimizations such as pruning, burst elimination, and one-time traversal are applied to ensure high throughput. Experiments demonstrate that, compared to combined solutions and SOTA method PISketch, we achieve up to 2.94x in F1 Score and a reduction in ARE by 1-2 orders of magnitude. Furthermore, our method has higher throughput across all scenarios. Overall, PSSketch provides a novel solution for finding persistent sparse flows with both high precision and efficiency.

\bibliographystyle{ACM-Reference-Format}
\bibliography{ref.bib}


\begin{thebibliography}{49}


\ifx \showCODEN    \undefined \def \showCODEN     #1{\unskip}     \fi
\ifx \showDOI      \undefined \def \showDOI       #1{#1}\fi
\ifx \showISBNx    \undefined \def \showISBNx     #1{\unskip}     \fi
\ifx \showISBNxiii \undefined \def \showISBNxiii  #1{\unskip}     \fi
\ifx \showISSN     \undefined \def \showISSN      #1{\unskip}     \fi
\ifx \showLCCN     \undefined \def \showLCCN      #1{\unskip}     \fi
\ifx \shownote     \undefined \def \shownote      #1{#1}          \fi
\ifx \showarticletitle \undefined \def \showarticletitle #1{#1}   \fi
\ifx \showURL      \undefined \def \showURL       {\relax}        \fi
\providecommand\bibfield[2]{#2}
\providecommand\bibinfo[2]{#2}
\providecommand\natexlab[1]{#1}
\providecommand\showeprint[2][]{arXiv:#2}

\bibitem[git(2024)]%
        {github}
 \bibinfo{year}{2024}\natexlab{}.
\newblock \bibinfo{title}{{PSSketch}}.
\newblock \bibinfo{howpublished}{\url{https://github.com/k091jf98-12dlhf8/PSS1012}}.
\newblock


\bibitem[BOB(2025)]%
        {BOB}
 \bibinfo{year}{2025}\natexlab{}.
\newblock \bibinfo{title}{{BOBHash}}.
\newblock \bibinfo{howpublished}{\url{http://burtleburtle.net/bob/hash/evahash.html}}.
\newblock


\bibitem[CAI(2025)]%
        {CAIDA}
 \bibinfo{year}{2025}\natexlab{}.
\newblock \bibinfo{title}{The {CAIDA} Anonymized Internet Traces}.
\newblock \bibinfo{howpublished}{\url{http://www.caida.org/data/overview/}}.
\newblock


\bibitem[MAW(2025)]%
        {MAWI}
 \bibinfo{year}{2025}\natexlab{}.
\newblock \bibinfo{title}{{MAWI} Working Group Traffic Archive}.
\newblock \bibinfo{howpublished}{\url{http://mawi.wide.ad.jp/mawi/}}.
\newblock


\bibitem[Aguilar-Saborit et~al\mbox{.}(2006)]%
        {DCF}
\bibfield{author}{\bibinfo{person}{Josep Aguilar-Saborit}, \bibinfo{person}{Pedro Trancoso}, \bibinfo{person}{Victor Muntes-Mulero}, {and} \bibinfo{person}{Josep~L. Larriba-Pey}.} \bibinfo{year}{2006}\natexlab{}.
\newblock \showarticletitle{Dynamic count filters}.
\newblock \bibinfo{journal}{\emph{Acm Sigmod Record}} \bibinfo{volume}{35}, \bibinfo{number}{1} (\bibinfo{year}{2006}), \bibinfo{pages}{26--32}.
\newblock


\bibitem[Bloom(1970)]%
        {BF}
\bibfield{author}{\bibinfo{person}{Burton~H. Bloom}.} \bibinfo{year}{1970}\natexlab{}.
\newblock \showarticletitle{Space/time trade-offs in hash coding with allowable errors}.
\newblock \bibinfo{journal}{\emph{Commun. ACM}} \bibinfo{volume}{13}, \bibinfo{number}{7} (\bibinfo{year}{1970}), \bibinfo{pages}{422--426}.
\newblock


\bibitem[Cohen and Matias(2003)]%
        {SBF}
\bibfield{author}{\bibinfo{person}{Saar Cohen} {and} \bibinfo{person}{Yossi Matias}.} \bibinfo{year}{2003}\natexlab{}.
\newblock \showarticletitle{Spectral bloom filters}. In \bibinfo{booktitle}{\emph{ACM SIGMOD}}.
\newblock


\bibitem[Cormode and Muthukrishnan(2005)]%
        {CMS}
\bibfield{author}{\bibinfo{person}{Graham Cormode} {and} \bibinfo{person}{Shan Muthukrishnan}.} \bibinfo{year}{2005}\natexlab{}.
\newblock \showarticletitle{An improved data stream summary: The count-min sketch and its applications}.
\newblock \bibinfo{journal}{\emph{Journal of Algorithms}} \bibinfo{volume}{55}, \bibinfo{number}{1} (\bibinfo{year}{2005}), \bibinfo{pages}{58--75}.
\newblock


\bibitem[Daly et~al\mbox{.}(2019)]%
        {PC3}
\bibfield{author}{\bibinfo{person}{James Daly}, \bibinfo{person}{Valerio Bruschi}, \bibinfo{person}{Leonardo Linguaglossa}, \bibinfo{person}{Salvatore Pontarelli}, \bibinfo{person}{Dario Rossi}, \bibinfo{person}{Jerome Tollet}, \bibinfo{person}{Eric Torng}, {and} \bibinfo{person}{Andrew Yourtchenko}.} \bibinfo{year}{2019}\natexlab{}.
\newblock \showarticletitle{Tuplemerge: Fast software packet processing for online packet classification}.
\newblock \bibinfo{journal}{\emph{IEEE/ACM Transactions on Networking}} \bibinfo{volume}{27}, \bibinfo{number}{4} (\bibinfo{year}{2019}), \bibinfo{pages}{1417--1431}.
\newblock


\bibitem[Fan et~al\mbox{.}(2013)]%
        {CF}
\bibfield{author}{\bibinfo{person}{Bin Fan}, \bibinfo{person}{David~G. Andersen}, {and} \bibinfo{person}{Michael Kaminsky}.} \bibinfo{year}{2013}\natexlab{}.
\newblock \showarticletitle{Cuckoo Filter: Better Than Bloom}.
\newblock \bibinfo{journal}{\emph{login Usenix Magazine}} \bibinfo{volume}{38}, \bibinfo{number}{4} (\bibinfo{year}{2013}).
\newblock
\urldef\tempurl%
\url{https://www.usenix.org/publications/login/august-2013-volume-38-number-4/cuckoo-filter-better-bloom}
\showURL{%
\tempurl}


\bibitem[Fan et~al\mbox{.}(2000)]%
        {CBF}
\bibfield{author}{\bibinfo{person}{Li Fan}, \bibinfo{person}{Pei Cao}, \bibinfo{person}{Jussara Almeida}, {and} \bibinfo{person}{Andrei~Z. Broder}.} \bibinfo{year}{2000}\natexlab{}.
\newblock \showarticletitle{Summary cache: A scalable wide-area web cache sharing protocol}.
\newblock \bibinfo{journal}{\emph{IEEE/ACM Transactions on Networking}} \bibinfo{volume}{8}, \bibinfo{number}{3} (\bibinfo{year}{2000}), \bibinfo{pages}{281--293}.
\newblock


\bibitem[Fan et~al\mbox{.}(2023)]%
        {PIS}
\bibfield{author}{\bibinfo{person}{Zhuochen Fan}, \bibinfo{person}{Zhoujing Hu}, \bibinfo{person}{Yuhan Wu}, \bibinfo{person}{Jiarui Guo}, \bibinfo{person}{Sha Wang}, \bibinfo{person}{Wenrui Liu}, \bibinfo{person}{Tong Yang}, \bibinfo{person}{Yaofeng Tu}, {and} \bibinfo{person}{Steve Uhlig}.} \bibinfo{year}{2023}\natexlab{}.
\newblock \showarticletitle{{PISketch}: Finding Persistent and Infrequent Flows}.
\newblock \bibinfo{journal}{\emph{IEEE/ACM Transactions on Networking}} \bibinfo{volume}{31}, \bibinfo{number}{6} (\bibinfo{year}{2023}), \bibinfo{pages}{3191--3206}.
\newblock


\bibitem[Goodrich and Mitzenmacher(2011)]%
        {IBF}
\bibfield{author}{\bibinfo{person}{Michael~T. Goodrich} {and} \bibinfo{person}{Michael Mitzenmacher}.} \bibinfo{year}{2011}\natexlab{}.
\newblock \showarticletitle{Invertible bloom lookup tables}. In \bibinfo{booktitle}{\emph{IEEE Allerton}}.
\newblock


\bibitem[Graf and Lemire(2020)]%
        {XF}
\bibfield{author}{\bibinfo{person}{Thomas~M. Graf} {and} \bibinfo{person}{Daniel Lemire}.} \bibinfo{year}{2020}\natexlab{}.
\newblock \showarticletitle{Xor filters: Faster and smaller than bloom and cuckoo filters}.
\newblock \bibinfo{journal}{\emph{Journal of Experimental Algorithmics}}  \bibinfo{volume}{25} (\bibinfo{year}{2020}), \bibinfo{pages}{1--16}.
\newblock


\bibitem[Holland et~al\mbox{.}(2021)]%
        {DPI1}
\bibfield{author}{\bibinfo{person}{Jordan Holland}, \bibinfo{person}{Paul Schmitt}, \bibinfo{person}{Nick Feamster}, {and} \bibinfo{person}{Prateek Mittal}.} \bibinfo{year}{2021}\natexlab{}.
\newblock \showarticletitle{New directions in automated traffic analysis}. In \bibinfo{booktitle}{\emph{ACM CCS}}.
\newblock


\bibitem[Hou et~al\mbox{.}(2024)]%
        {DPI5}
\bibfield{author}{\bibinfo{person}{Xiaoyang Hou}, \bibinfo{person}{Jian Liu}, \bibinfo{person}{Tianyu Tu}, \bibinfo{person}{Rui Zhang}, {and} \bibinfo{person}{Kui Ren}.} \bibinfo{year}{2024}\natexlab{}.
\newblock \showarticletitle{{PrivRE}: Regular Expression Matching for Encrypted Packet Inspection}. In \bibinfo{booktitle}{\emph{IEEE ICDCS}}.
\newblock


\bibitem[Huang et~al\mbox{.}(2022)]%
        {Chain}
\bibfield{author}{\bibinfo{person}{Jiawei Huang}, \bibinfo{person}{Wenlu Zhang}, \bibinfo{person}{Yijun Li}, \bibinfo{person}{Lin Li}, \bibinfo{person}{Zhaoyi Li}, \bibinfo{person}{Jin Ye}, {and} \bibinfo{person}{Jianxin Wang}.} \bibinfo{year}{2022}\natexlab{}.
\newblock \showarticletitle{{ChainSketch}: An efficient and accurate sketch for heavy flow detection}.
\newblock \bibinfo{journal}{\emph{IEEE/ACM Transactions on Networking}} \bibinfo{volume}{31}, \bibinfo{number}{2} (\bibinfo{year}{2022}), \bibinfo{pages}{738--753}.
\newblock


\bibitem[Krishnamurthy et~al\mbox{.}(2003)]%
        {KF}
\bibfield{author}{\bibinfo{person}{Balachander Krishnamurthy}, \bibinfo{person}{Subhabrata Sen}, \bibinfo{person}{Yin Zhang}, {and} \bibinfo{person}{Yan Chen}.} \bibinfo{year}{2003}\natexlab{}.
\newblock \showarticletitle{Sketch-based change detection: Methods, evaluation, and applications}. In \bibinfo{booktitle}{\emph{ACM/USENIX IMC}}.
\newblock


\bibitem[Lall et~al\mbox{.}(2006)]%
        {EE}
\bibfield{author}{\bibinfo{person}{Ashwin Lall}, \bibinfo{person}{Vyas Sekar}, \bibinfo{person}{Mitsunori Ogihara}, \bibinfo{person}{Jun Xu}, {and} \bibinfo{person}{Hui Zhang}.} \bibinfo{year}{2006}\natexlab{}.
\newblock \showarticletitle{Data streaming algorithms for estimating entropy of network traffic}. In \bibinfo{booktitle}{\emph{ACM SIGMETRICS}}.
\newblock


\bibitem[Li et~al\mbox{.}(2012)]%
        {RC}
\bibfield{author}{\bibinfo{person}{Tao Li}, \bibinfo{person}{Shigang Chen}, {and} \bibinfo{person}{Yibei Ling}.} \bibinfo{year}{2012}\natexlab{}.
\newblock \showarticletitle{Per-flow traffic measurement through randomized counter sharing}.
\newblock \bibinfo{journal}{\emph{IEEE/ACM Transactions on Networking}} \bibinfo{volume}{20}, \bibinfo{number}{5} (\bibinfo{year}{2012}), \bibinfo{pages}{1622--1634}.
\newblock


\bibitem[Li et~al\mbox{.}(2019)]%
        {PC4}
\bibfield{author}{\bibinfo{person}{Wenjun Li}, \bibinfo{person}{Dagang Li}, \bibinfo{person}{Yongjie Bai}, \bibinfo{person}{Wenxia Le}, {and} \bibinfo{person}{Hui Li}.} \bibinfo{year}{2019}\natexlab{}.
\newblock \showarticletitle{Memory-efficient recursive scheme for multi-field packet classification}.
\newblock \bibinfo{journal}{\emph{IET Communications}} \bibinfo{volume}{13}, \bibinfo{number}{9} (\bibinfo{year}{2019}), \bibinfo{pages}{1319--1325}.
\newblock


\bibitem[Li and Patras(2023)]%
        {TS}
\bibfield{author}{\bibinfo{person}{Weihe Li} {and} \bibinfo{person}{Paul Patras}.} \bibinfo{year}{2023}\natexlab{}.
\newblock \showarticletitle{{Tight-Sketch}: A high-performance sketch for heavy item-oriented data stream mining with limited memory size}. In \bibinfo{booktitle}{\emph{ACM CIKM}}.
\newblock


\bibitem[Li et~al\mbox{.}(2024)]%
        {PC5}
\bibfield{author}{\bibinfo{person}{Zhuo Li}, \bibinfo{person}{Tongtong Wang}, \bibinfo{person}{Jindian Liu}, \bibinfo{person}{Yu Zhang}, \bibinfo{person}{Tianxiang Ma}, {and} \bibinfo{person}{Kaihua Liu}.} \bibinfo{year}{2024}\natexlab{}.
\newblock \showarticletitle{{TuplePick}: A High Stability Packet Classification based on Neural Network}. In \bibinfo{booktitle}{\emph{IEEE WoWMoM}}.
\newblock


\bibitem[Liu et~al\mbox{.}(2016)]%
        {UM}
\bibfield{author}{\bibinfo{person}{Zaoxing Liu}, \bibinfo{person}{Antonis Manousis}, \bibinfo{person}{Gregory Vorsanger}, \bibinfo{person}{Vyas Sekar}, {and} \bibinfo{person}{Vladimir Braverman}.} \bibinfo{year}{2016}\natexlab{}.
\newblock \showarticletitle{One sketch to rule them all: Rethinking network flow monitoring with {UnivMon}}. In \bibinfo{booktitle}{\emph{ACM SIGCOMM}}.
\newblock


\bibitem[Lu et~al\mbox{.}(2008)]%
        {CB}
\bibfield{author}{\bibinfo{person}{Yi Lu}, \bibinfo{person}{Andrea Montanari}, \bibinfo{person}{Balaji Prabhakar}, \bibinfo{person}{Sarang Dharmapurikar}, {and} \bibinfo{person}{Abdul Kabbani}.} \bibinfo{year}{2008}\natexlab{}.
\newblock \showarticletitle{Counter braids: A novel counter architecture for per-flow measurement}. In \bibinfo{booktitle}{\emph{ACM SIGMETRICS}}.
\newblock


\bibitem[Luo et~al\mbox{.}(2023)]%
        {AF}
\bibfield{author}{\bibinfo{person}{Lailong Luo}, \bibinfo{person}{Pengtao Fu}, \bibinfo{person}{Shangsen Li}, \bibinfo{person}{Deke Guo}, \bibinfo{person}{Qianzhen Zhang}, {and} \bibinfo{person}{Huaimin Wang}.} \bibinfo{year}{2023}\natexlab{}.
\newblock \showarticletitle{Ark filter: A general and space-efficient sketch for network flow analysis}.
\newblock \bibinfo{journal}{\emph{IEEE/ACM Transactions on Networking}} \bibinfo{volume}{31}, \bibinfo{number}{6} (\bibinfo{year}{2023}), \bibinfo{pages}{2825--2839}.
\newblock


\bibitem[Miao et~al\mbox{.}(2022)]%
        {BS}
\bibfield{author}{\bibinfo{person}{Ruijie Miao}, \bibinfo{person}{Zheng Zhong}, \bibinfo{person}{Jiarui Guo}, \bibinfo{person}{Zikun Li}, \bibinfo{person}{Tong Yang}, {and} \bibinfo{person}{Bin Cui}.} \bibinfo{year}{2022}\natexlab{}.
\newblock \showarticletitle{{BurstSketch}: Finding Bursts in Data Streams}.
\newblock \bibinfo{journal}{\emph{IEEE Transactions on Knowledge and Data Engineering}} \bibinfo{volume}{35}, \bibinfo{number}{11} (\bibinfo{year}{2022}), \bibinfo{pages}{11126--11140}.
\newblock


\bibitem[Ramachandran et~al\mbox{.}(2008)]%
        {FS}
\bibfield{author}{\bibinfo{person}{Anirudh Ramachandran}, \bibinfo{person}{Srinivasan Seetharaman}, \bibinfo{person}{Nick Feamster}, {and} \bibinfo{person}{Vijay Vazirani}.} \bibinfo{year}{2008}\natexlab{}.
\newblock \showarticletitle{Fast monitoring of traffic subpopulations}. In \bibinfo{booktitle}{\emph{ACM SIGCOMM}}.
\newblock


\bibitem[Roy et~al\mbox{.}(2016)]%
        {AS}
\bibfield{author}{\bibinfo{person}{Pratanu Roy}, \bibinfo{person}{Arijit Khan}, {and} \bibinfo{person}{Gustavo Alonso}.} \bibinfo{year}{2016}\natexlab{}.
\newblock \showarticletitle{Augmented sketch: Faster and more accurate stream processing}. In \bibinfo{booktitle}{\emph{ACM SIGMOD}}.
\newblock


\bibitem[Schweller et~al\mbox{.}(2004)]%
        {RS}
\bibfield{author}{\bibinfo{person}{Robert Schweller}, \bibinfo{person}{Ashish Gupta}, \bibinfo{person}{Elliot Parsons}, {and} \bibinfo{person}{Yan Chen}.} \bibinfo{year}{2004}\natexlab{}.
\newblock \showarticletitle{Reversible sketches for efficient and accurate change detection over network data streams}. In \bibinfo{booktitle}{\emph{ACM/USENIX IMC}}.
\newblock


\bibitem[Shi et~al\mbox{.}(2024)]%
        {BM}
\bibfield{author}{\bibinfo{person}{Qilong Shi}, \bibinfo{person}{Chengjun Jia}, \bibinfo{person}{Wenjun Li}, \bibinfo{person}{Zaoxing Liu}, \bibinfo{person}{Tong Yang}, \bibinfo{person}{Jianan Ji}, \bibinfo{person}{Gaogang Xie}, \bibinfo{person}{Weizhe Zhang}, {and} \bibinfo{person}{Minlan Yu}.} \bibinfo{year}{2024}\natexlab{}.
\newblock \showarticletitle{{BitMatcher}: Bit-level counter adjustment for sketches}. In \bibinfo{booktitle}{\emph{IEEE ICDE}}.
\newblock


\bibitem[Shi et~al\mbox{.}(2023)]%
        {CC}
\bibfield{author}{\bibinfo{person}{Qilong Shi}, \bibinfo{person}{Yuchen Xu}, \bibinfo{person}{Jiuhua Qi}, \bibinfo{person}{Wenjun Li}, \bibinfo{person}{Tong Yang}, \bibinfo{person}{Yang Xu}, {and} \bibinfo{person}{Yi Wang}.} \bibinfo{year}{2023}\natexlab{}.
\newblock \showarticletitle{Cuckoo counter: Adaptive structure of counters for accurate frequency and top-k estimation}.
\newblock \bibinfo{journal}{\emph{IEEE/ACM Transactions on Networking}} \bibinfo{volume}{31}, \bibinfo{number}{4} (\bibinfo{year}{2023}), \bibinfo{pages}{1854--1869}.
\newblock


\bibitem[Tang et~al\mbox{.}(2019)]%
        {MV}
\bibfield{author}{\bibinfo{person}{Lu Tang}, \bibinfo{person}{Qun Huang}, {and} \bibinfo{person}{Patrick~P.C. Lee}.} \bibinfo{year}{2019}\natexlab{}.
\newblock \showarticletitle{{MV-Sketch}: A fast and compact invertible sketch for heavy flow detection in network data streams}. In \bibinfo{booktitle}{\emph{IEEE INFOCOM}}.
\newblock


\bibitem[Wang et~al\mbox{.}(2019)]%
        {DPI4}
\bibfield{author}{\bibinfo{person}{Xiang Wang}, \bibinfo{person}{Yang Hong}, \bibinfo{person}{Harry Chang}, \bibinfo{person}{KyoungSoo Park}, \bibinfo{person}{Geoff Langdale}, \bibinfo{person}{Jiayu Hu}, {and} \bibinfo{person}{Heqing Zhu}.} \bibinfo{year}{2019}\natexlab{}.
\newblock \showarticletitle{Hyperscan: A fast multi-pattern regex matcher for modern {CPUs}}. In \bibinfo{booktitle}{\emph{USENIX NSDI}}.
\newblock


\bibitem[Xin et~al\mbox{.}(2023)]%
        {PC2}
\bibfield{author}{\bibinfo{person}{Yao Xin}, \bibinfo{person}{Wenjun Li}, \bibinfo{person}{Chengjun Jia}, \bibinfo{person}{Xianfeng Li}, \bibinfo{person}{Yang Xu}, \bibinfo{person}{Bin Liu}, \bibinfo{person}{Zhihong Tian}, {and} \bibinfo{person}{Weizhe Zhang}.} \bibinfo{year}{2023}\natexlab{}.
\newblock \showarticletitle{Recursive Multi-Tree Construction With Efficient Rule Sifting for Packet Classification on {FPGA}}.
\newblock \bibinfo{journal}{\emph{IEEE/ACM Transactions on Networking}} (\bibinfo{year}{2023}).
\newblock


\bibitem[Xu et~al\mbox{.}(2023)]%
        {DPI2}
\bibfield{author}{\bibinfo{person}{Hao Xu}, \bibinfo{person}{Harry Chang}, \bibinfo{person}{Wenjun Zhu}, \bibinfo{person}{Yang Hong}, \bibinfo{person}{Geoff Langdale}, \bibinfo{person}{Kun Qiu}, {and} \bibinfo{person}{Jin Zhao}.} \bibinfo{year}{2023}\natexlab{}.
\newblock \showarticletitle{Harry: A Scalable {SIMD-based} Multi-literal Pattern Matching Engine for Deep Packet Inspection}. In \bibinfo{booktitle}{\emph{IEEE INFOCOM}}.
\newblock


\bibitem[Yang et~al\mbox{.}(2018a)]%
        {HG}
\bibfield{author}{\bibinfo{person}{Tong Yang}, \bibinfo{person}{Junzhi Gong}, \bibinfo{person}{Haowei Zhang}, \bibinfo{person}{Lei Zou}, \bibinfo{person}{Lei Shi}, {and} \bibinfo{person}{Xiaoming Li}.} \bibinfo{year}{2018}\natexlab{a}.
\newblock \showarticletitle{{HeavyGuardian}: Separate and guard hot items in data streams}. In \bibinfo{booktitle}{\emph{ACM SIGKDD}}.
\newblock


\bibitem[Yang et~al\mbox{.}(2018b)]%
        {ES}
\bibfield{author}{\bibinfo{person}{Tong Yang}, \bibinfo{person}{Jie Jiang}, \bibinfo{person}{Peng Liu}, \bibinfo{person}{Qun Huang}, \bibinfo{person}{Junzhi Gong}, \bibinfo{person}{Yang Zhou}, \bibinfo{person}{Rui Miao}, \bibinfo{person}{Xiaoming Li}, {and} \bibinfo{person}{Steve Uhlig}.} \bibinfo{year}{2018}\natexlab{b}.
\newblock \showarticletitle{Elastic sketch: Adaptive and fast network-wide measurements}. In \bibinfo{booktitle}{\emph{ACM SIGCOMM}}.
\newblock


\bibitem[Yang et~al\mbox{.}(2016)]%
        {Shift}
\bibfield{author}{\bibinfo{person}{Tong Yang}, \bibinfo{person}{Alex~X Liu}, \bibinfo{person}{Muhammad Shahzad}, \bibinfo{person}{Yuankun Zhong}, \bibinfo{person}{Qiaobin Fu}, \bibinfo{person}{Zi Li}, \bibinfo{person}{Gaogang Xie}, {and} \bibinfo{person}{Xiaoming Li}.} \bibinfo{year}{2016}\natexlab{}.
\newblock \showarticletitle{A shifting bloom filter framework for set queries}.
\newblock \bibinfo{journal}{\emph{Proceedings of the VLDB Endowment}} \bibinfo{volume}{9}, \bibinfo{number}{5} (\bibinfo{year}{2016}), \bibinfo{pages}{408--419}.
\newblock


\bibitem[Yang et~al\mbox{.}(2017a)]%
        {SFS}
\bibfield{author}{\bibinfo{person}{Tong Yang}, \bibinfo{person}{Lingtong Liu}, \bibinfo{person}{Yibo Yan}, \bibinfo{person}{Muhammad Shahzad}, \bibinfo{person}{Yulong Shen}, \bibinfo{person}{Xiaoming Li}, \bibinfo{person}{Bin Cui}, {and} \bibinfo{person}{Gaogang Xie}.} \bibinfo{year}{2017}\natexlab{a}.
\newblock \showarticletitle{{SF-sketch}: A fast, accurate, and memory efficient data structure to store frequencies of data items}. In \bibinfo{booktitle}{\emph{IEEE ICDE}}.
\newblock


\bibitem[Yang et~al\mbox{.}(2019)]%
        {HK}
\bibfield{author}{\bibinfo{person}{Tong Yang}, \bibinfo{person}{Haowei Zhang}, \bibinfo{person}{Jinyang Li}, \bibinfo{person}{Junzhi Gong}, \bibinfo{person}{Steve Uhlig}, \bibinfo{person}{Shigang Chen}, {and} \bibinfo{person}{Xiaoming Li}.} \bibinfo{year}{2019}\natexlab{}.
\newblock \showarticletitle{{HeavyKeeper}: An accurate algorithm for finding Top-$ k $ elephant flows}.
\newblock \bibinfo{journal}{\emph{IEEE/ACM Transactions on Networking}} \bibinfo{volume}{27}, \bibinfo{number}{5} (\bibinfo{year}{2019}), \bibinfo{pages}{1845--1858}.
\newblock


\bibitem[Yang et~al\mbox{.}(2017b)]%
        {PS}
\bibfield{author}{\bibinfo{person}{Tong Yang}, \bibinfo{person}{Yang Zhou}, \bibinfo{person}{Hao Jin}, \bibinfo{person}{Shigang Chen}, {and} \bibinfo{person}{Xiaoming Li}.} \bibinfo{year}{2017}\natexlab{b}.
\newblock \showarticletitle{Pyramid sketch: A sketch framework for frequency estimation of data streams}.
\newblock \bibinfo{journal}{\emph{Proceedings of the VLDB Endowment}} \bibinfo{volume}{10}, \bibinfo{number}{11} (\bibinfo{year}{2017}), \bibinfo{pages}{1442--1453}.
\newblock


\bibitem[Ye et~al\mbox{.}(2022)]%
        {UAS}
\bibfield{author}{\bibinfo{person}{Jin Ye}, \bibinfo{person}{Lin Li}, \bibinfo{person}{Wenlu Zhang}, \bibinfo{person}{Guihao Chen}, \bibinfo{person}{Yuanchao Shan}, \bibinfo{person}{Yijun Li}, \bibinfo{person}{Weihe Li}, {and} \bibinfo{person}{Jiawei Huang}.} \bibinfo{year}{2022}\natexlab{}.
\newblock \showarticletitle{{UA-Sketch}: An accurate approach to detect heavy flow based on uninterrupted arrival}. In \bibinfo{booktitle}{\emph{ACM ICPP}}.
\newblock


\bibitem[Zhang et~al\mbox{.}(2020)]%
        {OOS}
\bibfield{author}{\bibinfo{person}{Yinda Zhang}, \bibinfo{person}{Jinyang Li}, \bibinfo{person}{Yutian Lei}, \bibinfo{person}{Tong Yang}, \bibinfo{person}{Zhetao Li}, \bibinfo{person}{Gong Zhang}, {and} \bibinfo{person}{Bin Cui}.} \bibinfo{year}{2020}\natexlab{}.
\newblock \showarticletitle{On-off sketch: A fast and accurate sketch on persistence}.
\newblock \bibinfo{journal}{\emph{Proceedings of the VLDB Endowment}} \bibinfo{volume}{14}, \bibinfo{number}{2} (\bibinfo{year}{2020}), \bibinfo{pages}{128--140}.
\newblock


\bibitem[Zhao et~al\mbox{.}(2021)]%
        {DHS}
\bibfield{author}{\bibinfo{person}{Bohan Zhao}, \bibinfo{person}{Xiang Li}, \bibinfo{person}{Boyu Tian}, \bibinfo{person}{Zhiyu Mei}, {and} \bibinfo{person}{Wenfei Wu}.} \bibinfo{year}{2021}\natexlab{}.
\newblock \showarticletitle{{DHS}: Adaptive memory layout organization of sketch slots for fast and accurate data stream processing}. In \bibinfo{booktitle}{\emph{ACM SIGKDD}}.
\newblock


\bibitem[Zhong et~al\mbox{.}(2023)]%
        {DPI3}
\bibfield{author}{\bibinfo{person}{Jincheng Zhong}, \bibinfo{person}{Shuhui Chen}, {and} \bibinfo{person}{Biao Han}.} \bibinfo{year}{2023}\natexlab{}.
\newblock \showarticletitle{{FPGA-CPU} Architecture Accelerated Regular Expression Matching With Fast Preprocessing}.
\newblock \bibinfo{journal}{\emph{Comput. J.}} \bibinfo{volume}{66}, \bibinfo{number}{12} (\bibinfo{year}{2023}), \bibinfo{pages}{2928--2947}.
\newblock


\bibitem[Zhong et~al\mbox{.}(2022)]%
        {PC1}
\bibfield{author}{\bibinfo{person}{Jincheng Zhong}, \bibinfo{person}{Ziling Wei}, \bibinfo{person}{Shuang Zhao}, {and} \bibinfo{person}{Shuhui Chen}.} \bibinfo{year}{2022}\natexlab{}.
\newblock \showarticletitle{{TupleTree}: A high-performance packet classification algorithm supporting fast rule-set updates}.
\newblock \bibinfo{journal}{\emph{IEEE/ACM Transactions on Networking}} \bibinfo{volume}{31}, \bibinfo{number}{5} (\bibinfo{year}{2022}), \bibinfo{pages}{2027--2041}.
\newblock


\bibitem[Zhou et~al\mbox{.}(2017)]%
        {OMS}
\bibfield{author}{\bibinfo{person}{Yang Zhou}, \bibinfo{person}{Peng Liu}, \bibinfo{person}{Hao Jin}, \bibinfo{person}{Tong Yang}, \bibinfo{person}{Shoujiang Dang}, {and} \bibinfo{person}{Xiaoming Li}.} \bibinfo{year}{2017}\natexlab{}.
\newblock \showarticletitle{One memory access sketch: A more accurate and faster sketch for per-flow measurement}. In \bibinfo{booktitle}{\emph{IEEE GLOBECOM}}.
\newblock


\bibitem[Zhou et~al\mbox{.}(2018)]%
        {Cold}
\bibfield{author}{\bibinfo{person}{Yang Zhou}, \bibinfo{person}{Tong Yang}, \bibinfo{person}{Jie Jiang}, \bibinfo{person}{Bin Cui}, \bibinfo{person}{Minlan Yu}, \bibinfo{person}{Xiaoming Li}, {and} \bibinfo{person}{Steve Uhlig}.} \bibinfo{year}{2018}\natexlab{}.
\newblock \showarticletitle{Cold filter: A meta-framework for faster and more accurate stream processing}. In \bibinfo{booktitle}{\emph{ACM SIGMOD}}.
\newblock


\end{thebibliography}

\clearpage

\appendix

\section{Detail version of definitions}

\textbf{Flow:} Suppose $E$ is an item set containing all possible combinations of meta-information for a network packet. To simplify the problem, we assume that each element contains only the ID of a flow, and denote it as $e$. A data flow $S$ is a multiset of $N$ elements $\langle e_1,e_2,...,e_N \rangle (e_i \in E)$.  

    \textbf{Frequency:} In a data flow $S$, the frequency of an element $e_i \in S$ refers to its multiplicity, denoted as $f_i$. That means $e_i$ appears $f_i$ times in the $S$.

    \textbf{Time Window:} Given a data flow $S=\{ e_1,e_2,e_3,…,e_N \} $ and a window size $t$, the elements of $S$ can be partitioned into multiple subsets, where $T_0=\{ e_1,e_2,…,e_t\}$, $T_1=\{e_{t+1},e_{t+2},…,e_{2t} \} $, and so forth. Each subset $T_k$ contains exactly $t$ elements, except for the last subset, which contains no more than $t$ elements.(The partitioning is defined as a collection of $multisets$, where elements can appear multiple times in a particular window in the data flow.) An element $e_i$ belongs to the $k^{th}$ window if $e_i \in T_k$, or we say the element $e_i$ appears in the $k^{th}$ window of the data flow. Mathematically, the partitioning is defined as:

\begin{equation}
T_k = \{e_{kt+1}, e_{kt+2}, \dots, e_{(k+1)t}\} \quad \text{for} \quad 0 \leq k < \left\lceil \frac{n}{t} \right\rceil.
\end{equation}
    
    \textbf{Persistence:} Let $e_i$ be an element in a data flow $T=\langle T_0,T_1,...\rangle$ which is partitioned by time windows. There exists a subset $T_i=\langle T_{i0},T_{i1},...\rangle$ such that $e_i \in T_{i0}$, $e_i \in T_{i1},...$ while $e_i$ does not appear in any time window in set $T\backslash T_i$. We denote $|T_i|$, the number of elements in $T_i$, as the persistence of $e_i$. That indicates the number of distinct time windows where $e_i$ appears at least once.

    \textbf{Density:} Given a continuous sequence of \( \tau \) windows, if the element \( e_i \) exists in at least one window among \( T_m, T_{m+1}, \dots, T_{m+\tau-1} \), its density is defined as:
    
\begin{equation}
D^\tau_i = \frac{f^\tau_i}{p^\tau_i},
\end{equation}

where \( p^\tau_i \) denotes the persistence of element \( e_i \) over the \( \tau \) windows, representing the number of windows in which \( e_i \) appears, with \( p^\tau_i > 0 \), and \( f^\tau_i \) represents the frequency of element \( e_i \) in these \( \tau \) windows, where \( f^\tau_i \geq p^\tau_i \). It reflects the average access intensity of the element \( e_i \) within the data flow over the specified \( \tau \) windows.

\section{Details of mathematical analysis}
\label{Details of mathematical analysis}

The symbols used in this section are listed in Table \ref{symbols}.

\begin{table}[]
    \centering
    \begin{tabular}{cc}
        \toprule
        \textbf{Symbol} & \textbf{Meaning} \\
        \midrule
        \( e \) & An arbitrary flow in \( S \) \\
        \( \lambda \) & Poisson parameter of \( e \) \\
        \( n_i \) & The number of times the flow \( e \) appears in the \( i^{th} \) window \\
        \( f_i \) & The frequency of \( e \) in the first \( i \) windows \\
        \( p_i \) & The persistence of \( e \) in the first \( i \) windows \\
        \( d_i \) & The density of \( e \) in the first \( i \) windows \\
        \( \hat{d_i} \) & The estimated density of \( e \) in the first \( i \) windows \\
        \( K_t \) & The last time flow \( e \) was kicked, occurring in the \( t^{th} \) window \\
        \( L_c^{FP} \) & Bit-width for storing flow fingerprints \\
        \( L_c^f \) & Bit-width for storing flow frequency counters \\
        \( L_c^p \) & Bit-width for storing flow persistence counters \\
        \( L_c^{flag} \) & Bit-width for storing flow status flags \\
        \( L_c^{ID} \) & Bit-width for storing flow identifiers \\
        \( L_c^{f_{of}} \) & Bit-width for storing frequency counter overflows \\
        \( L_c^{p_{of}} \) & Bit-width for storing persistence counter overflows \\
        \( L_c^W \) & Bit-width for weight counters \\
        \( R \) & Number of cells in the protection layer \\
        \( X \) & Number of buckets in the competition layer \\
        \( Y \) & Number of cells per bucket in the competition layer \\
        \( f_{max} \) & Maximum frequency of a flow storable in PSSketch \\
        \( p_{max} \) & Maximum persistence of a flow storable in PSSketch \\
        \( f_{max}^{PISketch} \) & Maximum frequency of a flow storable in PISketch \\
        \( p_{max}^{PISketch} \) & Maximum persistence of a flow storable in PISketch \\
        \bottomrule
    \end{tabular}
    \caption{Symbols used in \ref{mathematical analysis}}
    \label{symbols}
\end{table}

\subsection{Property of the density of a Flow}

Based on the given assumption, for a specific flow \( e \) (with the corresponding $Poisson$ distribution parameter \( \lambda \)), let \( n_i \) represent the times that flow $e$ appears in the \( i^{th} \) window.

\begin{theorem}
    The expectation and variance of frequency, persistence, and density of \(e\) after the \( i^{th} \) window is given by the following equation:
    \begin{equation}
    \begin{split}
        \mathbb{E}[f_i]&=i\lambda\\
        \mathbb{E}[p_i]&=i\cdot\left(1-e^{-\lambda}\right)\\
        \mathbb{E}[d_i]&=\frac{\lambda}{\left(1-e^{-\lambda}\right)}.
    \end{split}
    \end{equation}
        \begin{equation}
    \begin{split}
        \mathbb{VAR}[f_i]&=i\lambda\\
        \mathbb{VAR}[p_i]&=i\cdot e^{-\lambda}\cdot\left(1-e^{-\lambda}\right)\\
        \mathbb{VAR}[d_i]&\leq \frac{\lambda^2}{\left(1-e^{-\lambda}\right)^2}+\frac{\lambda+\lambda^2}{\left(1-e^{-\lambda}\right)}<\infty.
    \end{split}
    \end{equation}
\end{theorem}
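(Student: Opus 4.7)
The plan is to exploit the i.i.d.\ assumption across time windows. Writing $n_j$ for the count in window $j$ (so $n_j \sim \mathrm{Poisson}(\lambda)$ independently), we have $f_i = \sum_{j=1}^{i} n_j$ and $p_i = \sum_{j=1}^{i} \mathbf{1}[n_j \geq 1]$. The claims for $f_i$ and $p_i$ are then immediate: a sum of independent Poissons is Poisson with parameter $i\lambda$, giving $\mathbb{E}[f_i] = \mathbb{VAR}[f_i] = i\lambda$; and each indicator $\mathbf{1}[n_j \geq 1]$ is Bernoulli with success probability $1 - e^{-\lambda}$ since $\mathbb{P}(n_j = 0) = e^{-\lambda}$, so $p_i$ is Binomial$(i, 1-e^{-\lambda})$ with the stated mean and variance.

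The density $d_i = f_i / p_i$ is the genuine obstacle because it is a ratio, and linearity of expectation does not apply directly. My approach is to condition on $p_i$: given $p_i = k \geq 1$ together with the pattern of nonzero windows, the $k$ nonzero counts are i.i.d.\ copies of the truncated variable $M := (N \mid N \geq 1)$ with $N \sim \mathrm{Poisson}(\lambda)$. A short computation from the Poisson pmf yields
\[
\mathbb{E}[M] = \frac{\lambda}{1-e^{-\lambda}}, \qquad \mathbb{E}[M^2] = \frac{\lambda + \lambda^2}{1-e^{-\lambda}}.
\]
Conditionally, $d_i = (M_1 + \cdots + M_k)/k$ is a sample mean, so $\mathbb{E}[d_i \mid p_i = k] = \mathbb{E}[M]$ is constant in $k$, and the tower property gives $\mathbb{E}[d_i] = \lambda/(1-e^{-\lambda})$.

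For the variance bound I would use the crude inequality $\mathbb{VAR}[d_i] \leq \mathbb{E}[d_i^2]$. Conditionally we have $\mathbb{E}[d_i^2 \mid p_i = k] = \mathbb{VAR}[M]/k + (\mathbb{E}[M])^2$; averaging over $p_i$, then applying $1/p_i \leq 1$ and $\mathbb{VAR}[M] \leq \mathbb{E}[M^2]$, delivers exactly $(\lambda+\lambda^2)/(1-e^{-\lambda}) + \lambda^2/(1-e^{-\lambda})^2$, which is the stated bound. The main subtlety to keep in mind is that $d_i$ is undefined when $p_i = 0$, so every statement about $d_i$ must be read as implicitly conditional on $p_i \geq 1$; this is harmless in context because density is only meaningful for flows that have actually appeared.
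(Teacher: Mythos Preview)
Your proposal is correct and follows essentially the same route as the paper: both condition on $p_i$ (equivalently on the set of nonzero windows), recognize that the nonzero counts are i.i.d.\ copies of the zero-truncated Poisson, and then bound $\mathbb{E}[1/p_i]\le 1$ to control $\mathbb{E}[d_i^2]$. Your explicit introduction of $M=(N\mid N\ge 1)$ and the sample-mean viewpoint is a slightly cleaner packaging of exactly the computation the paper carries out by expanding the square, and your remark about implicitly conditioning on $p_i\ge 1$ is a point the paper leaves tacit.
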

\begin{proof}
    \begin{equation}
\mathbb{P}[n_i=k]=\frac{\lambda^k}{k!}e^{-\lambda}.
\end{equation}

\begin{equation}
    \mathbb{P}[n \geq 1] = 1 - \mathbb{P}[n_i=0] = 1 - e^{-\lambda}.
\end{equation}

In other words, the probability that flow \( e \) appears in each window is \( 1 - e^{-\lambda} \).

Given that there are \( i \) windows in total, the persistence \( p \) of the flow follows a binomial distribution:
\begin{equation}
    p \sim \text{Binomial}(i, 1 - e^{-\lambda}).
\end{equation}

\begin{equation}
    \mathbb{E}[f_i]=\mathbb{E}[\sum_{j=1}^{j=i}n_j]=\sum_{j=1}^{j=i}\mathbb{E}[n_j]=i\lambda.
\end{equation}

\begin{equation}
    \mathbb{E}[p_i]=\mathbb{E}[\sum_{j=1}^{j=i}I_{[n_j\geq 1]}]=\sum_{j=1}^{j=i}\mathbb{P}[n_j\geq 1]= i(1 - e^{-\lambda}).
\end{equation}

\begin{equation}
    \begin{split}
        \mathbb{E}[d_i]&=\mathbb{E}\left[\frac{\sum_{j=1}^{j=i}n_j}{\sum_{j=1}^{j=i}I_{[n_j\geq 1]}}\right]=\mathbb{E}\left[\mathbb{E}\left[\frac{\sum_{j=1}^{j=i}n_j}{\sum_{j=1}^{j=i}I_{[n_j\geq 1]}}|\sum_{j=1}^{j=i}I_{[n_j\geq 1]}\right]\right]\\
        &=\mathbb{E}\left[\mathbb{E}\left[\frac{\sum_{j=1}^{j=i}n_jI_{[n_j\geq 1]}}{\sum_{j=1}^{j=i}I_{[n_j\geq 1]}}|\sum_{j=1}^{j=i}I_{[n_j\geq 1]}\right]\right]=\frac{\lambda}{1 - e^{-\lambda}}.
    \end{split}
\end{equation}

    \begin{equation}
        \mathbb{VAR}[f_i]=\mathbb{VAR}[\sum_{j=1}^{j=i}n_j]=\sum_{j=1}^{j=i}\mathbb{VAR}[n_j]=i\lambda
    \end{equation}

    \begin{equation}
        \mathbb{VAR}[p_i]=\mathbb{VAR}[\sum_{j=1}^{j=i}I_{[n_j\geq 1]}]=\sum_{j=1}^{j=i}\mathbb{VAR}I_{[n_j\geq 1]}=i\cdot e^{-\lambda}\cdot\left(1-e^{-\lambda}\right)
    \end{equation}

    \begin{equation}
    \begin{split}
        \mathbb{VAR}[d_i]&=\mathbb{E}[d_i^2]-\mathbb{E}^2[d_i]=\mathbb{E}\left[ \left(\frac{\sum_{j=1}^{j=i}n_j}{\sum_{j=1}^{j=i}I_{[n_j\geq 1]}}\right)^2\right]-\frac{\lambda^2}{\left(1-e^{-\lambda}\right)^2}\\
        &=\mathbb{E}\left[\mathbb{E}\left[\left(\frac{\sum_{j=1}^{j=i}n_j}{\sum_{j=1}^{j=i}I_{[n_j\geq 1]}}\right)^2 |\sum_{j=1}^{j=i}I_{[n_j\geq 1]}\right]\right]-\frac{\lambda^2}{\left(1-e^{-\lambda}\right)^2}
    \end{split}
    \end{equation}

    \begin{equation}
        \begin{split}
            &\mathbb{E}\left[\left(\frac{\sum_{j=1}^{j=i}n_j}{\sum_{j=1}^{j=i}I_{[n_j\geq 1]}}\right)^2 |\sum_{j=1}^{j=i}I_{[n_j\geq 1]}\right]\\
            &=\left(\mathbb{E}[\sum_{j=1}^{j=i}n_j^2I_{[n_j\geq 1]}]+2\mathbb{E}[\sum_{j<k}]n_j n_k I_{[n_j\geq 1]} I_{[n_k\geq 1]} \right)/\left( \sum_{j=1}^{j=i}I_{[n_j\geq 1]}\right)^2\\
            &=\left( \frac{\lambda+\lambda^2}{1-e^{-\lambda}}\sum_{j=1}^{j=i}I_{[n_j\geq 1]}+\frac{\lambda^2}{\left(1-e^{-\lambda}\right)^2}\left(I_{[n_j\geq 1]}\right)\cdot\left(I_{[n_j\geq 1]}-1\right) \right)\\
            &/\left( \sum_{j=1}^{j=i}I_{[n_j\geq 1]}\right)^2\\
            &=\frac{\lambda^2}{\left(1-e^{-\lambda}\right)^2}+\left( \frac{\lambda+\lambda^2}{1-e^{-\lambda}}-\frac{\lambda^2}{\left(1-e^{-\lambda}\right)^2}\right)/\left( \sum_{j=1}^{j=i}I_{[n_j\geq 1]}\right)
        \end{split}
    \end{equation}

    \begin{equation}
    \begin{split}
        \mathbb{VAR}[d_i]
        &=\mathbb{E}\left[\frac{\lambda^2}{\left(1-e^{-\lambda}\right)^2}+\left( \frac{\lambda+\lambda^2}{1-e^{-\lambda}}-\frac{\lambda^2}{\left(1-e^{-\lambda}\right)^2}\right)/\left( \sum_{j=1}^{j=i}I_{[n_j\geq 1]}\right)\right]\\
        &=\frac{\lambda^2}{\left(1-e^{-\lambda}\right)^2}+\left( \frac{\lambda+\lambda^2}{1-e^{-\lambda}}-\frac{\lambda^2}{\left(1-e^{-\lambda}\right)^2}\right) \mathbb{E}\left[\frac{1}{\sum_{j=1}^{j=i}I_{[n_j\geq 1]}}\right]\\
        &\leq\frac{\lambda^2}{\left(1-e^{-\lambda}\right)^2}+\frac{\lambda+\lambda^2}{1-e^{-\lambda}}<\infty
    \end{split}
    \end{equation}

\end{proof}

According to \(L_p\) Convergence Theorem and 
Dominated Convergence Theorem, we can acknowledge that 
\begin{equation}
    \lim_{i\rightarrow \infty}\mathbb{E}[|d_i-\lambda|^2]=0,
\end{equation}
which demonstrates that \(d_i\) is \(L_2\) and almost surely converge to \(\lambda\).

Due to the constraint of \( p_0 \), the parameter \( \lambda \) of the $Poisson$ distribution in the PS flows under investigation should not be too small. However, considering the goal of identifying flows with low density, \( \lambda \) must also be kept within an appropriately limited range and should not be excessively large.

\subsection{Error Bound Analysis}

For PSSketch, the sources of error come from \textbf{Ejection Error}.

\subsubsection{Errors of Flow Ejection}

\begin{theorem}
 \( \hat{d_i} \) is an unbiased estimator of \( \lambda \), and \( \hat{d_i} \) \( L_2 \)-convergence as well as almost surely convergence to \( \lambda \). 

\end{theorem}
\begin{proof}
According to the structure of PSSketch \( \hat{d_i} \) represents the density of flow \( e \), counted from the last kicked window. Without loss of generality, we assume that the last time flow \( e \) was kicked occurred at the \( t \)-th window, denoted as event \( K_t \). Due to the stationary increment property of the \( d_i \) process, we can infer that
\begin{equation}
    \hat{d_i}|K_t=d_{i-t}\quad t=0, 1, 2, \ldots, i-1.
\end{equation}

\begin{equation}
    \mathbb{E}[\hat{d_i}]=\mathbb{E}[\mathbb{E}[\hat{d_i}|K_t]]=\mathbb{E}[\mathbb{E}[d_{i-t}|K_t]]=\frac{\lambda}{1 - e^{-\lambda}}.
\end{equation}

\begin{equation}
    \mathbb{VAR}[\hat{d_i}]=\mathbb{E}[\hat{d_i}^2]-\mathbb{E}^2[\hat{d_i}]=\mathbb{E}[\mathbb{E}[\hat{d_i}^2|K_t]]-\mathbb{E}^2[\hat{d_i}]<\infty
\end{equation}
    According to \(L_p\) Convergence Theorem and 
Dominated Convergence Theorem, we can acknowledge that 
\begin{equation}
    \lim_{i\rightarrow \infty}\mathbb{E}[|\hat{d_i}-\lambda|^2]=0,
\end{equation}
which demonstrates that \(\hat{d_i}\) is \(L_2\) and almost surely converge to \(\lambda\).
\end{proof}

\begin{theorem}
 \( \hat{d_i}-d_i \) is an unbiased estimator of \( 0 \), and \( \hat{d_i}-d_i \)  \( L_2 \)-convergence as well as almost surely convergence to \( 0 \). 
\end{theorem}

\begin{proof}
    \begin{equation}
        \mathbb{E}[\hat{d_i}-d_i]=\mathbb{E}[\mathbb{E}[\hat{d_i}|K_t]-d_i]=\mathbb{E}[\mathbb{E}[d_{i-t}]-d_i]=0
    \end{equation}
    \begin{equation}
        \mathbb{VAR}[\hat{d_i}-d_i]=\mathbb{E}[\left(\hat{d_i}-d_i\right)^2]=\mathbb{E}[\hat{d_i}^2]+\mathbb{E}[d_i^2]-2\lambda<\infty
    \end{equation}
    According to \(L_p\) Convergence Theorem and 
Dominated Convergence Theorem, we can acknowledge that 
    \begin{equation}
    \lim_{i\rightarrow \infty}\mathbb{E}[|\hat{d_i}-d_i|^2]=0,
\end{equation}
which demonstrates that \( \hat{d_i}-d_i\) is \(L_2\) and almost surely converge to \(0\).
\end{proof}

This demonstrates that, when the number of windows is sufficiently large, the error caused by ejection converges to zero. This means that even if a suspected PS flow is ejected, it will eventually be reported as a PS flow as long as the number of windows is large enough.

\subsection{Complexity analysis}

\subsubsection{Time Cost of PSSketch}

During the Insert phase, when a flow \( e \) is inserted into the competition layer, a hash function is applied once, with a time complexity of \( O(1) \). Similarly, when the flow is inserted into the protection layer, a hash function is applied once, with a time complexity of \( O(1) \). Therefore, the overall time complexity of the Insert phase is \( O(2) \).

\subsubsection{Space Cost of PSSketch}

\begin{theorem}
    The total storage space required by PSSketch is 
\begin{equation}
    XY \cdot (L_c^{FP} + L_c^{f} + L_c^{p} + L_c^{flag}) + R \cdot (L_c^{ID} + L_c^{f_{of}} + L_c^{p_{of}}),
\end{equation}
where \( R \ll X \cdot Y \).
\end{theorem}
\begin{proof}
    In the Competition Layer, there are \( X \) buckets, each containing \( Y \) cells. Each cell occupies a space of \( L_c^{FP} + L_c^{f} + L_c^{p} + L_c^{flag} \). In the Protection Layer, there is a single bucket containing \( R \) cells, and each cell occupies \( L_c^{ID} + L_c^{f_{of}} + L_c^{p_{of}} \). Therefore, the total storage space required is:
    \begin{equation}
        XY \cdot (L_c^{FP} + L_c^{f} + L_c^{p} + L_c^{flag}) + R \cdot (L_c^{ID} + L_c^{f_{of}} + L_c^{p_{of}}).
    \end{equation}

\end{proof}

\begin{theorem}
    The maximum frequency of a flow that can be stored in PSSketch is 
    \begin{equation}
        f_{max} = (2^{L_c^{f_{of}}} - 1) \cdot (2^{L_c^{f}} - 1) \approx 2^{L_c^{f_{of}} + L_c^{f}},
    \end{equation}

and the maximum persistence of a flow that can be stored is
\begin{equation}
    p_{max} = (2^{L_c^{p_{of}}} - 1) \cdot (2^{L_c^{p}} - 1) \approx 2^{L_c^{p_{of}} + L_c^{p}}.
\end{equation}
\end{theorem}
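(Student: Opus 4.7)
The approach I would take follows directly from the two-layer counter architecture described in Section 4.2. The stored count for any flow is partitioned across a small bit-width counter in the Competition Layer and an overflow counter in the Protection Layer, so the maximum representable value is obtained by saturating both counters simultaneously. The argument is symmetric in frequency and persistence, so I would first carry it out in full for $f_{max}$ and then note that persistence is identical up to renaming.

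For the frequency bound, I would proceed in four short steps. First, observe that the CL frequency counter has width $L_c^{f}$ bits, so it represents integers in $\{0,1,\ldots,2^{L_c^{f}}-1\}$; similarly the PL overflow counter $f_{of}$ of width $L_c^{f_{of}}$ can record at most $2^{L_c^{f_{of}}}-1$ overflow events before it itself saturates. Second, invoke the reconstruction rule from the \emph{Query} operation, which expresses the reported frequency as $f_e = f_{of}\cdot 2^{L_c^{f}} + f$, so that the maximum is attained precisely when both counters simultaneously sit at their largest representable value. Third, note that once $f_{of}=2^{L_c^{f_{of}}}-1$ and $f=2^{L_c^{f}}-1$, no further overflow can be registered and no additional increment is distinguishable from the saturating state, so this configuration is indeed the structural ceiling. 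Fourth, match this to the stated product form: both the exact maximum $2^{L_c^{f_{of}}+L_c^{f}}-1$ and the expression $(2^{L_c^{f_{of}}}-1)(2^{L_c^{f}}-1)$ satisfy the asserted leading-order approximation $\approx 2^{L_c^{f_{of}}+L_c^{f}}$, which is all that is required for the subsequent space comparison with PISketch in Theorems~7 and~8.

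For $p_{max}$ I would simply repeat the argument with $(p, p_{of})$ in place of $(f, f_{of})$. The \emph{Update} procedure guarantees that the persistence counter is incremented at most once per window (it is gated by the $W$ flag), but the overflow-and-report semantics are identical to those for $f$, so the same saturation argument yields $(2^{L_c^{p_{of}}}-1)(2^{L_c^{p}}-1) \approx 2^{L_c^{p_{of}}+L_c^{p}}$.

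The proof is essentially a counting argument rather than a technical one, so there is no genuine obstacle. The only subtlety worth flagging in the write-up is the mild discrepancy between the stated product form and the exact ceiling implied by the additive reconstruction $V = V_{prt}\cdot 2^{L_c} + V_c$; since the theorem is asserted up to the approximation $\approx 2^{L_c^{f_{of}}+L_c^{f}}$ and this approximation is what the later comparison theorems invoke, it suffices to note that both forms agree to leading order and no additional analysis is required.
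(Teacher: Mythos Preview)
Your proposal is correct and follows the same saturation argument as the paper's proof, which simply states that the maximum storable value is the product of the maximal Competition Layer counter and the maximal Protection Layer overflow count, then repeats this for persistence. Your write-up is in fact more careful than the paper's: you explicitly flag the discrepancy between the additive reconstruction $V = V_{prt}\cdot 2^{L_c} + V_c$ (which gives an exact ceiling of $2^{L_c^{f_{of}}+L_c^{f}}-1$) and the stated product form, whereas the paper's proof asserts the product form directly without addressing this point.
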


\begin{proof}
    According to the structure of PSSketch, the maximum frequency that can be stored is the product of the maximum count that the frequency counters in the Competition Layer can hold and the overflow count in the Protection Layer, i.e., 
    \begin{equation}
        f_{max} = (2^{L_c^{f_{of}}} - 1) \cdot (2^{L_c^{f}} - 1) \approx 2^{L_c^{f_{of}} + L_c^{f}}.
    \end{equation}
    Similarly, the maximum persistence that can be stored is
    \begin{equation}
        p_{max} = (2^{L_c^{p_{of}}} - 1) \cdot (2^{L_c^{p}} - 1) \approx 2^{L_c^{p_{of}} + L_c^{p}}.
    \end{equation}
\end{proof}

\begin{theorem}
    If PISketch stores flow data with the same order of magnitude for the maximum frequency and maximum persistence as PSSketch, then the Weight Sketch component of PISketch must have at least:
    \begin{equation}
        XY \cdot \left( L_c^{ID} + \log_2 L \cdot \left( L_c^{f_{of}} + L_c^{f} \right) + \left( L_c^{p_{of}} + L_c^{p} \right) \right)
    \end{equation}
\end{theorem}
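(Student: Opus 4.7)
The plan is to lower-bound the bit-width of PISketch's Weight Sketch when it is required to represent flows with frequency and persistence ceilings matching the PSSketch values $f_{max}\approx 2^{L_c^{f_{of}}+L_c^{f}}$ and $p_{max}\approx 2^{L_c^{p_{of}}+L_c^{p}}$ that were just established. I would first recall the layout: the Weight Sketch is a single table of $XY$ cells, each cell holding one flow identifier ($L_c^{ID}$ bits) and a single weight counter $W$. Because this is the only structure where $(f,p)$ can be recorded, the total storage is $XY$ times the per-cell width, so the argument reduces to lower-bounding the number of bits dedicated to $W$.

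Next, I would analyze the PISketch update rule to see how $(f,p)$ is encoded into $W$. The filter enforces exactly one $+L$ bump per active window, and every subsequent in-window packet produces a $-1$. Thus after a flow has lived through $p$ active windows with cumulative frequency $f$, the counter satisfies $W=(L+1)p-f$, and the survival invariant $W\ge 1$ bounds the admissible $(f,p)$ region by $f<(L+1)p$. In particular, the counter must be large enough to host every $(f,p)$ pair that PSSketch can distinguish inside $[0,f_{max}]\times[0,p_{max}]$, because the Weight Sketch carries the entire burden of tracking those values.

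Third, I would convert this into bit counts. The $+L$ updates on first-in-window arrivals force the weight to climb in strides of $L$; to resolve the $2^{L_c^{f_{of}}+L_c^{f}}$ distinct frequency levels through these $L$-sized strides while simultaneously resolving the $2^{L_c^{p_{of}}+L_c^{p}}$ persistence levels, $W$ must be capable of representing at least $L^{L_c^{f_{of}}+L_c^{f}}\cdot 2^{L_c^{p_{of}}+L_c^{p}}$ states. Taking a base-$2$ logarithm yields a per-counter width of $\log_2 L\cdot(L_c^{f_{of}}+L_c^{f})+(L_c^{p_{of}}+L_c^{p})$ bits; adding $L_c^{ID}$ for the identifier and multiplying by the $XY$ cells recovers exactly the stated bound.

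The main obstacle will be the third step, specifically justifying the multiplicative $\log_2 L$ factor on the frequency contribution rather than a naive additive one. My plan is a pigeonhole argument tracing how the $L$-quantized $+L$ increments nest inside the finer $-1$ decrements: each of the $L_c^{f_{of}}+L_c^{f}$ binary frequency levels must be distinguishable at weight resolution $L$, so the range of $W$ has to blow up by a factor of $L$ per such level, giving the $L$-th power. The earlier theorems bounding $f_{max}$ and $p_{max}$ in PSSketch provide the envelope against which this distinctness is measured, while the assumption $R\ll XY$ rules out offloading bits into a secondary structure, so the entire burden falls on the Weight Sketch counters.
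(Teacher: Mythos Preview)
Your proposal diverges from the paper's argument in a structural way and contains a genuine gap in the third step.

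First, the structural mismatch: the paper treats each Weight Sketch cell as carrying \emph{three} fields---the identifier ($L_c^{ID}$ bits), the weight counter $W$ ($L_c^{W}$ bits), and a separate persistence counter $N$ ($L_c^{N}$ bits). The paper's proof bounds $L_c^{W}$ and $L_c^{N}$ independently: it argues that $W$ must accommodate values on the order of $L\cdot f_{max}$, forcing $L_c^{W}\ge \log_2 L\cdot(L_c^{f_{of}}+L_c^{f})$, and separately that $N$ must count up to $p_{max}\approx 2^{L_c^{p_{of}}+L_c^{p}}$, forcing $L_c^{N}\ge L_c^{p_{of}}+L_c^{p}$. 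Summing the three field widths and multiplying by $XY$ gives the bound. You instead assume only two fields (ID and $W$) and try to extract both the frequency term and the persistence term from $W$ alone; this does not match how the paper decomposes the cell, and it forces you into the problematic pigeonhole step.

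Second, the pigeonhole argument itself does not go through. From your own formula $W=(L+1)p-f$, the counter is \emph{linear} in $(f,p)$, so its range over $f\in[0,f_{max}]$, $p\in[0,p_{max}]$ is at most of order $L\cdot p_{max}+f_{max}$. Taking logarithms yields an \emph{additive} $\log_2 L$ contribution, not the multiplicative $\log_2 L\cdot(L_c^{f_{of}}+L_c^{f})$ you claim. The assertion that ``the range of $W$ has to blow up by a factor of $L$ per binary frequency level'' would require the update rule to compose multiplicatively across bits, but the $+L$/$-1$ increments do not nest that way---each of the $f_{max}$ arrivals moves $W$ by a bounded amount, so no $L^{L_c^{f_{of}}+L_c^{f}}$ state space is forced. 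To recover the paper's bound you need to reinstate the separate $N$ field and bound $L_c^{W}$ and $L_c^{N}$ individually, as the paper does.
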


\begin{proof}
    For the PISketch, the weight \( W \) must satisfy:
\begin{equation}
    2^{L_c^{W}} = L \cdot \left( L_c^{f_{of}} + L_c^{f} \right),
\end{equation}
which implies that \( L_c^{W} \) must be the smallest integer greater than or equal to \( \log_2 L \cdot \left( L_c^{f_{of}} + L_c^{f} \right) \). Similarly, to store \( p_{max} \approx 2^{L_c^{p_{of}} + L_c^{p}} \), the parameter \( L_c^{N} \) in PISketch must be at least \( L_c^{p_{of}} + L_c^{p} \). Therefore, the storage space required for the Weight Sketch in PISketch is greater than or equal to:
\begin{equation}
    XY \cdot \left( L_c^{ID} + \log_2 L \cdot \left( L_c^{f_{of}} + L_c^{f} \right) + \left( L_c^{p_{of}} + L_c^{p} \right) \right).
\end{equation}
\end{proof}

\begin{theorem}
    If the Weight Sketch in PISketch and the Competition Layer in PSSketch occupy storage space of the same order of magnitude, then:
    \begin{equation}
        p_{max}^{PISketch} = 2^{L_c^{N}} - 1 \ll p_{max}.
    \end{equation}
\end{theorem}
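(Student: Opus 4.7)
The plan is to directly compare the bit-width budgets of PSSketch's Competition Layer and PISketch's Weight Sketch under the stated equal-storage hypothesis, then translate the resulting inequality on per-cell widths into an exponential gap between the two maximum persistence values.

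First, I would invoke Theorem 4 to note that the total size of PSSketch is dominated by the Competition Layer $XY \cdot (L_c^{FP} + L_c^{f} + L_c^{p} + L_c^{flag})$, since the Protection Layer contributes only $R \cdot (L_c^{ID} + L_c^{f_{of}} + L_c^{p_{of}})$ bits with $R \ll XY$. The PISketch Weight Sketch, having the same $X \times Y$ cell grid, occupies $XY \cdot (L_c^{ID} + L_c^{W} + L_c^{N})$ bits. The equal-storage hypothesis therefore reduces to the per-cell relation
\begin{equation}
L_c^{ID} + L_c^{W} + L_c^{N} \approx L_c^{FP} + L_c^{f} + L_c^{p} + L_c^{flag}.
\end{equation}

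Next, I would exploit two structural asymmetries that force $L_c^{N}$ to be strictly smaller than its natural PSSketch counterpart. The Competition Layer stores only short hash fingerprints, so $L_c^{FP} < L_c^{ID}$; and PISketch must reserve at least $L_c^{W} \geq 1$ bit for the weight counter. Rearranging gives
\begin{equation}
L_c^{N} \leq L_c^{f} + L_c^{p} + L_c^{flag} - (L_c^{ID} - L_c^{FP}) - L_c^{W} < L_c^{f} + L_c^{p} + L_c^{flag},
\end{equation}
so that $p_{max}^{PISketch} = 2^{L_c^{N}} - 1 < 2^{L_c^{f} + L_c^{p} + L_c^{flag}}$. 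I would then contrast this with $p_{max} \approx 2^{L_c^{p_{of}} + L_c^{p}}$ from Theorem 6. The crucial asymmetry is that $L_c^{p_{of}}$ resides in the Protection Layer, whose total footprint is only $R \cdot L_c^{p_{of}}$ with $R \ll XY$; hence this overflow width can be chosen large (on the order of, or exceeding, $L_c^{f} + L_c^{flag}$) without inflating the main storage budget. Combining the two bounds yields $p_{max} / p_{max}^{PISketch}$ at least on the order of $2^{L_c^{p_{of}} - L_c^{f} - L_c^{flag} + (L_c^{ID} - L_c^{FP}) + L_c^{W}}$, which grows exponentially in $L_c^{p_{of}}$.

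The main obstacle is making the informal ``$\ll$'' quantitatively precise, which amounts to justifying how large $L_c^{p_{of}}$ may be pushed while keeping the Protection Layer asymptotically negligible; this relies on Assumption 1 (PS flows form a small fraction of the stream), which legitimately permits $R = o(XY)$. Once that is granted, the result is a direct consequence of the dual-layer overflow trick: Protection-Layer bits amplify $p_{max}$ exponentially while contributing negligibly to total storage, whereas PISketch's single-layer design cannot decouple the counter width from the aggregate memory budget.
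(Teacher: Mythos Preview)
Your argument is sound, but it is considerably more involved than what the paper actually does. The paper's proof is a single line: under the equal-storage hypothesis it simply identifies $L_c^{N}=L_c^{p}$ (treating ``same order of magnitude'' as forcing the persistence-counter widths in the two structures to coincide) and then writes
\[
p_{max}^{PISketch}=2^{L_c^{N}}-1=2^{L_c^{p}}-1\ll (2^{L_c^{p_{of}}}-1)(2^{L_c^{p}}-1)=p_{max},
\]
so the entire gap is attributed to the Protection-Layer overflow factor $(2^{L_c^{p_{of}}}-1)$, with no per-cell bit accounting at all. You instead carry out a genuine budget comparison, exploiting $L_c^{FP}<L_c^{ID}$ and $L_c^{W}\ge 1$ to bound $L_c^{N}$, and then invoke $R\ll XY$ (Assumption~1) to argue that $L_c^{p_{of}}$ may be taken large at negligible cost. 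Your route is more honest about why the identification $L_c^{N}\approx L_c^{p}$ is reasonable and makes explicit the role of the Protection Layer's small footprint; the paper's route is shorter but leaves that identification unjustified. Both ultimately rest on the same mechanism---the overflow counters in the Protection Layer multiply $p_{max}$ by an exponential factor that PISketch's single-layer design cannot match.
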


\begin{proof}
    In this case, we have:
    \begin{equation}
        p_{max}^{PISketch} = 2^{L_c^{N}} - 1 = 2^{L_c^{p}} - 1 \ll (2^{L_c^{p_{of}}} - 1) \cdot (2^{L_c^{p}} - 1) = p_{max}.
    \end{equation}
\end{proof}

Overall, the above arguments demonstrate that, compared to PISketch, PSSketch has a significant advantage in terms of space complexity due to its handling of overflows and the use of a dual-layer mechanism consisting of the Competition Layer and Protection Layer.




\newpage
\section{Extra Figure and Pseudocode}
\label{app_extra}

\begin{figure}[H]
        \centering
            \includegraphics[width=\linewidth]{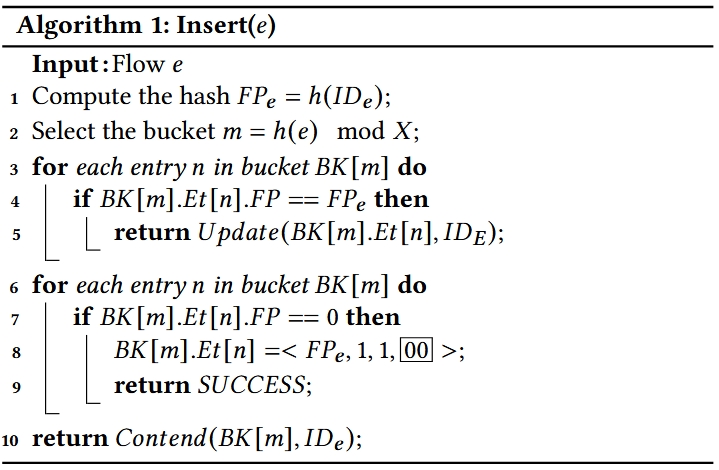}   
        \caption{Insert Algorithm.}
        \label{str}
    \end{figure}

    \begin{figure}[H]
        \centering
            \includegraphics[width=\linewidth]{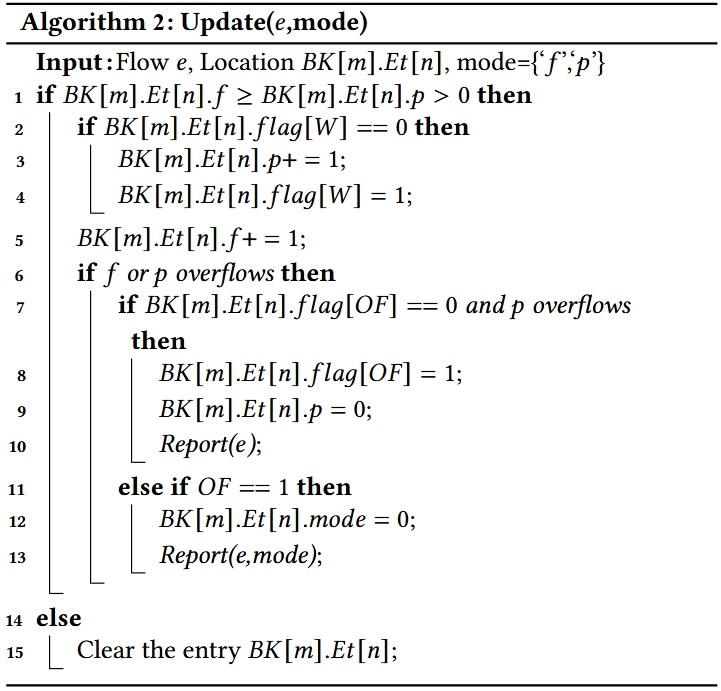}   
        \caption{Update Algorithm.}
        \label{str}
    \end{figure}





\begin{figure}[H]
    \centering
    \begin{subfigure}[b]{0.2\textwidth}
        \includegraphics[width=\textwidth]{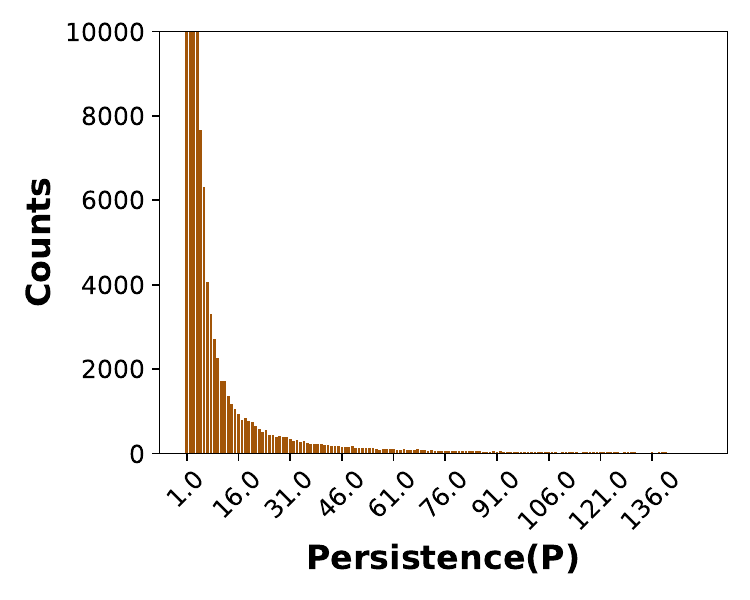}
        \caption{Persistence}
        \label{888}
    \end{subfigure}
    \begin{subfigure}[b]{0.2\textwidth}
        \includegraphics[width=\textwidth]{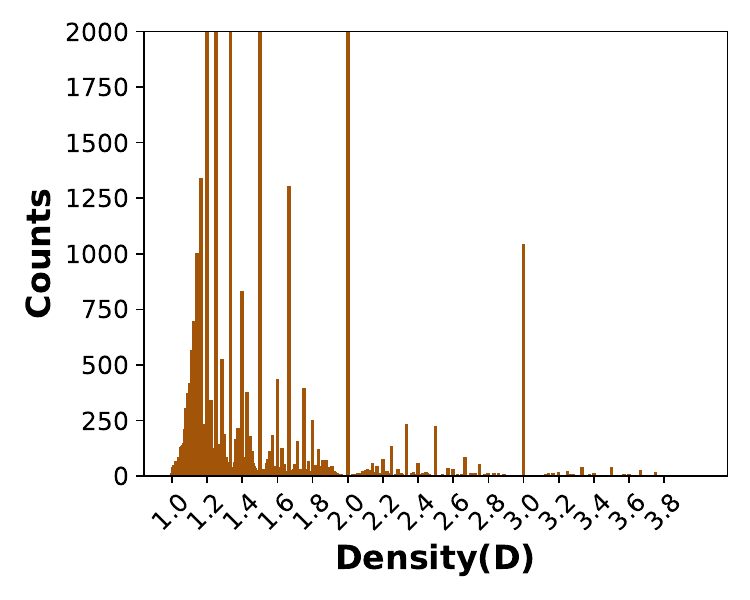}
        \caption{Density}
        \label{777}
    \end{subfigure}
    \caption{The distribution characteristics of Campus.}
    \label{444}
\end{figure}

\begin{figure}[H]
    \centering
    \begin{subfigure}[b]{0.2\textwidth}
        \includegraphics[width=\textwidth]{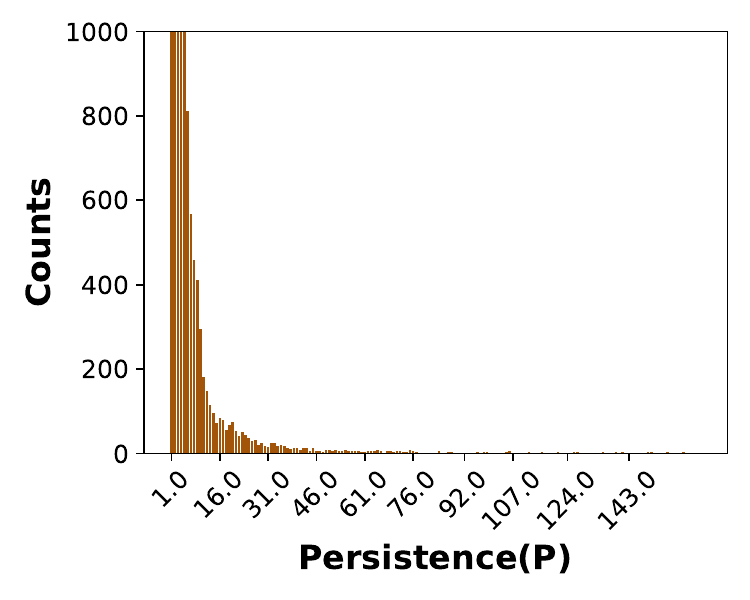}
        \caption{Persistence}
        \label{555}
    \end{subfigure}
    \begin{subfigure}[b]{0.2\textwidth}
        \includegraphics[width=\textwidth]{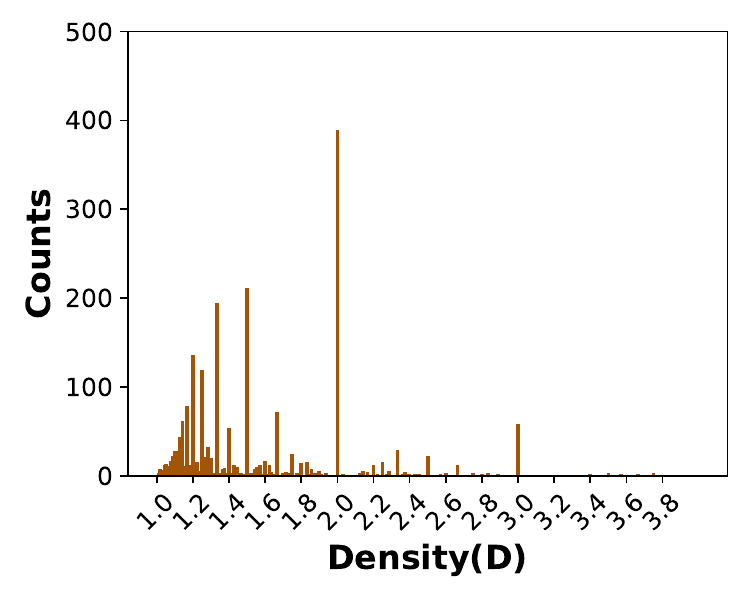}
        \caption{Density}
        \label{666}
    \end{subfigure}
    \caption{The distribution characteristics of MAWI.}
    \label{999}
\end{figure}

\begin{figure}[H]
    \centering
    \begin{subfigure}[b]{0.2\textwidth}
        \includegraphics[width=\textwidth]{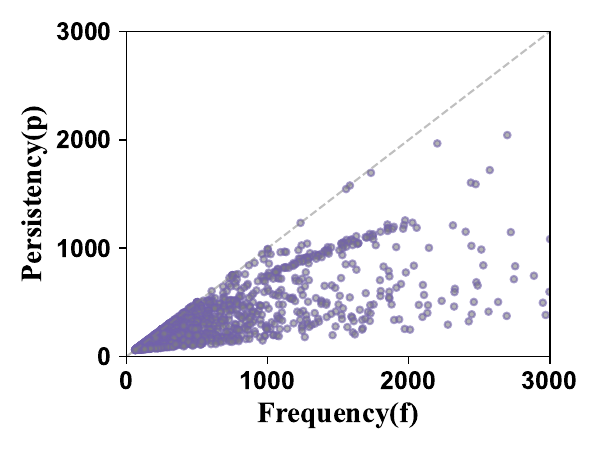}
        \caption{PISketch}
        \label{qwe}
    \end{subfigure}
   \begin{subfigure}[b]{0.2\textwidth}
        \includegraphics[width=\textwidth]{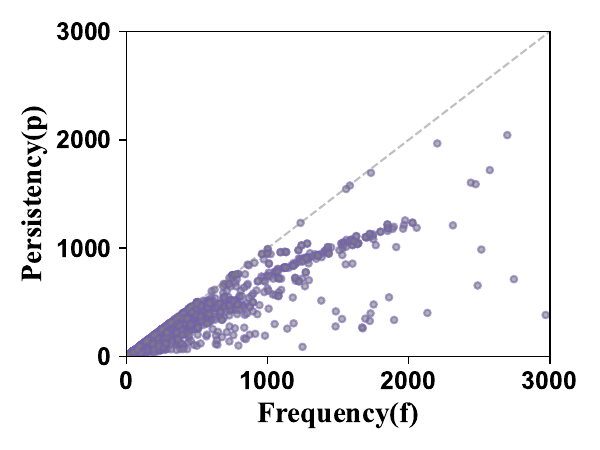}
        \caption{PSSketch}
        \label{wer}
    \end{subfigure}
    \caption{Top-2K PS flows reported under the CAIDA dataset.}
    \label{ert}
\end{figure}

\begin{figure}[H]
    \centering
    \begin{subfigure}[b]{0.2\textwidth}
        \includegraphics[width=\textwidth]{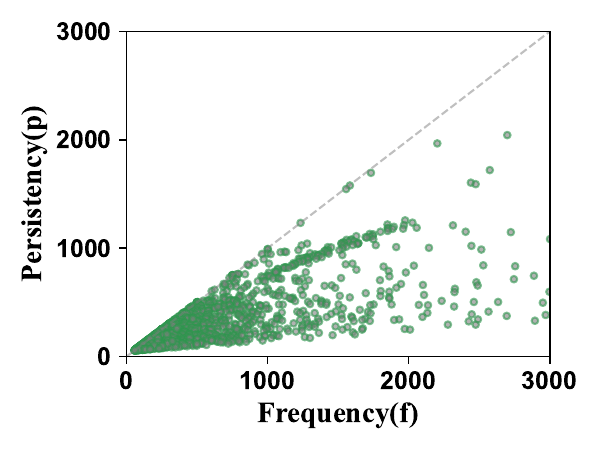}
        \caption{PISketch}
        \label{rty}
    \end{subfigure}
   \begin{subfigure}[b]{0.2\textwidth}
        \includegraphics[width=\textwidth]{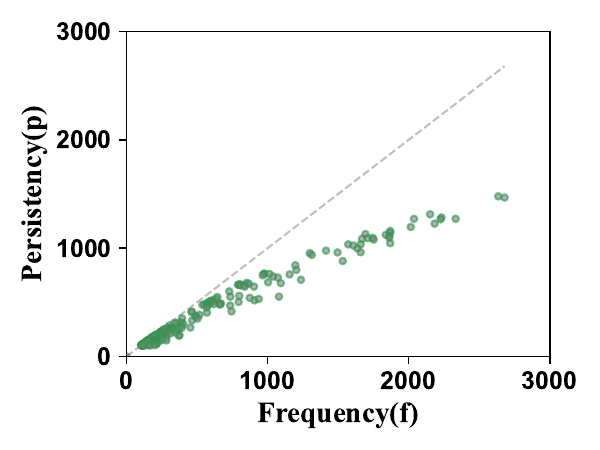}
        \caption{PSSketch}
        \label{tyu}
    \end{subfigure}
    \caption{Top-2K PS flows reported under the MAWI dataset.}
    \label{yui}
\end{figure}

\end{document}